\newcommand{\beq}{\begin{equation}}
\newcommand{\eeq}{\end{equation}}
\newtheorem{theorem}{Theorem}[section]
\newtheorem{lemma}{Lemma}
\newtheorem{remark}{Remark}
\newif\ifnotesw
\newif\ifnotesw
\begin{document}

\title{Connecting flying backhauls of UAVs to enhance vehicular networks with fixed 5G NR infrastructure~\footnote{Part of this work was presented at WiSARN'2020, IEEE Conference on Computer Communications Workshops (INFOCOM WKSHPS), International Workshop on Wireless Sensor, Robot and UAV Networks, July 6th, 2020, (Online). Connecting flying backhauls of drones to enhance vehicular networks with fixed 5G NR infrastructure, P. Jacquet D. Popescu and B. Mans, IEEE Press, pp. 472-477, doi: 10.1109/INFOCOMWKSHPS50562.2020.9162670.}}
%
%
%

\author{Dalia~Popescu, 
        Philippe~Jacquet,~\IEEEmembership{Fellow,~IEEE,}
        Bernard~Mans
        
\thanks{B.  Mans  was  supported  in  part  by  the  Australian  Research  Council  under Grant DP170102794.

Part of the work has been done at Lincs, Paris, France.
Dalia Popescu was with Nokia Bell Labs, France. Philippe Jacquet is with INRIA, France. 
Bernard Mans is with Macquarie University, Sydney, Australia (e-mail:  bernard.mans@mq.edu.au).}
}

%
%


\maketitle

\begin{abstract}
This paper investigates moving networks of Unmanned Aerial Vehicles (UAVs), such as drones,  as one of the innovative opportunities brought by the 5G. With a main purpose to extend connectivity and guarantee data rates, the drones require an intelligent choice of hovering locations due to their specific limitations such as flight time and coverage surface. 
To this end, we provide analytic bounds on the requirements in terms of connectivity extension for vehicular networks served by fixed Enhanced Mobile BroadBand (eMBB) infrastructure, where both vehicular networks and infrastructures are modeled using stochastic and fractal geometry as a macro model for urban environment, providing a unique perspective into the smart city. 

Namely, we prove that assuming $n$ mobile nodes (distributed according to a hyperfractal distribution of dimension $d_F$) and an average of $\rho$ Next Generation NodeB (gNBs), distributed like an hyperfractal of dimension $d_r$ if $\rho=n^\theta$ with $\theta>d_r/4$ and letting $n$ tending to infinity (to reflect megalopolis cities), then the average fraction of mobile nodes not covered by a gNB tends to zero like $O\left(n^{-\frac{(d_F-2)}{d_r}(2\theta-\frac{d_r}{2})}\right)$. Interestingly, we then prove that the average number of  drones,  needed  to connect each  mobile  node not  covered  by gNBs  is comparable to the number of isolated mobile nodes. We complete the characterisation by proving that when $\theta<d_r/4$ the proportion of covered mobile nodes tends to zero.


Furthermore, we provide insights on the intelligent placement of the ``garage of drones'',  the  home  location of  these  nomadic  infrastructure  nodes,  such as to minimize what we call the ``flight-to-coverage time''. We  provide a fast procedure to select the relays that will be garages (and store drones) in order to minimize the number of garages and minimize the delay. Finally we confirm our analytical results using simulations carried out in Matlab.
\end{abstract}
\maketitle

%
\begin{IEEEkeywords}
Drone, Enhanced Mobile BroadBand (eMBB), Flying Backhaul, mmWave, Mobile Vehicular Network, Smart City, UAVs, V2X, 5G.
\end{IEEEkeywords}

\IEEEpeerreviewmaketitle

\section{Introduction}

\subsection{Mobile networks}
{\em 5G New Radio (5G NR)} is envisioned to offer a diverse palette of services to mobile users and platforms, with often incompatible types of requirements. 5G will support: (i) the {\em enhanced Mobile BroadBand (eMBB)} applications with throughput of order of 100 Mbit/s per user, (ii) the {\em Ultra Reliable Low Latency Communications (URLLC)} for industrial and vehicular environments with the hard constraint of 1 ms latency and 99.99 $\%$ reliability, and (iii) the {\em massive Machine Type Communications (mMTC)} with a colossal density in order of 100 per square km. 5G will achieve this through a new air interface and novel network architectures that will either be evolved from the current 4G systems, or be completely drawn from scratch. 

However, it is clear that by the time the 5G standards start to be deployed, many of these important challenges will still remain opened and a truly disruptive transition from current 4G systems will only occur over time. One of these challenges is the ability of the network to adapt efficiently to traffic demand evolution in space and time, in particular when new frequency bands in a range much higher than ever dared before are to be exploited, e.g. Frequency Range (FR) 4 at over 52.6 GHz.
\begin{figure}\centering
\vspace{0.2cm}\centering
\includegraphics[scale=0.42, trim=0cm 2cm 2cm 0cm]{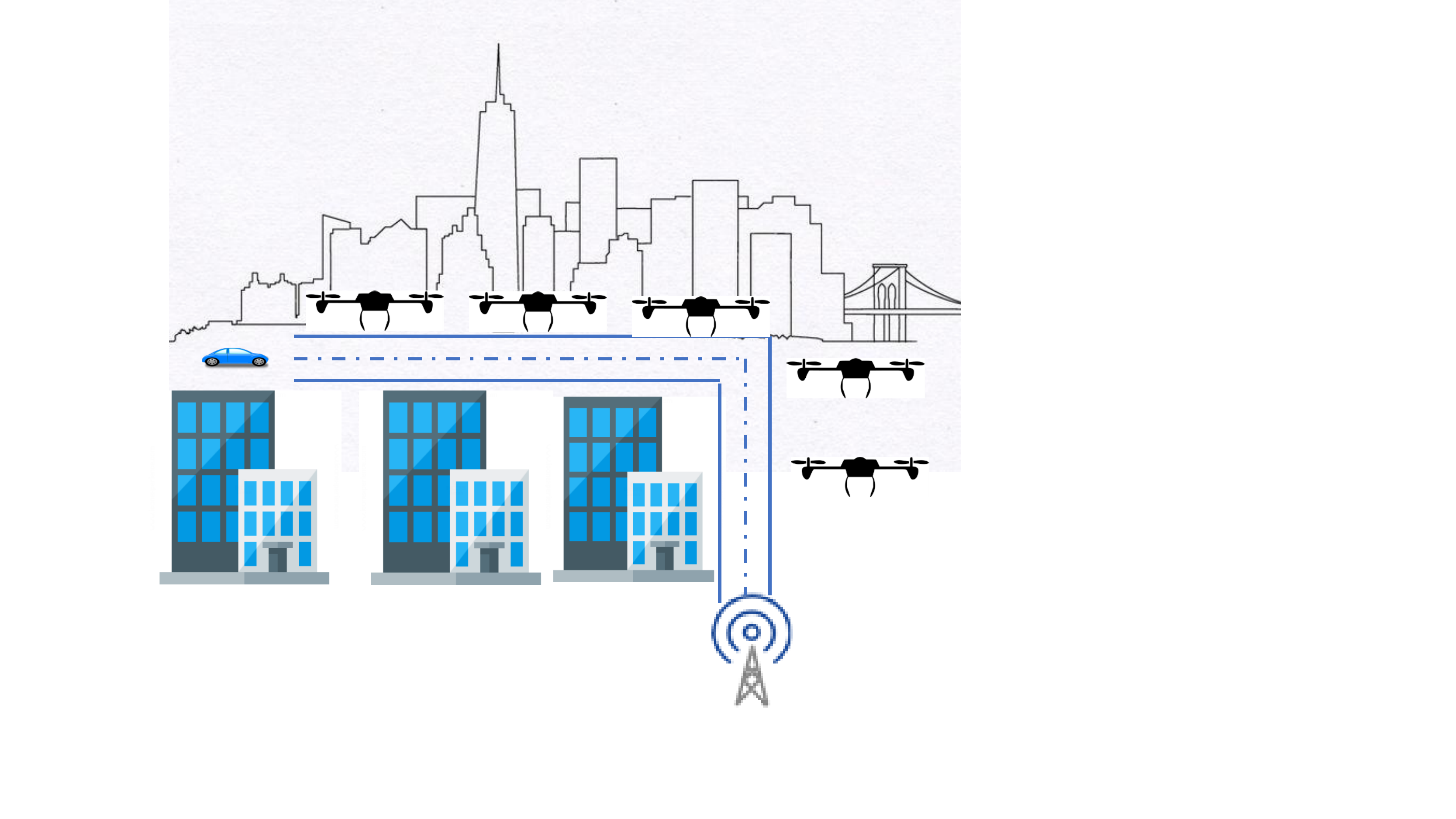}
\caption{Extension of connectivity by drones}
\label{fig:drones_connectivity}
\end{figure}
A key difference from today's 4G systems is that 5G networks will be characterized by a massive density of nodes, both human-held and machine-type: according to the Ericsson mobility report (June 2020), 5G subscriptions are forecast to reach 2.8 billion globally by the end of 2025. A larger density of wireless nodes implies a larger standard deviation in the traffic generation process. Network planning done based on average (or peak) traffic predictions as done for previous systems, will only be able to bring sub-optimal results. Moreover, the ``{\it verticals}'' of major interest identified for 5G networks (automotive, health, fabric-of-the-future, media, energy), come with rather different requirements and use cases. Future networks will handle extremely heterogeneous traffic scenarios.

To tackle these problems, 5G NR networks are to exhibit a network flexibility that is much higher than in the past: infrastructure nodes must be adaptable enough in order to be able to smoothly and autonomously react to the fast temporal and spatial variations of traffic demand. 
The level of flexibility that can be achieved through such advances still faces a fundamental limit: hardware location is static and the offered network capacity on a local scale is fundamentally limited by the density of the infrastructure equipment (radio transceivers) in the area of interest \cite{gall2019relayassisted}.

This opens up the possibility of {\em Moving Networks}, informally defined as moving nodes, with advanced network capabilities, gathered together to form a movable network that can communicate with its environment. Moving Networks will help 5G systems to become demand-attentive, with a level of network cooperation that will facilitate the provisioning of services to users characterized by high mobility and throughput requirements, or in situations where a fixed network cannot satisfy traffic demand. 
A key player in a moving network is the communication entity with the highest number of degrees of freedom of movement: the drone.

\subsection{Contributions}
In this work we initiate the study and design of moving networks by a first analysis of the provisioning and dimensioning of a network where drones act as flying backhaul. The setup we make use is that of a smart city in which we design one of the complex 5G NR urban scenarios: drones coexist with vehicular networks and fixed telecommunication infrastructures. 

Making use of the innovative model called Hyperfractal \cite{TON_HF} for representing the traffic density of vehicular devices in an urban scenario and the distribution of static telecommunication infrastructure, we derive the requirements in terms of resources of unmanned aerial vehicles (UAVs) for enhancing the coverage required by the users. 
Informally, we compute the expected percentage of the users in poor coverage conditions and derive the average number of drones necessary for ensuring the coverage with high reliability of the vehicular nodes. Furthermore, we discuss the notion of "garage of drones", the home location of these nomadic infrastructure nodes, the drones.

More specifically, our contributions are:
\begin{itemize}
    \item An innovative macro model for urban environment, providing a new perspective on smart cities,
    where both vehicular networks and fixed eMBB infrastructure networks are modeled using stochastic and fractal geometry (Section~\ref{model}). The model allows to compute precise bounds on the requirements in term of connectivity, or lack of thereof.
    \item A proof that the average fraction of mobile nodes not covered by the fixed infrastructure (gNBs) either tends to zero, or tends to the actual number of nodes in the network with the characterisation of the exact threshold  (Theorem~\ref{without_drones} and Theorem~\ref{theo:without2} in Section~\ref{results}).
    \item A proof that  the number  of drones  to  connect  the  isolated  mobile  nodes  is  asymptotically equivalent  to  the  number  of  isolated  mobile  nodes (Theorem~\ref{theodrone} in Section~\ref{results}). Thus, proving that, typically, one drone is sufficient to connect an isolated mobile node to the network.
    \item An introduction and analysis of the ``flight-to-coverage time'' for the deployed drones, helping in their optimised placement and recharge in the network (Section~\ref{garage}). We  provide a fast procedure to select the relays that will be garages (and store drones) in order to minimize the number of garages and minimize the delay.
    \item Simulations results in Matlab that
    confirm our stochastic results (Section~\ref{simulations}).
\end{itemize}

\subsection{Related works}
While their introduction to commercial use has been delayed and restricted to specific use-cases due to the numerous challenges (e.g.~\cite{survey1}), the flexibility the drones bring to the network planning by their increased degree of movement, has motivated the industry to push towards their introduction in wide scenarios of 5G. 

We refer the reader to recent relevant surveys. For  instance, a thorough Tutorial on UAV Communications for 5G and Beyond is presented in~\cite{QingqingWu_uavTuto}, where both UAV-assisted wireless communications and cellular-connected UAVs are discussed (with UAVs integrated into the network as new aerial communication platforms and users).

Importantly, many new opportunities have been often highlighted (e.g.~in \cite{survey2}), including (not exhaustively):
\begin{itemize}
    \item Coverage and capacity enhancement of Beyond 5G wireless cellular networks:
    \item UAVs as flying base stations for public safety scenarios;
    \item UAV-assisted terrestrial networks for information dissemination;
    \item Cache-Enabled UAVs;
    \item UAVs as flying backhaul for terrestrial networks.
\end{itemize}

Considerations for a multi-UAV enabled wireless communication system, where multiple UAV-mounted aerial base stations are employed to serve a group of users on the ground have been presented in~\cite{QingqingWu_uavs}.

An overview of UAV-aided wireless communications, introducing the basic networking architecture and main channel characteristics, and highlighting the key design considerations as well as the new opportunities to be exploited is presented in~\cite{YongZeng_survey}.

Grasping the interest for this new concept of communication, the research community has started analyzing the details of the new communication paradigms introduced with the drones. 
In the beginning, drones have been considered for delivering the capacity required for sporadic peaks of demand, such as in entertainment events, to reach areas where there is no infrastructure or the infrastructure is down due to a natural catastrophe. 
Yet in the new 5G NR scenarios of communication, drones are not only used in isolated cases but are considered as active components in the planning of moving networks in order to provide the elasticity and flexibility required by these. In this sense, using drones for the 5G Integrated Access and Backhaul (IAB) (\cite{IAB, IAB_3GPP}) is one of the most relevant use-case scenarios, ensuring the long-desired flexibility of coverage aimed with an adaptable cost of infrastructure.


The research community has been looking at specific problems in wireless networks employing drones.
In \cite{drones1}, the authors prove the feasibility of multi-tier  networks of drones giving some first insight on  the dimensioning of such a network. 
The work done in \cite{c7} uses two approaches, a network-centric approach and a user-centric approach to find the optimal backhaul placement for the drones.  
On the other hand, in \cite{drones_3D}, the authors propose a heuristic algorithm for the placement of drones. %
The authors of \cite{drones2} analyze the scenario of a bidirectional highway and propose an algorithm for determining the number of drones for satisfying coverage and delay constraints. A delay analysis is performed in~\cite{drones3} on the same type of scenario. Simplified assumptions are used in \cite{drones4} to showcase the considerable improvements brought by UAVs as assistance to cellular networks. The movement of network of drones, so-called swarm is analyzed in \cite{drones6} and efficient routing techniques are proposed.
Drones are so much envisioned for the networks of the future that the authors of \cite{drones_as} provide solutions for a platform of drones-as-a-service that would answer the future operators demands.

Following these observations from the state of the art, we aim to initiate a complex analysis of the use of the drones in a smart city for different purposes. To our knowledge, this is the first analysis which looks at the entire macro urban model of the city, incorporating both devices and the fixed telecommunications infrastructure. In our analysis, the drones are used as a flying backhaul for extending the coverage to users in poor conditions.

\section{System model and scenario}\label{model}
The communication scenario in our work has the aim to provide a flexible network architecture that allows serving the vehicular devices with tight delay constraints. The scenario comprises three types of communication entities: the vehicles which we denote as the user equipment (UEs), the fixed telecommunication infrastructure called Next Generation NodeB (gNBs) and the moving network nodes which are the drones also called UAVs. 

The scenario we tackle in this work is as follows. 
An urban network of vehicles is served by a fixed telecommunication infrastructure. Due to the limited coverage capability using millimeter wave (mmwave) in urban environments, the costs of installing fixed telecommunication infrastructure throughout the entire city and the mobility of the vehicles in the urban area, some users will be outside of the coverage areas. This is no longer acceptable in 5G as International Mobile Telecommunications Standards (IMT 2020 \cite{IMT}) require for most of the users and in particular for URLLC users, full coverage and harsh constraints for delay. UAVs (such as drones) will therefore be dynamically deployed for reaching the so-called "isolated" users, forming the flying backhaul of the network. 

\subsection{Communication Model}

For the sake of simplicity, we consider the communication to be done in the FR2 (frequency range 2) 28GHz frequency band using beamforming, mmwave technology in half-duplex mode. This follows the current specifications, yet we foresee the modelling to be easily extended for FR3 (frequency range 3) and FR4 (frequency range 4). This leads to highly directive beams with a narrow aperture, sensitive to blockage and interference. To this end, in our modeling we integrate the {\it canyon} propagation model which implies that the signal emitted by a mobile node propagates only on the street where it stands on.

We consider that drones, gNBs and UEs use the same frequency band and transmission power. The transmission range is $R_n=\frac{1}{\sqrt n}$, where $n$ is the number of UEs in the city. 
The reasoning behind this choice of transmission range is as follows. 
The population of a city (in most of the cases) is proportional to the area of the city~\cite{oecd} and the population of cars is proportional to the population of the city (in fact the local variations of car densities counter the local variation of population densities~\cite{emta}), therefore the population of cars is proportional to the area of the city, Area$=A \cdot n$ where $A$ is a constant. A natural assumption is that the absolute radio range is constant. But since we assume in our model that the city map is always a unit square, the relative radio range in the unit square must be $R_n=\frac{R}{\sqrt{An}}$ which we simplify in $R_n=\frac{1}{\sqrt{n}}$. This leads to a limited radio range, that together with the canyon effect supports the features of the mmwave communications.

The mobile nodes that we treat in this modeling are vehicular devices or UEs located on  streets.  
The UAV will communicate with the mobile users on the RAN interface, similar to a gNB to UE exchange. For drone-to-drone communication and drone-to-gNB communication, it is the Integrated Access and Backhaul (IAB) interface that is used, in the same frequency band.

We consider that, for a UE to benefit from the dynamical coverage assistance provided by UAVs, it has to be already registered to the network through a previously performed random access procedure (RACH) and we only treat users in connected mode, therefore the network is aware about the existence of the UE, its context and service requirements and, when the situation arrives, that it is experiencing a poor coverage. 

The communication scenario exploits the fact that the position of a registered UE in connected mode is known to the network and the evolution can be tracked (speed, direction of movement). The position of the UE can be either UE based positioning or UE assisted positioning, both allowing enough degree of accuracy for a proactive preparation of resources in the identified target cell. These are assumptions perfectly inline with the objectives of Release 17 of 5G NR. 

As a consequence of this knowledge in the network, the gNB that is currently serving the vehicle can inform the target gNB about the upcoming arrival of the vehicle such that the target gNB can prepare/send a drone (if necessary) for ensuring the service continuity to the UE. 

The drones will be dynamically deployed in order to extend the coverage of the gNBs towards the UEs, forming a Flying Backhaul. 

\subsection{Hyperfractal Modeling for an urban scenario}
As 5G is not meant to be designed for a general setup but rather have specific solutions for all the variety of communication scenarios (in particular for {\it verticals}), the idea of a smart city
in which one is capable of translating different parameters and features in order to better calibrate the private network being deployed arises as a possible answer to these stringent requirements. 

An hyperfractal representation of urban settings~\cite{TON_HF}, in particular for mobile vehicles and fixed infrastructures (such as red lights), is an innovative representation that we chose for modelling the UEs and eMBB infrastructures here. With this model, one is capable of taking measurements of vehicular traffic flows, urban characteristics (streets lengths, crossings, etc), fit the data to a hyperfractal with computed parameters and compute metrics of interest. Independently, this is an interesting step towards achieving modeling of smart urban cities that can be exploited in other scenarios. 

\subsubsection{Mobile users}

The positions of the mobile users and their flows in the urban environment are modeled with the hyperfractal model described in \cite{TON_HF, lcn_us, mswim_us}. In the following, we only provide the necessary and self-sufficient introduction to the model but a complete and extended description can be found in \cite{TON_HF}. 

The map model lays in the unit square embedded in the 2-dimensional Euclidean space.
The support of the population is a grid of streets. Let us denote this structure by $\mathcal{X}=\bigcup_{l=0}^\infty \mathcal{X}_l$.  A street of level $H$ consists of the union of consecutive segments of level $H$ in the same line. The length of a street is the length of the side of the map. 

The mobile users are modeled by means of a 
Poisson point process $\Phi$ on $\mathcal{X}$
with total intensity (mean number of points) $a$ ($0<a<\infty$)
having 1-dimensional intensity 
\beq
\lambda_l=a(p/2)(q/2)^l
\label{eq:dens_mobiles}
\eeq
on $\mathcal{X}_l$, $l=0,\ldots,\infty$,
with $q=1-p$ for some parameter $p$ ($0\le p\le 1$).
Note that $\Phi$ 
can be constructed in the following way: one samples the total  number of mobiles users $\Phi(\mathcal{X})=n$ from Poisson$(a)$ distribution; 
each mobile is placed independently 
with probability $p$ on $\mathcal{X}_0$ according to the uniform distribution
and with probability $q/4$ it is recursively located in the similar way in one the four quadrants of $\bigcup_{l=1}^\infty \mathcal{X}_l$.

 

The intensity  measure of  $\Phi$ on  $\mathcal{X}$ is hypothetically reproduced in each of the four quadrants of $\bigcup_{l=1}^\infty \mathcal{X}_l$ with the scaling of its support by the factor 1/2 and of its value by $q/4$.

The fractal dimension is a scalar parameter characterizing a geometric object with repetitive patterns. It indicates how the volume of the object decreases when submitted to an homothetic scaling. When the object is a convex subset of an euclidian space of finite dimension, the fractal dimension is equal to this dimension. When the object is a fractal subset of this euclidian space as defined in \cite{mandelbrot}, it is a possibly non integer but positive  scalar strictly smaller than the euclidian dimension. When the object is a measure defined in the euclidian space, as it is the case in this paper, then the fractal dimension can be strictly larger than the euclidian dimension. In this case we say that the measure is {\it hyper-fractal}. 

 \begin{remark}
 The fractal dimension $ d_F$ of the intensity measure of $\Phi$ satisfies
\begin{equation*} \label{eq:d_F}
\left(\frac{1}{2}\right)^{d_F}=\frac{q}{4} \qquad\text{thus}\qquad d_F=\frac{\log(\frac{4}{q})}{\log 2}\ge 2.
\end{equation*}
\end{remark}
The fractal dimension $d_F$ is greater than~2, the Euclidean dimension of the square in which it is embedded, thus 
the model was called in~\cite{spaswin} {\em hyperfractal}.
Figure \ref{fig:map_support} shows an example of support iteratively built up to level $H=3$ while Figure \ref{fig:hyper} shows the nodes obtained as a Poisson shot on a support of a higher depth, $H=5$.
\hspace*{2cm}
\begin{figure}  [ht!]
\hspace*{2cm}
\begin{subfigure}[t]{0.225\textwidth}
\includegraphics[scale=0.27]{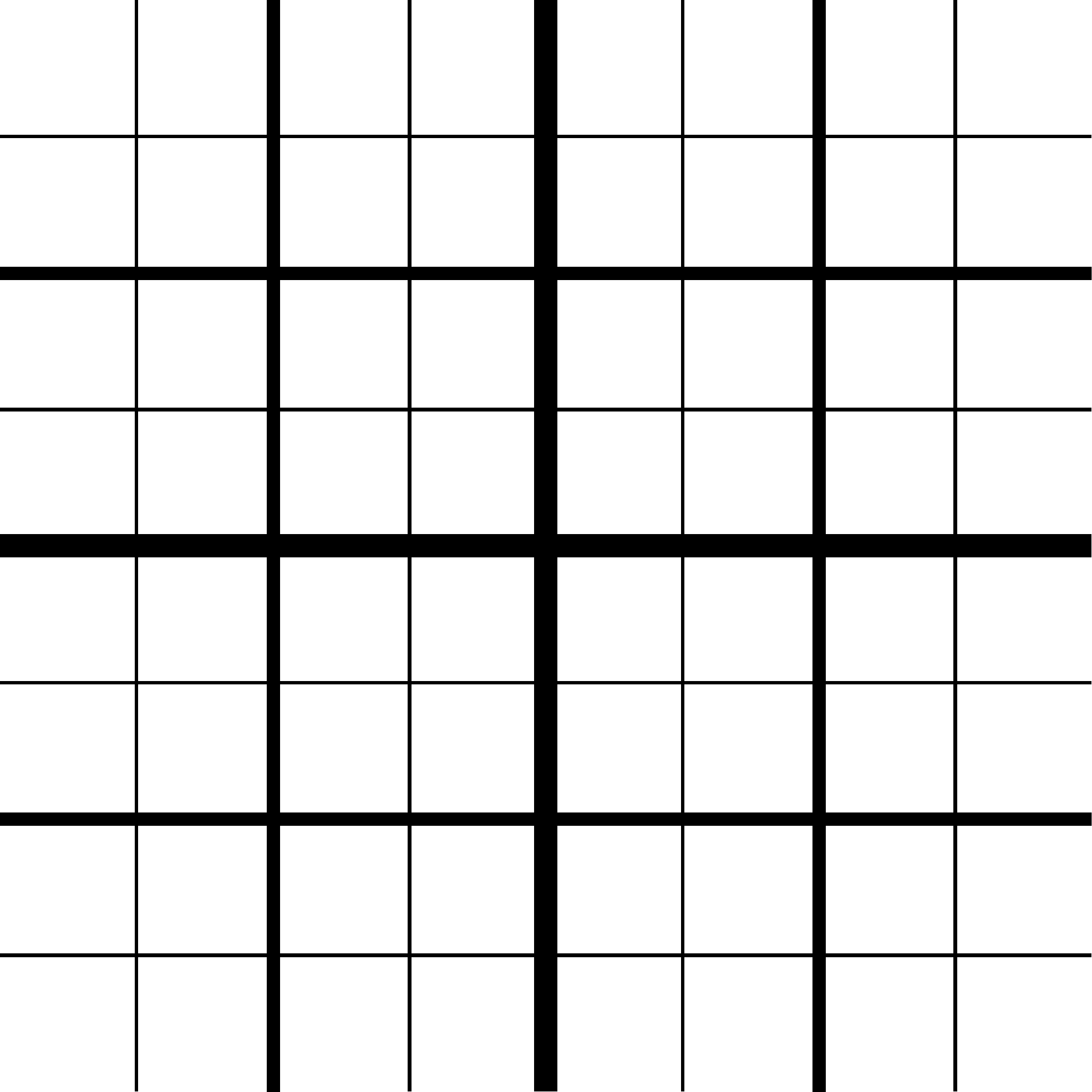}
\caption{}
\label{fig:map_support}
\end{subfigure}
\hspace*{2cm}
\begin{subfigure}[t]{0.225\textwidth}
\includegraphics[scale=0.53,  trim={3cm 9.5cm 2cm 18cm}]{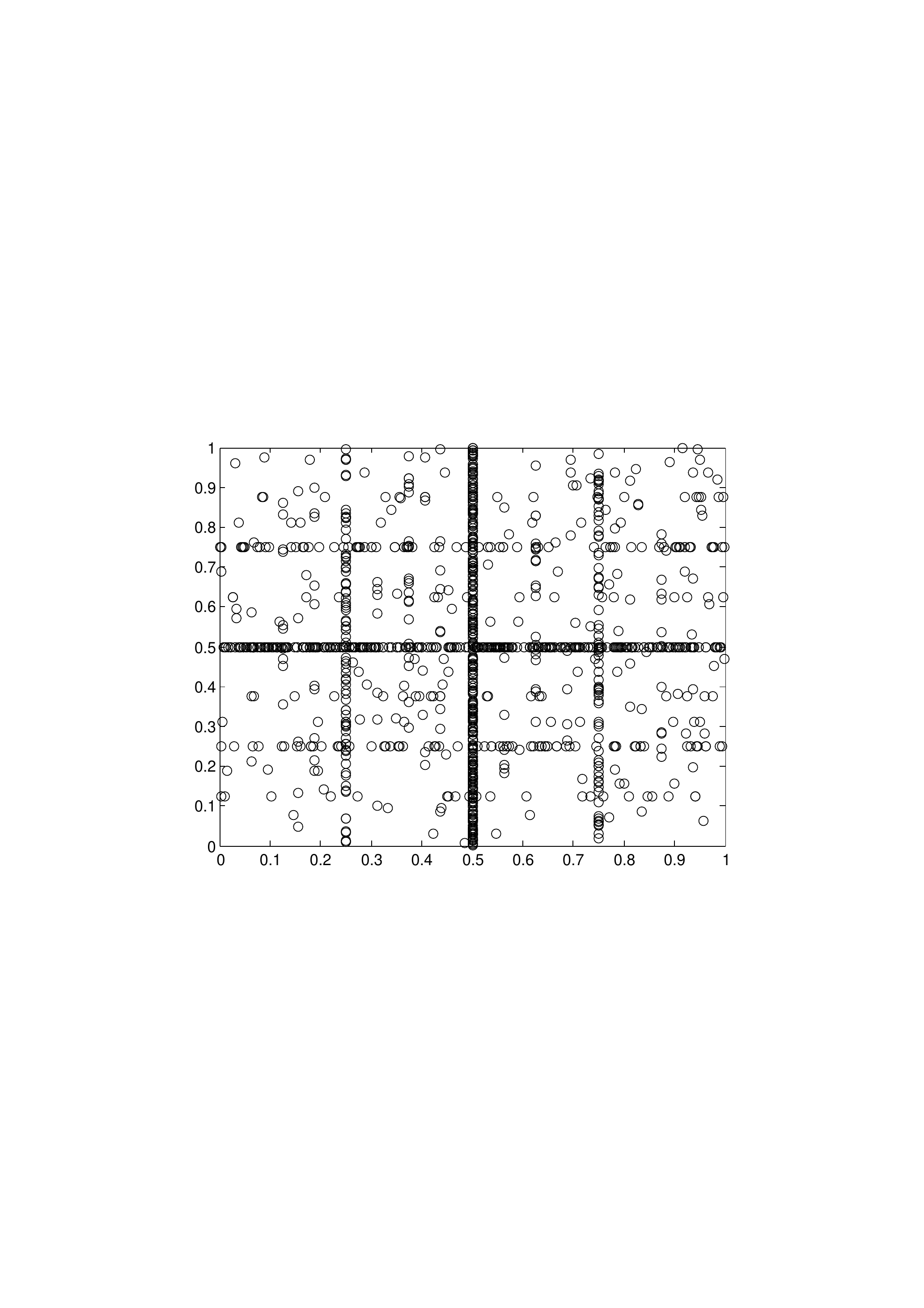}
\caption{}
\label{fig:hyper}
\end{subfigure}
\caption{(a) Hyperfractal map support; (b) Hyperfractal, $d_F=3$, $n=1,200$ nodes;}
\end{figure}
\subsubsection{Fixed telecommunication infrastructure, gNBs}
We denote the process for gNBs by $\Xi$. To define $\Xi$ it is convenient to consider an auxiliary  Poisson process $\Phi_r$ with both processes supported by a 1-dimensional subset of $\mathcal{X}$ namely, the set of intersections of segments constituting $\mathcal{X}$. We assume that $\Phi_r$ has discrete intensity   
\begin{equation}\label{eq:intersection}
p(h,v)=\rho(p')^2\left(\frac{1-(p')}{2}\right)^{h+v}
\end{equation}
on all intersections $\mathcal{X}_h\cap\mathcal{X}_v$
for $h,v=0,\ldots,\infty$ for some parameter $p'$, $0\le p'\le 1$ and $\rho>0$. That is, 
on any such intersection the mass of $\Phi_r$ is Poisson random variable with parameter $p(h,v)$ and $\rho$
is the total expected number of points of $\Phi_r$ in the model.
The self-similar structure of $\Phi_r$ is well explained by its construction
in which we first sample the total number of points from a Poisson distribution
of intensity $\rho$ and given $\Phi_r(\mathcal{X})=M$,
each point is independently placed with probability $(p')^2$ in the central crossing of  $\mathcal{X}_0$, with  probability $p'\frac{1-p'}{2}$  on some other crossing of one  of the four segments forming  $\mathcal{X}_0$ and, with the remaining probability $\left(\frac{1-(p')}{2}\right)^2$, in a similar way, recursively, on some crossing of one of the four quadrants of $\bigcup_{l=1}^\infty \mathcal{X}_l$.
 

Note that the Poisson process $\Phi_r$ is not simple and 
we define the process $\Xi$ for gNBs as the support measure of $\Phi_r$, i.e., 
only one gNB is installed in every crossing where $\Phi_r$ has at least one point.

\begin{remark}
Note that fixed infrastructure $\Xi$ forms a non-homogeneous binomial point process (i.e. points are placed independently) on the crossings of $\mathcal{X}$ with a given intersection of two segments from $\mathcal{X}_h$ and $\mathcal{X}_v$ occupied by a gNB point with probability $1-\exp(-\rho p(h,v))$.
\end{remark}
Similarly to the process of mobile nodes, we can define the fractal dimension of the gNBs process.
\begin{remark}
The fractal dimension $d_r$ of the probability density of $\Xi$ is equal to the fractal dimension of the intensity measure of the Poisson process $\Phi_r$ and verifies 
\begin{equation*} \label{eq:d_r}
d_r=2\frac{\log(2/(1-p'))}{\log 2}.
\end{equation*}
\end{remark}

\section{Main Results}\label{results}



We recall the strong requirement that 
the vehicles should be covered in a proportion of $\gamma_{FC}>99.99\%$.
In an urban vehicular network supported by eMBB infrastructure, both modeled by hyperfractals, the authors of \cite{gsi}, have proven that, under no constraints on the transmission range, the giant component tends to include all the nodes for $n$ large. 

In this analysis, under the constraint of a transmission range limited to  $R_n=\frac{1}{\sqrt n}$, there will always be disconnected nodes and the number of gNBs to guarantee $\gamma_{FC}$ would not be cost effective. Therefore, we dynamically deploy drones, to extend by hop-by-hop communication, the coverage of the gNBs, as illustrated in Figure~\ref{fig:drones_connectivity}. 

 This should be done by considering the constraints in latency as well, which, in our model, translate to a maximum number of drones on which the packet is allowed to hop before reaching the gNB.


\subsection{Connectivity with fixed telecommunication infrastructure}

We assume that the infrastructure location can be fitted to a hyperfractal distribution with dimension $d_r$ and intensity $\rho$ (to provide a realistic model~\cite{mswim_us}). In a first instance, we omit the existence of drones. Therefore a mobile node is covered only when a gNB exists at distance lower than $R_n=\frac{1}{\sqrt{n}}$. 
In this case, the following theorem gives the average number of nodes that are not in the coverage range of the infrastructure. 

\begin{theorem}[Connectivity without drones]\label{without_drones}
Assuming $n$ mobile nodes distributed according to a hyperfractal distribution of dimension $d_F$ and a gNB distribution of dimension $d_r$ and intensity $\rho$, if $\rho=n^\theta$ with $\theta>d_r/4$ then the average number of mobile nodes not covered by a gNB is $nI_n$ with $I_n=O\left(n^{-\frac{(d_F-2)}{d_r}(2\theta-d_r/2)}\right)$.
\label{theorelay}\end{theorem}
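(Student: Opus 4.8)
\emph{Plan of proof.} Since the $n$ mobiles are i.i.d.\ with the hierarchical law of $\Phi$ and independent of the gNB process $\Xi$, linearity of expectation gives that the expected number of uncovered mobiles equals $nI_n$ with $I_n=\mathbb{P}(X\text{ uncovered})$ for a single tagged mobile $X$, so it suffices to estimate this probability. The first step is to condition on the fractal level $l$ of $X$: from the recursive construction, $X$ lies on a level-$l$ street with probability $p\,q^{l}=p\,2^{-(d_F-2)l}$ (using $q/4=2^{-d_F}$), hence
\beq
I_n=\sum_{l\ge 0} p\,2^{-(d_F-2)l}\,q_l,\qquad q_l:=\mathbb{P}(X\text{ uncovered}\mid X\text{ at level }l).
\eeq
By the canyon model, $X$ is covered iff some gNB-occupied crossing lies within distance $R_n=1/\sqrt n$ of $X$ \emph{along its own street}. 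Conditioning further on the position of $X$ and using that, by the Remark, $\Xi$ occupies crossings independently --- a crossing of a level-$l$ street with a level-$v$ street being occupied with probability $1-e^{-p(l,v)}$, $p(l,v)=\rho (p')^2 2^{-(l+v)d_r/2}$ --- one obtains the exact identity $q_l=\mathbb{E}_X[\exp(-\Sigma(X))]$, where $\Sigma(X)=\sum_c p(l,v)$ runs over the crossings $c$ that lie within $R_n$ of $X$ on its street. Everything then reduces to a lower bound on $\Sigma(X)$.

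\emph{Core estimate.} Put $v_n:=\tfrac{1}{2}\log_2 n$ and $\ell_\rho:=\tfrac{2}{d_r}\log_2\rho=\tfrac{2\theta}{d_r}\log_2 n$, so that a level-$(l,v)$ crossing is ``saturated'' ($p(l,v)\ge 1$) iff $l+v\le \ell_\rho$ (up to an additive $O(1)$), while the $2^v$ level-$v$ transverse streets are equally spaced at distance $2^{-v}$, so that any sub-interval of length $R_n$ of a street contains at least $2^{\,v-v_n-1}$ of them once $v\ge v_n+1$ --- a pigeonhole count, valid for every position of $X$. Summing these guaranteed and (for $l+v\le\ell_\rho$) saturated crossings over $v\in\{v_n+1,\dots,\ell_\rho-l\}$ then shows that, whenever
\beq
l\ \le\ l^*:=\ell_\rho-v_n=\frac{1}{d_r}\Bigl(2\theta-\frac{d_r}{2}\Bigr)\log_2 n
\eeq
(a positive quantity precisely because $\theta>d_r/4$), one has $\Sigma(X)\ge c\,2^{\,l^*-l}$ for \emph{every} position of $X$, and hence $q_l\le \exp(-c\,2^{\,l^*-l})$; for $l>l^*$ I would simply use $q_l\le 1$. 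One can moreover check, by summing the geometric series in $v$ over the non-saturated crossings, that for $l>l^*$ the mean of $\Sigma$ is $\asymp \rho\,2^{-ld_r/2}n^{-d_r/4}\asymp 2^{-(l-l^*)d_r/2}$, which explains why $l^*$ is the genuine crossover depth and why the bound below is of the right order; this refinement is not needed for the $O(\cdot)$ statement.

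\emph{Summation.} Inserting the two bounds for $q_l$ into the series for $I_n$ and splitting at $l^*$,
\beq
I_n\ \le\ p\sum_{l\le l^*}2^{-(d_F-2)l}e^{-c\,2^{\,l^*-l}}\ +\ p\sum_{l>l^*}2^{-(d_F-2)l}.
\eeq
In the first sum the substitution $j=l^*-l\ge 0$ gives $p\,2^{-(d_F-2)l^*}\sum_{j\ge 0}2^{(d_F-2)j}e^{-c\,2^{j}}$, a convergent series since the doubly-exponential factor dominates; in the second sum, because $d_F>2$, the geometric series converges with value $\Theta(2^{-(d_F-2)l^*})$. Hence $I_n=O(2^{-(d_F-2)l^*})$, and since $2^{-(d_F-2)l^*}=n^{-\frac{(d_F-2)}{d_r}(2\theta-d_r/2)}$ this is exactly the claimed bound.

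\emph{Where the difficulty lies.} The $l$-summation and the passage from $\Sigma(X)$ to $q_l$ via independence are routine; the real work is the geometric bookkeeping behind the core estimate --- deciding, inside the self-similar grid $\mathcal{X}$, which transverse street-levels cross a given level-$l$ street and with what spacing, counting deterministically how many (saturated) crossings fall within $R_n$ of an arbitrary point, and balancing the two competing exponents, the crossing count $\sim R_n 2^v$ against the occupation mass $\sim \rho\,2^{-(l+v)d_r/2}$, whose break-even is what produces the crossover depth $l^*$ and, with it, the exponent $\frac{(d_F-2)}{d_r}(2\theta-d_r/2)$.
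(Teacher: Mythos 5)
Your proposal is correct and follows the same skeleton as the paper's proof: condition on the street level $l$ of the tagged mobile, write the non-coverage probability as $\exp(-\Sigma(X))$ over the crossings within range (using independence of the gNB occupation across crossings), and lower-bound the crossing count deterministically via the regular spacing $2^{-v}$ of the level-$v$ transverse streets, which is exactly the paper's bound $N_V(x)\ge 2^{V-V_n}$ for $V\ge V_n$. You diverge in two places, both legitimately. First, in the core estimate you keep only the \emph{saturated} crossings and bound each contribution by $1$, obtaining $q_l\le\exp(-c\,2^{l^*-l})$, whereas the paper sums the full intensity $\rho\sum_{V\ge V_n}2^{V-V_n}p(H,V)$ and gets the stronger $\exp\bigl(-c\,2^{(l^*-l)d_r/2}\bigr)$ (the dominant term there is the single crossing at level $V_n$, whose mass is $\asymp 2^{(l^*-l)d_r/2}$, which can exceed your count of saturated crossings since $d_r\ge 2$); your weaker bound is still doubly exponential in $l^*-l$, so nothing is lost at the level of an $O(\cdot)$ statement. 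Second, where the paper evaluates the resulting sum $\sum_H pq^H\exp(-p'(q'/2)^H y)$ by a Mellin-transform/residue analysis (its appendix), yielding the precise constant $\frac{p(p')^{-\delta}}{\log(2/q')}\Gamma(\delta)$ and the $(1+o(1))$, you split the sum at the crossover depth $l^*$ and use convergence of $\sum_j 2^{(d_F-2)j}e^{-c2^j}$ together with the geometric tail for $l>l^*$; this is more elementary and delivers exactly the claimed $O\bigl(n^{-\frac{(d_F-2)}{d_r}(2\theta-d_r/2)}\bigr)$, at the price of not identifying the leading constant. Your side remark that the mean of $\Sigma$ for $l>l^*$ is $\asymp 2^{-(l-l^*)d_r/2}$ correctly explains why $l^*$ is the genuine crossover, matching the sharpness implicit in the paper's inequality (\ref{tight}).
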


\begin{remark} The quantity $I_n$ is the probability that a mobile node is isolated and thus the theorem proves that the proportion of non covered mobile nodes tends to zero.
\end{remark}
\begin{proof}
Let a mobile node be on a road  of level $H$ at the abscissa $x$. For the mobile node to be covered, it is necessary for a gNB to exist on the same street in the interval $\big[x-\frac{1}{\sqrt n},x+\frac{1}{\sqrt n} \big]$.  

We shall first estimate the probability $I_n(x,H)$ that a mobile node is not covered. Let $N_V(x)$ be the number of intersections of level $V$ ({\it i.e.} with a street of level $V$) which are within distance $R_n$ of abscissa $x$ on the road of level $H$. Since the placement of gNBs follows a Poisson process of mean $\rho p(H,V)$: 
\beq
I_n(x,H)=\exp\left(-\rho\sum_V N_V(x)p(H,V)\right).
\eeq
In order to get an upper bound $I_n(H)$ on $I_n(x,H)$, a lower bound of the quantities $N_V(x)$ is necessary. For any given integer $V\ge 0$, the intersections of level equal or larger than $V$ are regularly spaced at frequency $2^{V+1}$ on the axis. Only half of them are exactly of level $V$, regularly spaced at frequency $2^V$ (as a consequence of the construction process). Therefore a lower bound of $N_V(x)$ is $\lfloor\frac{2^V}{\sqrt{n}}\rfloor$.

Let $V_n$ be the smallest integer which satisfies $2^{V_n}\ge\sqrt{n}$. Therefore the lower bound of $N_H(x)$ is $2^{V-V_n}$ when $V\ge V_n$ and 0 when $V<V_n$. Thus the upper bound $I_n(H)$ of $I_n(x,H)$ is
$$ 
I_n(H)=\exp\left( -\rho\sum_{V\ge V_n}2^{V-V_n}p(H,V)\right)
$$ 
with the expression $p(H,V)=(p')^2(q'/2)^{H+V}$ we get 
\beq
I_n(H)=\exp\left(-\rho p'(q'/2)^{V_n+H}\right).
\eeq
We now have $V_n\ge\log_2 n$ thus $(q'/2)^{V_n}\le n^{-dr/4}$ and with $\rho=n^\theta$ we obtain $I_n(H)\le\exp(-n^{\theta-d_r/4}p'(q'/2)^H)$. From here we can finish the proof since the proportion $I_n$ of isolated mobile nodes is:
\begin{eqnarray} \nonumber
I_n&=&2\sum_H\lambda_H I_n(H)\\ \nonumber
&=&\sum_Hpq^H I_n(H)\\ \label{tight}
&\le& \sum_H pq^H\exp\left(-p'(q'/2)^Hn^{\theta-d_r/4}\right)
\end{eqnarray}

and we prove in the appendix (via a Mellin transform) that 
\begin{equation} \label{bound1}
\sum_Hpq^H\exp\left(-p'(q'/2)^Hy\right)=\frac{p(p')^{-\delta}}{\log(2/q')}\Gamma(\delta)y^{-\delta}(1+o(1))
\end{equation} 
For $y\to\infty$ with $\delta=\frac{d_F-2}{d_r/2}$. We complete the proof by using  $y=n^{\theta-d_r/4}$. 
\end{proof}

Let us now study the second regime of $\theta$, $\theta<d_r/4$. 
\begin{theorem}\label{theo:without2}
For $\theta<d_r/4$, the proportion of covered mobile nodes tends to zero.
\label{theo-nc}\end{theorem}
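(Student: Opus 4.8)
We would prove this as the counterpart to Theorem~\ref{without_drones}: there one needed a lower bound on the number $N_V(x)$ of level-$V$ crossings within $R_n=1/\sqrt n$ of $x$, whereas here one needs a lower bound on the isolation probability $I_n$ — equivalently an \emph{upper} bound on the proportion $1-I_n$ of covered nodes — hence an upper bound on $S(x,H):=\sum_V N_V(x)p(H,V)$. The crucial difference is that such a bound cannot be uniform in the abscissa $x$: a node sitting very close to a low-level (hence gNB-rich) crossing is far more likely to be covered. Keeping the notation of the proof of Theorem~\ref{without_drones} ($p(H,V)=(p')^2(q'/2)^{H+V}$, $x$ uniform on the unit street), the starting point is
\beq
1-I_n=\sum_H pq^H\,\mathbb{E}_x\!\left[1-e^{-\rho S(x,H)}\right]\le\sum_H pq^H\,\mathbb{E}_x\!\left[\min\!\big(1,\rho S(x,H)\big)\right],
\eeq
and the goal is to show this is $o(1)$ whenever $\theta<d_r/4$.

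For the easy sub-range $\theta<1/2$ a one-line argument works. Each street carries exactly $2^V$ level-$V$ crossings, regularly spaced by $2^{-V}$, so $\mathbb{E}_x[N_V(x)]\le 2^{V+1}R_n$ and hence $\mathbb{E}_x[S(x,H)]\le 2p'R_n(q'/2)^H$; by Jensen's inequality $I_n\ge\sum_H pq^H\exp(-2p'n^{\theta-1/2}(q'/2)^H)$, and since $\theta-1/2<0$ every exponent tends to $0$ uniformly in $H$, so $I_n\to1$. Excluding the degenerate boundary cases, we may therefore assume $d_r>2$ (so $d_r/4>1/2$) and $d_F>2$, and it remains to treat $1/2\le\theta<d_r/4$, where Jensen is too lossy because $S(x,H)$ is typically far below its mean.

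For that range the plan is to stratify $x$ by its proximity to low-level crossings. Let $V_n$ be as in Theorem~\ref{without_drones} ($\sqrt n\le 2^{V_n}<2\sqrt n$), and for $x$ on a level-$H$ street let $j(x)$ be the smallest level of a crossing within $R_n$ of $x$ (so $j(x)\le V_n$ always, since level-$V_n$ crossings lie at most $R_n$ apart). Two estimates are needed: (i) the measure of a stratum, $\mathbb{P}\big(j(x)\le j\big)\le\sum_{i\le j}2^i\cdot 2R_n\le 2^{j+2}/\sqrt n$ for $j<V_n$; and (ii) the size of $S$ on a stratum, namely that on $\{j(x)=j\}$ — where no crossing of level $<j$ is within range — summing the contributions of the crossings of levels $\ge j$ (using $2^V/\sqrt n<2$ for $V\le V_n$ and the geometric decay of $p(H,\cdot)$ in its second argument) gives $S(x,H)=O\big((q'/2)^{H+j}\big)$, hence $\rho S(x,H)=O\big(n^\theta(q'/2)^{H+j}\big)$. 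Substituting (i)–(ii), and treating the complementary ``generic'' event $\{j(x)=V_n\}$ (probability $\le1$, with $\rho S(x,H)=O(n^{\theta-d_r/4}(q'/2)^H)\to0$) separately, reduces the bound to
\beq
1-I_n\le C\sum_H pq^H\Big[\sum_{j<V_n}\tfrac{2^{j+2}}{\sqrt n}\,\min\!\big(1,n^\theta(q'/2)^{H+j}\big)+\min\!\big(1,n^\theta(q'/2)^{H+V_n}\big)\Big].
\eeq
To finish I would use $q=2^{2-d_F}$, $q'/2=2^{-d_r/2}$, $\rho=n^\theta$: the ``generic'' term is $O(n^{\theta-d_r/4})$ directly, and in the stratified sum, splitting each $\min$ at the threshold $H+j\lessgtr\frac{2\theta}{d_r}\log_2 n$ and summing the geometric series in $H$ and then in $j$ (the sum $\sum_j 2^j$ up to the threshold contributing a factor $n^{2\theta/d_r}$ that combines with the $1/\sqrt n$) gives a contribution $O\big(n^{2\theta/d_r-1/2}\big)$. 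Thus $1-I_n=O\big(n^{\max(\theta-d_r/4,\;2\theta/d_r-1/2)}\big)$, and both exponents are negative exactly when $\theta<d_r/4$, so the proportion of covered mobile nodes vanishes (in fact like $n^{2\theta/d_r-1/2}$ on that range).

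The main obstacle is step (ii) together with the bookkeeping around the shallow crossings: the positions lying within $R_n$ of \emph{some} crossing of level below $V_n$ have total measure of order $1$ (there are $\Theta(\sqrt n)$ such crossings, each visible from an interval of length $\asymp 1/\sqrt n$), so one cannot simply bound the coverage probability there by $1$. The resolution is that being within range of a level-$j$ crossing forces coverage only while $H+j$ stays below $\tfrac{2\theta}{d_r}\log_2 n$; quantifying the trade-off between the stratum measure $\asymp 2^j/\sqrt n$ and the coverage bound $\min(1,n^\theta(q'/2)^{H+j})$, and verifying that their combined sum over $H$ and $j$ really does tend to zero precisely on $\theta<d_r/4$, is the delicate part — in particular one must resist union-bounding over $V$ inside $S(x,H)$, which would cost a spurious $\log n$ factor.
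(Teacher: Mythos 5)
Your argument is correct, and it reaches the theorem by a genuinely different (and more quantitative) route than the paper. The paper lower-bounds $I_n(H)=P(e^{-\rho p(H,0)},\ldots)$ through the generating functional $P(u_0,u_1,\ldots)=E\left[\prod_V u_V^{N_V(x)}\right]$: the levels $V\ge V_n$ are absorbed into a factor $\exp(-O(n^{\theta-d_r/4}))\to1$ (your ``generic'' stratum), the $k$ levels just below $V_n$ are peeled off one by one using $N_V(x)\le 1$, each contributing a factor tending to $1$, and all shallower levels are discarded via exactly your estimate (i) in the form $P_{V_n-k}(0,\ldots,0)\ge 1-2^{V_n-k}/\sqrt n\ge 1-2^{1-k}$; letting $n\to\infty$ and then $k\to\infty$ gives $I_n(H)\to 1$ uniformly in $H$, with no rate. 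You instead stratify by the deepest crossing level $j(x)$ within range and integrate $\min(1,\rho S(x,H))$ against the stratum measures. Both proofs rest on the same three geometric facts (there are $2^j$ level-$j$ crossings, each visible from an interval of length $2R_n$; $N_V(x)=O(1)$ for $V\le V_n$ and $N_V(x)=O(2^{V-V_n})$ above; geometric decay of $p(H,\cdot)$), but your bookkeeping converts the paper's soft double limit into an explicit bound $1-I_n=O(n^{2\theta/d_r-1/2})$ for $d_r>2$ --- which, as one checks from $(2\theta/d_r-1/2)-(\theta-d_r/4)=\frac{(d_r-2)(d_r-4\theta)}{4d_r}>0$, dominates your other term $n^{\theta-d_r/4}$ throughout $\theta<d_r/4$ and has negative exponent exactly on that range --- so you obtain a convergence rate that the paper's argument does not provide. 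Your separate Jensen step for $\theta<1/2$ is a further elementary shortcut absent from the paper. I verified the threshold split at $H+j=\tfrac{2\theta}{d_r}\log_2 n$ and the ensuing geometric summations (the tail sum over $j$ converges because $1-d_r/2<0$); the computation is sound.
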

For $u_i$ positive for all $i$:

let $P(u_0,u_1,\ldots,u_V,\ldots)=E\left[\prod_V u_V^{N_V(x)}\right]$, and 

let $P_V(u_0,\ldots,u_{V-1})=P(u_0,\ldots,u_{V-1},1,1,\ldots)$.
\begin{lemma}
For all integer $V$, $P_V(u_0,\ldots,u_{V-1})\ge 1-\frac{2^V}{\sqrt{n}}$. 
\end{lemma}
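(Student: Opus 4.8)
The plan is to reduce the lemma to a single geometric estimate, exploiting the fact that the claimed bound must hold \emph{simultaneously for every} positive sequence $(u_i)$. First, since each $u_v>0$, every factor $u_v^{N_v(x)}$ is strictly positive, so the product $\prod_{v=0}^{V-1}u_v^{N_v(x)}$ is nonnegative; and it equals exactly $1$ on the event
\[
A \;=\; \bigl\{\,N_v(x)=0 \ \text{for all}\ v=0,\dots,V-1\,\bigr\},
\]
that is, on the event that the mobile node has \emph{no} crossing of level strictly less than $V$ within distance $R_n=1/\sqrt n$ along its street. Discarding the nonnegative contribution of $A^c$ then gives, for every positive $(u_i)$,
\[
P_V(u_0,\dots,u_{V-1})=\E\!\Bigl[\textstyle\prod_{v=0}^{V-1}u_v^{N_v(x)}\Bigr]\;\ge\;\pr{A}\;=\;1-\pr{A^c}.
\]
So it suffices to establish the geometric claim $\pr{A^c}\le 2^V/\sqrt n$; moreover this reduction is valid conditionally on the level $H$ of the node's street, so it is enough to bound $\pr{A^c}$ uniformly in $H$.

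For the geometric claim I would use a crude first--moment (union) bound. Writing $A^c=\bigcup_{v=0}^{V-1}\{N_v(x)\ge1\}$,
\[
\pr{A^c}\;\le\;\sum_{v=0}^{V-1}\pr{N_v(x)\ge1}\;\le\;\sum_{v=0}^{V-1}\E[N_v(x)].
\]
Here $\E[N_v(x)]$ is the expected number of level-$v$ crossings lying within distance $R_n$ of the node, and since the node is placed uniformly along its unit--length street while, by the self-similar construction of $\mathcal{X}$, the level-$v$ crossings along any street are regularly spaced (exactly the spacing facts already exploited in the proof of Theorem~\ref{without_drones} to obtain $N_V(x)\ge\lfloor 2^V/\sqrt n\rfloor$), each term is bounded by (number of level-$v$ crossings on the street) times (a window of length of order $R_n$). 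Summing the resulting geometric series over $v=0,\dots,V-1$ gives $\pr{A^c}\le 2^V/\sqrt n$, and hence the lemma.

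The conceptual content is entirely in the first step --- nonnegativity of the product, together with the observation that it is identically $1$ once no coarse crossing lies near the node --- so I expect the only genuine work to be the bookkeeping in the second step: one must use the precise frequency (not just the order of magnitude) of the low-level crossings along a street, and check that this count can be taken uniform in the street level $H$, so that the summed first moment lands on exactly $2^V/\sqrt n$ rather than a larger constant multiple. It is also worth stressing that positivity of all the $u_i$ is essential here: without it the product could be negative on $A^c$ and the displayed lower bound would fail.
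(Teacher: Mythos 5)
Your argument is essentially the paper's own proof: the paper likewise lower-bounds $P_V(u_0,\dots,u_{V-1})$ by $P_V(0,\dots,0)=\pr{N_i(x)=0 \text{ for all } i<V}$ (your event $A$) and then bounds the complementary probability by the measure of the union of the exclusion intervals around the roughly $2^V$ crossings of level below $V$, i.e., exactly the union bound you propose. The constant-factor bookkeeping you flag (each crossing excludes a window of length $2R_n$, so the naive union bound gives $2^{V+1}/\sqrt n$ rather than $2^V/\sqrt n$) is equally present in the paper's proof and is immaterial to the lemma's only use, in the proof of Theorem~\ref{theo:without2}.
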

\begin{proof}
Naturally $P_V(u_0,u_1,\ldots,u_{V-1})\ge P_V(0,\ldots,0)$. The quantity $P_V(0,\ldots,0)$ is the probability that for all $i<V$, $N_i(x)=0$. This is equivalent to the fact that $x$ is always at distance larger than $R_n$ to any intersection of level higher than $V$. Since there are $2^V$ of such intersections, the measure of the union of all these intervals is smaller than $2^V/\sqrt{n}$. 
\end{proof}
\begin{lemma}
Assuming for all integer $i$, $u_i\le 1$, Let $V_n=\lceil \log_2\sqrt{n}\rceil$ we have $P(u_0,\ldots)\ge\prod_{V\ge V_n}u_i^{2^{V-V_n+1}}P_{V_n}(u_0,\ldots,u_{V_n-1})$.
\label{lem1}\end{lemma}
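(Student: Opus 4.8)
The plan is to split the multiplicative functional $\prod_V u_V^{N_V(x)}$ at the threshold level $V_n$, estimate the high-level tail deterministically, and recognise the remaining low-level part as $P_{V_n}$. For every realisation — equivalently, for every position $x$ of the mobile node on its road — one has the exact factorisation
\[
\prod_V u_V^{N_V(x)}=\Bigl(\prod_{V<V_n}u_V^{N_V(x)}\Bigr)\Bigl(\prod_{V\ge V_n}u_V^{N_V(x)}\Bigr),
\]
so it will suffice to bound the second factor from below by a constant that does not depend on $x$, and then take expectations.

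The core step will be a deterministic upper bound of the form $N_V(x)\le 2^{V-V_n+1}$, valid for every $x$ and every $V\ge V_n$. This is exactly where the choice $V_n=\lceil\log_2\sqrt n\rceil$ is used: as already recalled in the proof of Theorem~\ref{without_drones}, on any road the intersections of level $V$ are regularly spaced with step $2^{-V}$, and for $V\ge V_n$ this step is at most $2^{-V_n}\le R_n=1/\sqrt n$, so the coverage window $[x-R_n,x+R_n]$, of width $2R_n$, can contain at most $2^{V-V_n+1}$ such equally spaced points. Since each $u_V$ lies in $(0,1]$, the map $t\mapsto u_V^t$ is non-increasing, hence $u_V^{N_V(x)}\ge u_V^{2^{V-V_n+1}}$ simultaneously for all $V\ge V_n$, and therefore
\[
\prod_{V\ge V_n}u_V^{N_V(x)}\ \ge\ \prod_{V\ge V_n}u_V^{2^{V-V_n+1}},
\]
a constant in $(0,1]$ independent of $x$.

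I would then substitute this lower bound into the factorisation and take expectations: the deterministic constant factors out, and the residual expectation is $\E\bigl[\prod_{V<V_n}u_V^{N_V(x)}\bigr]$, which by the very definition of $P_{V_n}$ equals $P(u_0,\ldots,u_{V_n-1},1,1,\ldots)=P_{V_n}(u_0,\ldots,u_{V_n-1})$. This yields the claimed inequality.

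I expect the only genuinely delicate point to be the deterministic counting estimate for $N_V(x)$: one must handle the boundary of the window carefully (a window of a fixed width can straddle one more grid point than the naive width-to-spacing ratio predicts), so that the bound holds for \emph{every} admissible $x$ and \emph{every} $V\ge V_n$ with the stated constant, rather than merely up to an extra multiplicative constant or asymptotically. The remaining ingredients — the factorisation, the monotonicity of $t\mapsto u_V^t$ for $u_V\in(0,1]$, and extracting the deterministic constant from the expectation — are routine.
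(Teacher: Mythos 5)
Your proof is correct and follows essentially the same route as the paper's: the paper's entire argument consists of the deterministic bound $N_V(x)\le 2^{V-V_n+1}$ for $V\ge V_n$ combined with the monotonicity of $t\mapsto u_V^{t}$ for $u_V\le 1$, which is exactly your decomposition. You in fact justify the counting estimate (via the $2^{-V}$ spacing of level-$V$ intersections and the choice of $V_n$) more explicitly than the paper, which simply asserts it.
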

\begin{proof}
Indeed for $V\ge V_n$ we have $N_V(x)\le 2^{V-V_n+1}$, thus 
$$
P(u_0,\ldots)\ge\prod_{V\ge V_n}u_V 2^{V-V_n+1}E\left[u_0^{N_0(x)}\cdots u_{V_n-1}^{N_{V_n-1}(x)}\right].
$$
\end{proof}
\begin{lemma}
Let $V\le V_n$, the following holds:

$P_V(u_0,\ldots u_{V-1})\ge u_{V-1}P_{V-1}(u_0,\ldots,u_{V-2})$.
\label{lem2}\end{lemma}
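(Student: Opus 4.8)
\emph{Proof plan.} The plan is to peel off the highest-indexed factor in the definition of $P_V$ and thereby reduce the whole statement to a single elementary geometric fact, namely that for $V\le V_n$ a mobile node is within radio range of at most one intersection of level $V-1$. Concretely, straight from the definitions and from the fact that $P_V$ sets $u_i=1$ for $i\ge V$, write
\[ P_V(u_0,\ldots,u_{V-1})=E\left[u_{V-1}^{N_{V-1}(x)}\prod_{i=0}^{V-2}u_i^{N_i(x)}\right],\qquad P_{V-1}(u_0,\ldots,u_{V-2})=E\left[\prod_{i=0}^{V-2}u_i^{N_i(x)}\right], \]
so that the two sides of the claimed inequality carry exactly the same nonnegative factor $\prod_{i=0}^{V-2}u_i^{N_i(x)}$ and differ only in $u_{V-1}^{N_{V-1}(x)}$ versus $u_{V-1}$. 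Under the standing assumption $0<u_i\le1$ one has the trivial monotonicity $u^{k}\ge u$ for $k\in\{0,1\}$ (an equality when $k=1$, and $1\ge u$ when $k=0$), a bound that fails for $k\ge 2$ unless $u=1$; hence it is enough to show $N_{V-1}(x)\le 1$ whenever $V\le V_n$. Once this is known, $u_{V-1}^{N_{V-1}(x)}\ge u_{V-1}$ pointwise, and multiplying by the nonnegative quantity $\prod_{i=0}^{V-2}u_i^{N_i(x)}$ and taking expectations gives $P_V(u_0,\ldots,u_{V-1})\ge u_{V-1}P_{V-1}(u_0,\ldots,u_{V-2})$.

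For the bound $N_{V-1}(x)\le 1$ I would reuse the spacing estimate already established in the proof of Theorem~\ref{without_drones}: along any road the intersections of level exactly $V-1$ occur regularly, at frequency of order $2^{V-1}$, i.e.\ separated by gaps of order $2^{-(V-1)}$. Since $V\le V_n=\lceil\log_2\sqrt n\rceil$, such a gap is of order $R_n=1/\sqrt n$ or larger, hence comparable to (and for $V<V_n$ strictly larger than) the whole window $[x-R_n,x+R_n]$ of length $2R_n$ in which $N_{V-1}(x)$ counts intersections; that window can therefore meet at most one intersection of level $V-1$. For $V<V_n$ this holds for every admissible position $x$; in the borderline case $V=V_n$ the set of positions that would see two such intersections has zero measure, which suffices because $P$ is an expectation over the (random) position of the node. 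This establishes $N_{V-1}(x)\le 1$ and closes the proof; iterating the lemma down to $V=1$, together with Lemma~\ref{lem1}, is what then yields the lower bound on $P$ needed for Theorem~\ref{theo:without2}.

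The only genuine obstacle is that last geometric point: pinning down precisely which $V$ make the radio range $R_n$ short enough to straddle at most one crossing of level $V-1$, and in particular handling the boundary value $V=V_n$ with due care for the constants hidden in the definition of $V_n$ (using, if necessary, that the offending set of node positions is null). Everything else is the bookkeeping in the displayed identities above together with the one-line inequality $u^{N}\ge u$, valid for $N\le 1$ and $0<u\le1$.
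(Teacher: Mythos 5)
Your reduction is exactly the paper's (unwritten) argument: peel off the factor $u_{V-1}^{N_{V-1}(x)}$, observe that $u^{k}\ge u$ for $0<u\le 1$ and $k\in\{0,1\}$, and conclude once $N_{V-1}(x)\le 1$ is known. The paper's entire proof is the single sentence ``For $V<V_n$ we have $N_V(x)\le 1$'', so your write-up is the same route with the bookkeeping made explicit, and your verification for $V<V_n$ is correct: for $V-1\le V_n-2$ the spacing $2^{-(V-1)}\ge 4\cdot 2^{-V_n}>2/\sqrt{n}=2R_n$ (using $2^{V_n}<2\sqrt{n}$), so the window $[x-R_n,x+R_n]$ meets at most one level-$(V-1)$ crossing for every $x$.

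The gap is your treatment of the boundary case $V=V_n$, and the measure-zero claim there is false. Level-$(V_n-1)$ crossings are spaced $2^{1-V_n}$ apart, and since $2^{V_n}\ge\sqrt{n}$ this spacing is $\le 2R_n$ -- generically \emph{strictly} less, because $2^{V_n}>\sqrt{n}$ unless $\sqrt{n}$ is a power of two. Whenever the spacing $d$ satisfies $d<2R_n$, every pair of consecutive level-$(V_n-1)$ crossings contributes an interval of positions of length $2R_n-d>0$ from which both crossings are within range, so $N_{V_n-1}(x)=2$ on a set of positive measure and $u_{V_n-1}^{N_{V_n-1}(x)}\ge u_{V_n-1}$ fails there. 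This is not a harmless technicality in principle, since the proof of Theorem~\ref{theo:without2} invokes the lemma starting at $V=V_n$; to be fair, the paper's own one-line proof asserts $N_V(x)\le 1$ for all $V<V_n$ and therefore has the identical defect at $V=V_n-1$. The repair is easy and worth stating: either restrict the lemma to $V<V_n$ (start the telescoping in Theorem~\ref{theo:without2} one level lower), or note that $N_{V_n-1}(x)\le 2$ always, giving $P_{V_n}\ge u_{V_n-1}^{2}P_{V_n-1}$, which costs one extra factor $\exp(-\rho p(H,V_n-1))\to 1$ and changes nothing in the limit. Either fix should be made explicit rather than appealing to a null set.
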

\begin{proof}
For $V<V_n$ we have $N_V(x)\le 1$.
\end{proof}

\begin{proof}[Proof of theorem~\ref{theo-nc}]
The following identity holds:
\beq
I_n(H)=P\left(e^{-\rho p(H,0)},\ldots,e^{-\rho p(H,V)},\ldots\right).
\eeq
Clearly, with Lemma~\ref{lem1} we get
\begin{eqnarray*}
I_n(H)&\ge&\exp\left(-\rho\sum_{V\ge V_n}p(H,V) 2^{V-V_n+1}\right)\\
&&\times P_{V_n}\left(e^{-\rho p(H,0)},\ldots, e^{-\rho p(H,V_n-1)}\right).
\end{eqnarray*}
Since $\theta<d_r/4$, the first right hand factor tends to one as the quantity $\sum_{V\ge V_n}\rho p(H,V)=O\left(n^{\theta-d_r/4}p'(q'/2)^H\right)$ tends to zero. 

Using lemma~\ref{lem2} the second term is lower bounded for any integer $k<V_n$ by:
\beq
\prod_{i=1}^k\exp\left(-\rho p(H,V_n-i)\right)\left(1-2^{V_n-k}/\sqrt{n}\right)
\eeq
Each of the terms $\rho p(H,V_n-i)=O(n^{\theta-d_r/4}(q'/2)^{-i})$ tends to zero, thus $\exp(-\rho p(H,V-i))$ tends to 1. Since $2^{V_n}\le 2\sqrt{n}$ thus $1-2^{V_n-k}/\sqrt{n}\ge 1-2^{1-k}$. For any fixed $k$ we have $\lim\inf_n{I_n(H)}\ge 1-2^{1-k}$ uniformly in $H$. Since $k$ can be made as large as possible  thus $2^{2-k}$ as small as we want, the $I_n(H)$ to tend to 1 uniformly in $H$.
\end{proof}

\subsection{Extension of connectivity with drones}\label{connect_drones}

Now that we have analyzed the connectivity properties of the network and observed the situation when nodes are not covered. We have discovered two regimes, in function of the existing gNB infrastructure features: $\theta>d_r/4$ and $\theta<d_r/4$. Let us provide the necessary dimensioning of the network in terms of UAVs for ensuring the required services for the first case. As in the second case the number of isolated nodes is overwhelming, we consider drones are not a desirable, cost-effective solution for connectivity. We have, therefore, also identified the regime for which drones are to be deployed.  

\begin{theorem}[Connectivity with drones]\label{theo:with_drones}

Assume $n$ mobile nodes distributed according to a hyperfractal distribution of dimension $d_F$ and a gNB distribution of dimension $d_r$ and intensity $\rho$. If $\rho=n^\theta$ with $\theta>d_r/4$ then the average number of drones needed to cover the mobile nodes not covered by gNBs is
$D_n=nI_n\left(1+\sum_{k\ge 1}(k+1)^{2-d_F}\exp(-p'q'n^{\theta-dr/2}k^{d_r})\right)$.
\label{theodrone}\end{theorem}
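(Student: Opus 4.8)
\emph{Reduction.}
The plan is to reduce $D_n$ to the mean length of the relay chain of a typical isolated mobile node, and then resum over street levels with the Mellin estimate~\eqref{bound1} already established in the proof of Theorem~\ref{without_drones}. To each mobile node not covered by a gNB we attach a shortest chain of relay drones along the street network, from that node to the nearest gNB, with consecutive hops at distance at most $R_n=1/\sqrt n$; let $\Delta(u)$ be the number of drones in that chain, and $\Delta(u)=0$ if $u$ is covered. The chains do not interact and each depends only on its node's position and on the (independent) gNB process, so by linearity of expectation $D_n=n\,\mathbb{E}[\Delta(U)]$ for a typical mobile $U$, which lies on a level-$H$ street with probability $pq^H$ and uniformly along it. Writing $\Delta(U)=\sum_{k\ge0}\mathbf{1}\{\Delta(U)>k\}$: the event $\{\Delta(U)>0\}$ is exactly ``$U$ isolated'', and for $k\ge1$ the event $\{\Delta(U)>k\}$ is ``no gNB within street-network distance $(k+1)R_n$ of $U$'' (a chain of $k$ drones reaches precisely that far), which is again contained in the isolation event. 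Hence $\mathbb{E}[\Delta(U)]=\sum_{k\ge0}P(\Delta(U)>k)$, with the $k=0$ term contributing $I_n$.

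\emph{The key estimate.}
Everything hinges on $p_k(H):=P(\Delta(U)>k\mid U\text{ on a level-}H\text{ street})$. Since $\Phi_r$ is Poisson and $\Xi$ is its support, the probability that a region contains no gNB equals the exponential of minus the expected number of $\Phi_r$-points in it; hence $p_k(H)=\exp(-\Lambda_k(H))$, where $\Lambda_k(H)$ is the expected number of $\Phi_r$-points in the street-network ball $B_k$ of radius $(k+1)R_n$ about $U$. One estimates $\Lambda_k(H)$ by decomposing $B_k$ into: the segment of street $H$ of half-length $(k+1)R_n$ about $U$; the crossing streets that the chain detours onto, each explored over its residual budget; and deeper, negligible detours. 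Using, exactly as in the proof of Theorem~\ref{without_drones}, that level-$V$ crossings recur every $2^{-V}$ and each carries $\Phi_r$-mass $\rho(p')^2(q'/2)^{H+V}$, the street-$H$ piece contributes $p'(q'/2)^H(k+1)^{d_r/2}n^{\theta-d_r/4}$, while the detour pieces sum --- a convergent series over crossing levels, dominated by level $V_n-\log_2 k$, which balances the detour cost $\asymp kR_n$ against gNB density --- to $p'q'\,n^{\theta-d_r/2}k^{d_r}$; this second sum is free of $H$ because $(q'/2)^{V_n}\asymp n^{-d_r/4}$ makes $(q'/2)^{\,V_n-\log_2 k}(kR_n)^{d_r/2}=(kR_n)^{d_r}$. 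Thus, up to $(1+o(1))$,
\[
p_k(H)=\exp\!\Big(-p'(q'/2)^H(k+1)^{d_r/2}n^{\theta-d_r/4}\Big)\,\exp\!\big(-p'q'\,n^{\theta-d_r/2}k^{d_r}\big),
\]
which at $k=0$ reduces to $I_n(H)$.

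\emph{Resummation.}
For each $k$, $y_k:=(k+1)^{d_r/2}n^{\theta-d_r/4}\to\infty$ since $\theta>d_r/4$, so~\eqref{bound1} together with the identity $\frac{d_r}{2}\delta=d_F-2$ gives
\[
\sum_H pq^H\exp\!\big(-p'(q'/2)^H y_k\big)=\frac{p(p')^{-\delta}}{\log(2/q')}\Gamma(\delta)\,y_k^{-\delta}(1+o(1))=I_n\,(k+1)^{2-d_F}(1+o(1)),
\]
the case $k=0$ being $I_n$ itself. Summing $p_k(H)$ against $pq^H$, then over $k$, and separating out $k=0$,
\[
D_n=n\,\mathbb{E}[\Delta(U)]=nI_n\Big(1+\sum_{k\ge1}(k+1)^{2-d_F}\exp\!\big(-p'q'\,n^{\theta-d_r/2}k^{d_r}\big)\Big)(1+o(1)),
\]
as claimed.

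\emph{Main obstacle.}
The delicate part is the estimate of $p_k(H)$: one must show that, for an isolated node, $B_k$ is governed up to $(1+o(1))$ by its street-$H$ cross-section plus a single detour onto the densest reachable crossing street, control the sub-leading detour levels (the geometric series) and the deeper detours, and pin down the constant $p'q'$ and the appearance of $k$ rather than $k+1$ --- all of which requires the crossing counts and residual-budget bookkeeping on the hyperfractal. One must also make the error terms uniform in $k$ so they survive the summation; for this the crude bound $\Delta(U)\le\sqrt n$ (a street has length $1$) together with the second exponential factor suffices, the latter truncating the $k$-sum at $k\asymp1$ when $\theta>d_r/2$ and at $k\asymp n^{1/2-\theta/d_r}$ when $\theta<d_r/2$; moreover $\sum_k(k+1)^{2-d_F}$ converges whenever $d_F>3$, so the correction is $O(1)$ there and $D_n=\Theta(nI_n)$, matching the announced interpretation.
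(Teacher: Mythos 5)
Your proposal is correct and follows essentially the same route as the paper: a tail-sum representation of the expected chain length, a decomposition of the ``no gNB within street-network distance $(k+1)R_n$'' event into a one-leg piece on the node's own street (giving the $(k+1)^{d_r/2}$ scaling, resummed over $H$ via the Mellin estimate~\eqref{bound1} to produce $(k+1)^{2-d_F}I_n$) and an $H$-independent two-leg detour piece (giving the factor $\exp(-p'q'n^{\theta-d_r/2}k^{d_r})$). Your phrasing via the Poisson void probability $\exp(-\Lambda_k(H))$ of the street-network ball is just the product of the paper's one-leg and two-leg void bounds $I(H,\cdot)\,J(\cdot)$, and the ``main obstacle'' you flag (making the two-leg mean-measure computation precise) is exactly what the paper's lemma on $J(R)$ carries out.
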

\begin{remark} when $d_F>3$ and $\theta<d_r/2$ the distribution of the number of drones tends to be a power law of power $2-d_F$. When $\theta>d_r/2$ the average number of drones is asymptotically equivalent to the average number of isolated nodes. The probability that an isolated node needs more than one single drone decays exponentially. 
\end{remark}
\begin{lemma}
Let $I(H,R)$ be an upper bound of the probability that an interval of length $R$ on a road of level $H$ does not contain any gNB. We have
\beq
I(H,R)\le\exp\left(-\rho p'(q'/2)^HR^{d_r/2}\right).
\eeq
\end{lemma}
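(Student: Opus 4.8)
The plan is to mirror, almost verbatim, the computation already carried out inside the proof of Theorem~\ref{without_drones}, replacing the specific half-width $1/\sqrt n$ by a generic length $R$. First I would fix a road of level $H$ and an interval $\mathcal{I}$ of length $R$ lying on it. Since the gNBs form the support of the Poisson process $\Phi_r$ and its masses on distinct crossings are independent, the event that $\mathcal{I}$ contains no gNB coincides with the event that $\Phi_r$ puts zero mass on every crossing contained in $\mathcal{I}$; writing $N_V$ for the (deterministic, once $\mathcal I$ is placed) number of crossings of level exactly $V$ inside $\mathcal{I}$, each carrying a Poisson$\big(\rho\,p(H,V)\big)$ number of points, independence gives that this probability equals
\beq
\exp\Big(-\rho\sum_V N_V\, p(H,V)\Big).
\eeq
So it remains to lower bound $\sum_V N_V\,p(H,V)$ uniformly over the position of $\mathcal I$.

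For that I would reuse the structural fact already used in the proof of Theorem~\ref{without_drones}: the crossings of level exactly $V$ sit regularly along any line with gap $2^{-V}$, so any interval of length $R$ meets at least $\lfloor 2^V R\rfloor$ of them, i.e.\ $N_V\ge\lfloor 2^V R\rfloor$. Letting $V_R$ be the smallest integer with $2^{V_R}\ge 1/R$, for every $V\ge V_R$ the quantity $2^VR=2^{V-V_R}\,(2^{V_R}R)$ dominates the integer $2^{V-V_R}$, so $N_V\ge 2^{V-V_R}$; substituting $p(H,V)=(p')^2(q'/2)^{H+V}$ and summing the resulting geometric series exactly as in Theorem~\ref{without_drones} yields
\beq
\sum_V N_V\,p(H,V)\ \ge\ \sum_{V\ge V_R}2^{V-V_R}(p')^2(q'/2)^{H+V}\ =\ p'\,(q'/2)^{H+V_R}.
\eeq

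The final step converts $(q'/2)^{V_R}$ into a power of $R$ via the identity $q'/2=2^{-d_r/2}$ coming from the Remark defining $d_r$: since $2^{-V_R}\ge R/2$ by the choice of $V_R$, one has $(q'/2)^{V_R}=(2^{-V_R})^{d_r/2}\ge (R/2)^{d_r/2}=(q'/2)\,R^{d_r/2}$, whence $\sum_V N_V p(H,V)\ge (q'/2)\,p'(q'/2)^{H}R^{d_r/2}$ and therefore the probability that $\mathcal I$ has no gNB is at most $\exp\big(-(q'/2)\,\rho\,p'(q'/2)^{H}R^{d_r/2}\big)$, i.e.\ the claimed bound up to the absolute constant $q'/2$ in the exponent (which is harmless for the later asymptotics and in fact drops out whenever $R$ is a dyadic fraction $2^{-j}$, since then $V_R=j$ exactly). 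Since every step is a direct transcription of the already-verified $R=1/\sqrt n$ case, there is no genuine difficulty; the only points deserving care are that $\mathcal I$ sits at an arbitrary abscissa, so the counting bound $N_V\ge 2^{V-V_R}$ must be argued uniformly over that abscissa, and that the ceiling hidden in the definition of $V_R$ is exactly what produces the stray constant $q'/2$.
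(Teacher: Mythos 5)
Your proof is correct and is exactly the paper's intended argument: the paper's own proof of this lemma is a one-line remark that one should repeat the estimate of $I_n(H)$ from Theorem~\ref{without_drones} with $1/\sqrt{n}$ replaced by $R$, which is precisely what you carry out. Your extra care about the rounding in $V_R$ (producing the harmless factor $q'/2$ in the exponent) is a point the paper itself silently glosses over both here and in the original $R=1/\sqrt{n}$ computation.
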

\begin{proof}
This is an adaptation of the estimate of $I_n(H)$ in the proof of theorem~\ref{theorelay}, where we replace $1/\sqrt{n}$ by $R$. Notice that $I_n(H)=I(H,R_n)$.
\end{proof}

Let us now look at the possibility of having gNBs at a Manhattan distance $R$. By Manhattan distance we consider the path from one mobile node to a gNB is 
\begin{itemize}
    \item either, the segment from the mobile node to the gNB if they are on the same road; in this case the distance is the length of this segment, and is  called a {\it one leg} distance;
    \item or, composed of the segment from the mobile node to the intersection to the road of the gNB and the segment from this intersection to the gNB; in this case the distance is the sum of the lengths of the two segments, and is called a {\it two-leg} distance.
\end{itemize}
\begin{lemma}
Let $J(R)$ be the probability that a mobile node has no gNB at two-leg distance less than $R$, $J(R)\le\exp\left(-\rho p'q' \frac{R^{d_r}}{1+d_r/2}\right)$. 
\end{lemma}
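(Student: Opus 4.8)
The plan is to use the independence built into the Poisson process $\Phi_r$: its restrictions to distinct streets are independent, so the event that no gNB lies at two-leg distance less than $R$ from the mobile node factorises over the cross-streets of the road carrying that node. Fix the mobile node on a road of level $H$ at abscissa $x$, and index by $j$ the cross-streets meeting this road within distance $R$ of $x$, the $j$-th having level $V_j$ and meeting the road at distance $s_j<R$. A gNB reachable by a two-leg path of total length $<R$ through crossing $j$ lies on cross-street $j$ within distance $R-s_j$ of that crossing; retaining only the gNBs reached by running the second leg in one prescribed direction along each cross-street (which only enlarges the probability we bound, hence is legitimate), and using that distinct cross-streets are parallel lines carrying disjoint families of crossings, the corresponding ``no-gNB'' events are jointly independent. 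Hence
\[
J(R)\ \le\ \prod_{j:\,s_j<R} I\left(V_j,\,R-s_j\right)\ \le\ \prod_{j:\,s_j<R}\exp\left(-\rho p'(q'/2)^{V_j}(R-s_j)^{d_r/2}\right),
\]
where the first inequality applies the preceding lemma on each cross-street (its estimate being uniform in where the interval sits along the street), and the second substitutes that lemma's bound; taking logarithms gives $-\log J(R)\ge\rho p'\sum_{j:\,s_j<R}(q'/2)^{V_j}(R-s_j)^{d_r/2}$. Note this already shows the final bound is uniform in $H$: neither the density of level-$V$ cross-streets on the road nor the gNB content of those cross-streets depends on $H$.

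Next I would evaluate the sum by grouping the crossings by level. As recalled in the proof of Theorem~\ref{theorelay}, the crossings of level at least $V$ are spaced $2^{-V}$ apart along the road, so those within distance $R$ of $x$ sit at positions $s\simeq m2^{-V}$, $m=1,\dots,\lfloor R2^{V}\rfloor$, on each side, and there are none unless $V\ge V_R:=\lceil\log_2(1/R)\rceil$. Comparing the discrete sum in $m$ with $\int_{0}^{R}(R-s)^{d_r/2}\,ds=\frac{R^{1+d_r/2}}{1+d_r/2}$ — this integral is the source of the factor $\frac{1}{1+d_r/2}$ in the statement — and using $(q'/2)^{V}2^{V}=(q')^{V}$, the crossings of level $V$ contribute at least a fixed positive multiple of $\rho p'(q')^{V}\frac{R^{1+d_r/2}}{1+d_r/2}$ to $-\log J(R)$. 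Summing the geometric series over $V\ge V_R$ and invoking the defining identity $q'=2^{1-d_r/2}$ (i.e. $\log_2(2/q')=d_r/2$), so that $\sum_{V\ge V_R}(q')^{V}\asymp(q')^{V_R}\asymp R^{\,d_r/2-1}$ (a genuinely small factor since $d_r\ge 2$), one gets $-\log J(R)\ge c\,\rho\,\frac{R^{1+d_r/2}}{1+d_r/2}\,R^{\,d_r/2-1}=c\,\rho\,\frac{R^{\,d_r}}{1+d_r/2}$. Carrying the integer parts through carefully (they only help) then identifies the constant $c$ with $p'q'$, giving $J(R)\le\exp\left(-\rho p'q'\,R^{d_r}/(1+d_r/2)\right)$.

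The step I expect to be the main obstacle is exactly this last summation. A careless treatment, in which the sum over cross-street levels $V$ is started at $0$, yields only the weaker exponent $R^{1+d_r/2}$; the crucial point is that the sum starts at $V_R\simeq\log_2(1/R)$, whose leading geometric term contributes the extra factor $(q')^{V_R}\asymp R^{\,d_r/2-1}$ that promotes $R^{1+d_r/2}$ to $R^{d_r}$. Making this precise means handling the floors $\lfloor R2^{V}\rfloor$ (and the analogous ones hidden inside $I(V,\cdot)$) in the favourable direction and checking that the finitely many ``boundary'' levels near $V_R$ contribute only lower-order terms. The rest — the factorisation over cross-streets, their disjointness, and the appeal to the previous lemma — is routine.
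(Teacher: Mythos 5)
Your argument is essentially the paper's own proof: it factorises the event over the perpendicular cross-streets by independence of the gNB process on disjoint crossing sets, applies the one-leg lemma $I(V,\cdot)$ to each, compares the sum over same-level crossings with $\int_0^R (R-s)^{d_r/2}\,ds$ (the source of $\tfrac{1}{1+d_r/2}$), and sums the geometric series over levels $V\ge V_R$ using $(q')^{V_R}\ge q'R^{d_r/2-1}$ to promote $R^{1+d_r/2}$ to $R^{d_r}$ --- you correctly identify this last step as the crux. The only deviation is that you keep just one direction of the second leg on each cross-street where the paper keeps both (its factors $I^2(H,2R-i2^{-H})$), which costs a bounded factor in the exponential constant; since the paper's own derivation discards comparable factors, this is immaterial to the lemma's content.
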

\begin{proof}
Let $J(H,R)$ be the probability that there is no relay on a road of level $H$ at a two-leg distance smaller than $R$. 
From the mobile nodes the maximal gap to the next intersection of level $H$ is $2^{-H+1}$, we have 
\begin{equation*}
J(H,R)\le\prod_{2^Hi\le R}I^2(H,2R-i2^{-H})
\end{equation*}
Each factor $I(H,2R-i2^{-H})$ comes from the fact that from the intersection of abscissa $i2^{1-H}$ the two segments apart of the perpendicular road of length $R-i2^{1-H}$ should not contain any relay. The power $2$ comes from the fact that we have to consider two intersections apart at distance $i2^{1-H}$ from the mobile node. We consider $H\ge H_R=\lceil\log_2 1/R\rceil$. For $H<H_R$ we will simply assume $J(H,R)\le 1$.
For $H\ge H_R$ we have 
\begin{equation*}
J(H,R)\le\exp\left(-2\rho p'(q'/2)^H\sum_{2^{-H} i\le R}(2R-i2^{-H})^{d_r/2}\right)
\end{equation*}
with the fact that 
\begin{eqnarray*}
\sum_{i=1}^{V_R}(2R-i2^{-V})^{d_r/2}&\ge& 2^H\int_0^{2R-1/2^H}x^{dr/2}dx\\
&=&2^H\frac{(2R-2^{-H})^{d_r/2+1}}{1+d_r/2}\\
&\ge&2^H\frac{R^{1+d_r/2}}{1+d_r/2}
\end{eqnarray*}
we obtain that
\begin{equation*}
J(H,R)\le\exp\left(-\rho p'(q')^H\frac{R^{1+d_r/2}}{1+d_r/2}\right)
\end{equation*}
The overall evaluation $J(R)$ is made of the product of all the $J(H,R)$ since the intersection and gNBs positions are independent:
\begin{eqnarray*}
J(R)&=&\prod_H J(H,R)\\
&\le&\exp\left(-\rho p'\sum_{H\ge H_R}(q')^H\frac{R^{1+d_r/2}}{1+d_r/2}  \right)\\
&=&\exp\left(-\rho p'(q')^{H_R}\frac{R^{1+d_r/2}}{1+d_r/2}  \right)
\end{eqnarray*}
With the estimate that $(q')^{H_R}\ge q' R^{d_r/2-1}$ we get $J(R)\le\exp\left(-\rho p'q'\frac{R^{d_r}}{1+d_r/2} \right)$.
\end{proof}
\begin{lemma}
Let $D(H,R)$ be the probability that for a mobile node on a road of level $H$  there is no gNB at Manhattan distance less than $R$. We have
$D(H,R)\le I(H,R)J(R)$. Let $k$ be an integer and $P_n(H,k)$ be the probability that a mobile node on road of level $H$ needs $k$ or more drones to be connected to the closest gNB. We have
\beq
P_n(H,k)\le I_n(H,kR_n)J((k-1)R_n)
\label{ineqP}\eeq
\end{lemma}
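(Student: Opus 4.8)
The plan is to establish the two inequalities separately, but both through the same two ingredients: a deterministic geometric fact identifying which gNBs make a given resource (direct coverage, or a short relay chain) superfluous, together with an independence argument exploiting that the gNBs relevant to the \emph{one-leg} and \emph{two-leg} cases sit on disjoint families of crossings, so that the independent-placement property of $\Xi$ factorises the probabilities.

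For $D(H,R)\le I(H,R)J(R)$, I would first use that the street grid is Manhattan-like (each street is horizontal or vertical and spans the whole unit square), so every gNB is at Manhattan distance from the node equal either to a one-leg distance, when the gNB lies on the node's own street, or to a two-leg distance, when the node's street meets the perpendicular street carrying the gNB at a unique crossing $P$ and one walks to $P$ and then along that perpendicular street. Hence the event that no gNB lies at Manhattan distance less than $R$ is precisely the intersection of the event that no one-leg gNB lies within $R$ and the event that no two-leg gNB lies within $R$. The first event depends only on the occupancies of the crossings lying on the node's street; the second depends only on the occupancies of crossings lying on perpendicular streets and strictly beyond their crossing with the node's street (a two-leg distance has a strictly positive second leg, so $P$ itself is excluded there and is instead accounted for in the one-leg case). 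These two families of crossings are disjoint, so the two events are independent by the independent-placement property of $\Xi$; their probabilities are at most $I(H,R)$ and $J(R)$ respectively by the definitions of $I$ and $J$, and multiplying gives the claim.

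For $P_n(H,k)\le I_n(H,kR_n)J((k-1)R_n)$ the deterministic inputs concern how far $k-1$ well-placed drones of range $R_n$ can reach along the grid, bearing in mind the canyon model: a node or drone not sitting at a crossing transmits only along the single street it occupies, so a relay chain can change direction only at a crossing, and each turn costs one drone placed exactly there. First, if some one-leg gNB lies at distance at most $kR_n$, then drones at abscissae $R_n,2R_n,\dots,(k-1)R_n$ on the node's street form a chain of at most $k-1$ drones reaching it; hence the event that the node needs $k$ or more drones is contained in the event that no one-leg gNB lies within distance $kR_n$. Second, if some gNB lies at two-leg distance $d=a+b$ with $d\le(k-1)R_n$, then one drone at the corner crossing together with $\lceil a/R_n\rceil-1$ and $\lceil b/R_n\rceil-1$ drones spread along the two legs forms a chain of $\lceil a/R_n\rceil+\lceil b/R_n\rceil-1$ drones, and since $\lceil a/R_n\rceil+\lceil b/R_n\rceil\le\lceil (a+b)/R_n\rceil+1\le k$, this is at most $k-1$ drones; hence the event that the node needs $k$ or more drones is also contained in the event that no two-leg gNB lies within distance $(k-1)R_n$. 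Those two latter events depend on disjoint families of crossings, hence are independent, so $P_n(H,k)$ is at most the product of their probabilities, which are bounded by $I_n(H,kR_n)$ and $J((k-1)R_n)$ respectively.

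The point I expect to need the most care is the asymmetry between the one-leg radius $kR_n$ and the two-leg radius $(k-1)R_n$: it is due entirely to the drone spent turning the corner, encoded in the ceiling inequality $\lceil x\rceil+\lceil y\rceil\le\lceil x+y\rceil+1$, and one should say explicitly why, under the canyon model, a corner cannot be rounded without a drone placed at the crossing. A secondary point is the disjointness that underlies the independence step: one must check that the crossing $P$ where the node's street meets a perpendicular street is always attributed to the one-leg family and never enters the two-leg count, which holds because any genuine two-leg route has a strictly positive second leg. The degenerate case $k=1$, where $J(0)=1$ and the bound collapses to $P_n(H,1)\le I_n(H)$, exactly the probability that the node is directly isolated, is a convenient consistency check.
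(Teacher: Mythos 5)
Your proof is correct and follows essentially the same route as the paper's (two-sentence) argument: decompose Manhattan paths into one-leg and two-leg cases, use independence of gNB placements on the disjoint crossing families to factorise, and charge the extra drone at the corner crossing for the $(k-1)R_n$ versus $kR_n$ asymmetry. Your explicit ceiling-inequality count and the disjointness justification simply supply details the paper leaves implicit.
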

\begin{proof}
The fact that $D(H,R)\le I(H,R)J(R)$ comes from the fact that probability that there is no relay at distance $R$ is equal to  the product of the probabilities of the event: (i) there is no relay at one-leg distance smaller than $R$$(I(H,R)$), (ii) there is no relay at a two-leg distance smaller than $R$($J(R)$). 

The expression for $P_n(H,k)$ comes from the fact that to have $k$ or more drones we need no gNB within one leg Manhattan distance $kR_n$ and no gNB within two leg distance $(k-1)n$, for $k\ge 1$ since we have to lay an extra drone at road intersection. 

\end{proof}
\begin{proof}[Proof of Theorem~\ref{theodrone}]
The average number of drones needed to connect a mobile nodes on a road of level $H$ to the closest gNB is $L_n(H)=\sum_{k\ge 1} P(H,k)$.  
Thus $L_n(H)\le I_n(H)+\sum_{k\ge 1}I_n(H,kR_n) J(kR_n)$

Therefore the average total number $D_n$ of drones is given by:
\begin{eqnarray*}
    D_n&=&n\sum_{H}\lambda_H\sum_{k\ge 1}P(H,k)\\
    &&\le n\sum_H\lambda_H \sum_{k\ge 1}I_n(H,kR_n)J((k-1)R_n)\\
    &&\le nI_n+\sum_{k\ge 1} \sum_H nI_n(H,(k+1)R_n) \exp(-\rho p'q' k R_n^{d_r}k^{d_r})
\end{eqnarray*}
By an adaptation of~(\ref{bound1}) we have 
$\sum_H \lambda_H I_n(H,(k+1)R_n)\sim (k+1)^{-\delta d_r/2} I_n$ and by the fact that $\rho R_n^{d_r}=n^{\theta-d_r/2}$ we get the claimed result. 
\end{proof}

\begin{remark} The fact that the number of drones to connect the isolated mobile nodes is asymptotically equivalent to the number of isolated mobile nodes when $\theta>d_r/2$, is optimal when drone sharing is not allowed. We conjecture  that the isolated mobile nodes are so dispersed that sharing the connectivity of a drone is unlikely. 
In the other case when $d_r/4<\theta<d_r/2$ the number of drones per isolated node tends to be finite as soon as $d_F>3$.
\end{remark}


\section{Garages of drones}\label{garage}

While the fixed infrastructure is robust and can serve the vehicle requirements with a proper planning, the cost of installing and operating fixed infrastructure is substantial. In addition, as in many situations in telecommunications, the planning is often over-dimensionned in order to cope with "worst case scenarios'' (e.g., day versus night traffic).  In this sense, our scenario further offloads the traffic towards a flexible, ``ephemeral'' infrastructure, advancing towards the so-called ``moving networks of drones''. We will now propose a procedure for trimming the number of required relays, by converting some into garages of drones which will be dynamically deployed and launched to meet the requirements of the mobile users. The drones are to be seen as ``mobile relays'', with the possibility to build chains of drones that will serve the user with hop-by-hop communication through a highly reconfigurable Integrated Access and Backhaul (IAB).

\begin{figure}\centering
\includegraphics[scale=0.27]{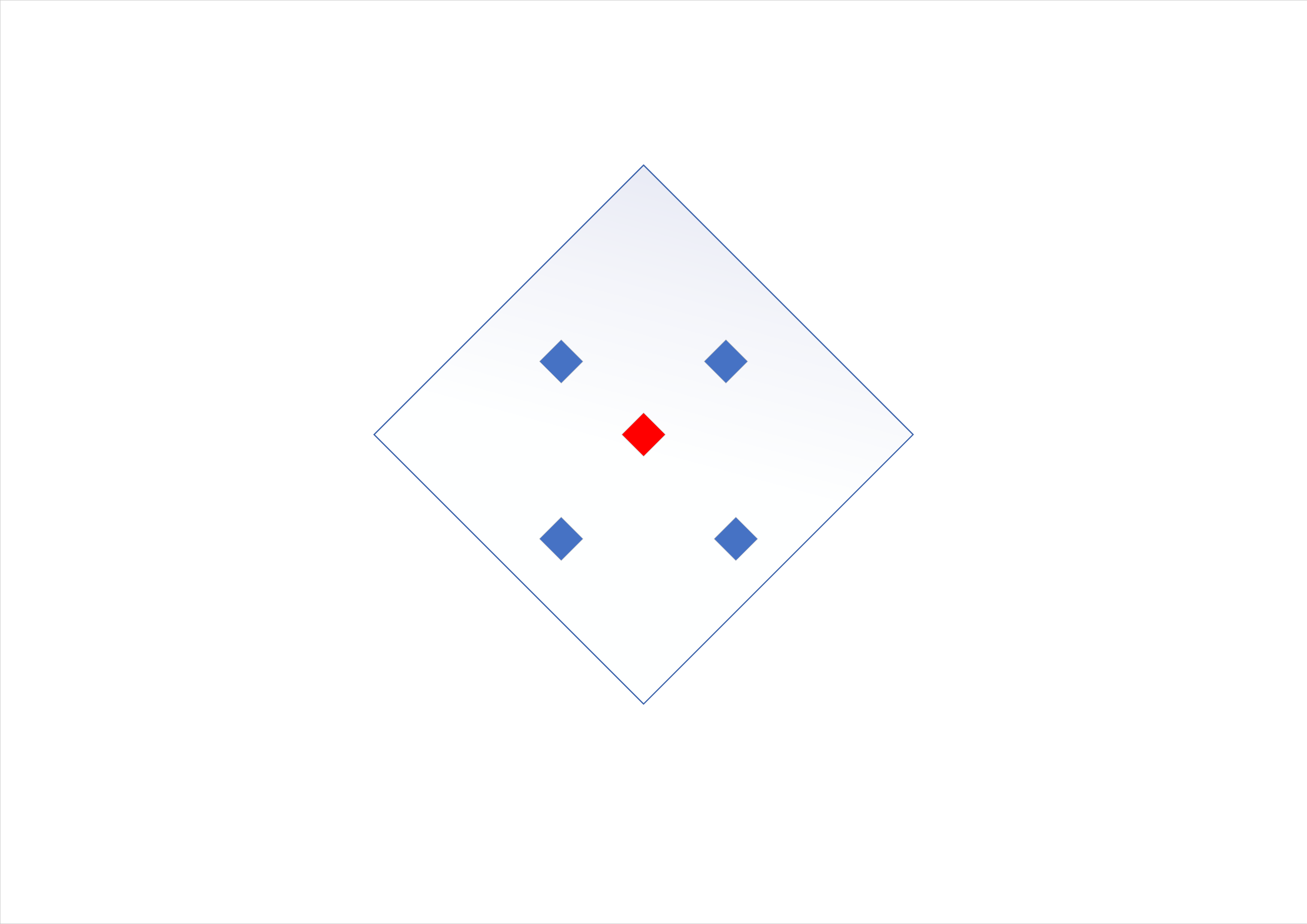}
\caption{Garage elimination algorithm}
\label{fig:garages_algorithm}
\end{figure}
Although it has been envisioned for the drones to be shared between operators, it is unlikely this will be done in an early phase: each operator will own and maintain its own fleet of drones.  Furthermore, as the UAVs are a cost effective and flexible solution for extending the coverage, the places for storing and charging them, what we call "garages",
are to be located in the same sites as the gNBs, owned by the operators.

We now provide a first straight-over insight on the home locations of the nomadic infrastructure. 

%

We define as the "{\it flight-to-coverage}" time, the time necessary for a drone to leave the garage and be in a distance lower than $R_n$ to the UE and to the gNB, such as to be able to form the backhaul. 
In order to fulfill the service requirements, the flight-to-coverage time should be lower than the allowed delay. 
Again, the drone needs not to be hoovering over the UE or the gNB but just have them in the range of its CSI-RS (Channel State Information Reference Signal) beams. 

We thus transform the flight-to-coverage time into constraints on connectivity (at all time), for a chain of drones of maximum length $k$. 
We want to select the relays that will be garages (and store drones) in order to minimize the number of garages and delay, i.e. under the constraint that at most $k$ hops are needed to forward the packets (and that the chains of drones are no longer than $k$).

We now describe informally an algorithm we use to select relays to become garages. 
Similar to a dominating set problem, we first make every relay a garage. We then check the garages in an arbitrary  order $A_1,A_2,\ldots$. We then eliminate sequentially a garage if it has four relays at Manhattan distance less than $R$, one in each of the four quadrants, e.g., as in~Figure \ref{fig:garages_algorithm}. We call these relays the "covering" relays. They have the property that every mobile node at Manhattan distance smaller than $R$ to the eliminated garage is necessary at Manhattan distance smaller than $R$ to at least one of the covering relay. We call this property the covering transfer property. 

Note that with the garage elimination heuristic, we eliminate the garage $A_\ell$ iff it has a covering set made of four relays of index smaller than $\ell$. 

\begin{lemma}
If a mobile node is at Manhattan distance smaller than $R$ to at least one relay, then it is at Manhattan distance smaller than $R$ to at least one non-eliminated garage.
\end{lemma}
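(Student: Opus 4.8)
The plan is to deduce the lemma from the covering transfer property of the covering relays by a minimal-index argument. Fix a mobile node $m$ and write $d(m,A)$ for its Manhattan distance (one-leg or two-leg, as defined in Section~\ref{garage}) to a relay $A$. By hypothesis the index set $S=\{\ell\ge 1:\ A_\ell\text{ is a relay and }d(m,A_\ell)<R\}$ is non-empty, hence it has a least element $\ell^\star=\min S$. I claim that $A_{\ell^\star}$ is never eliminated by the heuristic; since a relay that is never eliminated is exactly a non-eliminated garage, and $d(m,A_{\ell^\star})<R$, this claim is precisely the statement of the lemma.

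To prove the claim I would argue by contradiction. Suppose $A_{\ell^\star}$ is eliminated. By the elimination rule recalled just above the lemma, $A_{\ell^\star}$ is eliminated only if it admits a covering set of four relays, one in each of the four quadrants around $A_{\ell^\star}$, each at Manhattan distance less than $R$ from $A_{\ell^\star}$ and, crucially, each of index strictly smaller than $\ell^\star$. Since $\ell^\star\in S$ we have $d(m,A_{\ell^\star})<R$, so the covering transfer property applies to $m$: there is a covering relay $A_j$ with $d(m,A_j)<R$ and $j<\ell^\star$. But then $j\in S$ with $j<\ell^\star$, contradicting the minimality of $\ell^\star$. Hence $A_{\ell^\star}$ survives the elimination process, proving the claim and the lemma.

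The index bookkeeping above is routine: the only facts used are that the garages are examined in a fixed order $A_1,A_2,\dots$ and that the four covering relays triggering an elimination have strictly smaller index, both of which are part of the elimination rule as stated; moreover, taking the global minimum of $S$ — rather than iteratively descending from a covering relay to one of its own covering relays — yields the contradiction in one step, so no termination or finiteness argument is needed. The one genuinely substantive ingredient is the covering transfer property itself, and that is the step I would treat as the main obstacle. It is invoked here as already established in the discussion preceding the lemma, but a fully self-contained justification would require a case analysis on the position of the mobile node's street relative to the two streets through the eliminated garage, combined with the geometry of Manhattan balls, to verify that the chosen quadrant relays indeed transfer the radius-$R$ coverage of the garage.
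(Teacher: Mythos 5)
Your proof is correct and follows essentially the same argument as the paper: both take the smallest index among the relays within Manhattan distance $R$ of the mobile node, assume it is eliminated, and use the covering transfer property (via the covering relay in the mobile node's quadrant) to produce a smaller-index relay within distance $R$, contradicting minimality. Like the paper, you invoke the covering transfer property as established in the preceding discussion rather than re-deriving it, so the two proofs match in both structure and level of detail.
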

\begin{proof}
Since there are no road with null density, the mobile node can be in any point of the map which is at Manhattan distance smaller than $R$ from a relay. Let us suppose that there is such point $X$ such that all relays at distance smaller than $R$ are eliminated garage. Let denote $i$ the smallest index among the indices of the relays at distance smaller than $R$. Since we suppose that its garage has been eliminated, the relay $A_i$ is covered by four relays of smaller index. Let $j<i$ be the index of the covering relay which is in the same quadrant of $A_i$ as the point $X$. Since $A_j$ and $X$ are at Manhattan distance smaller than $R$ of each other, this contradicts the fact that $i$ is the smallest index of the relays within distance $R$ from $X$. Therefore $X$ has necessarily at least one non-eliminated garage within distance $R$.


\end{proof}

Consequently, we will eliminate quickly all relays which have themselves a  neighboring relay in distance of less than $R$, therefore, there is no uncovered segment of road between the two relays.  
The remaining garages should hold drones to ensure connectivity within the delay tolerated. 
\begin{figure}\centering
\includegraphics[scale=0.27]{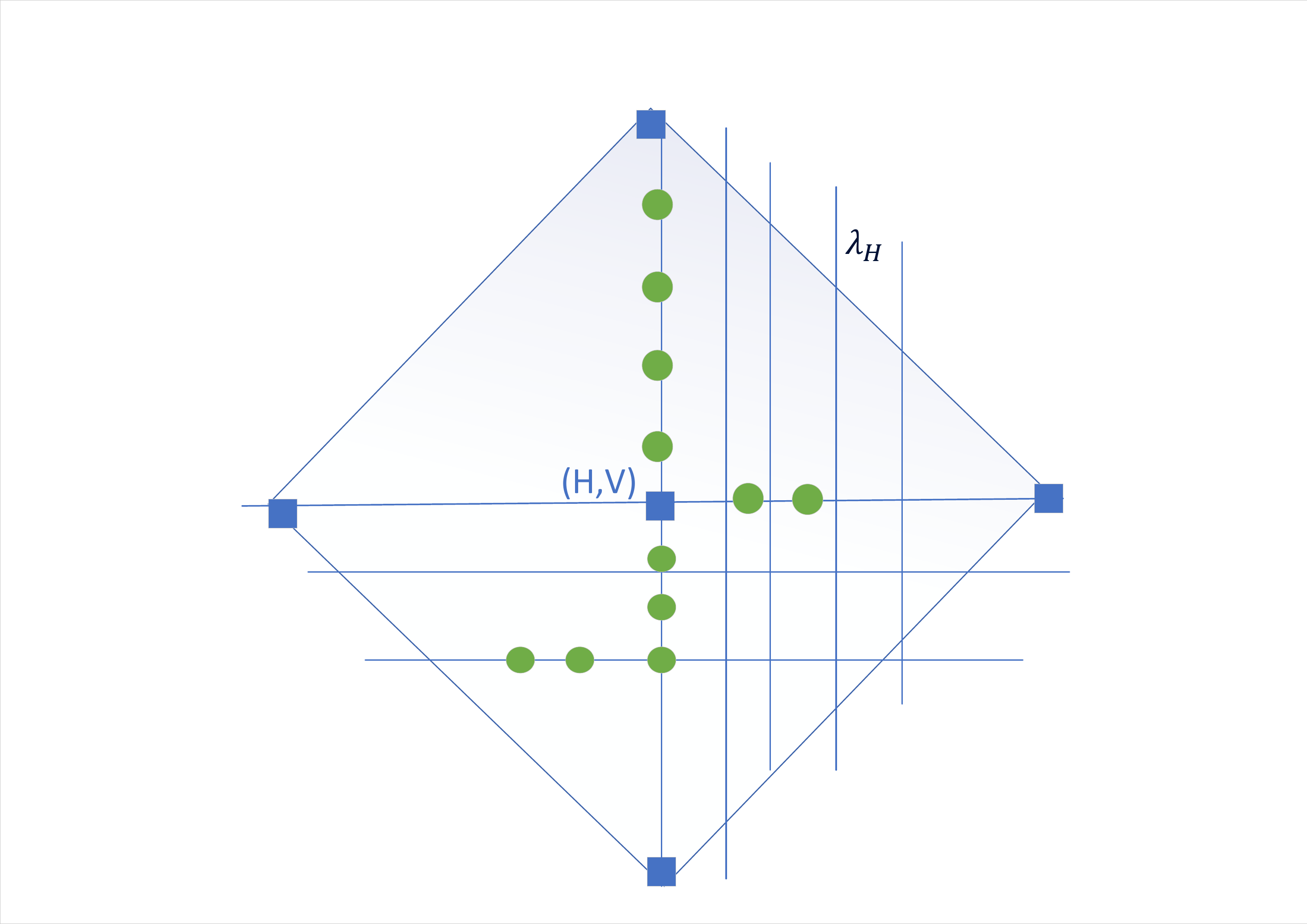}
\caption{Garage serving model}
\label{fig:garages}
\end{figure}

Figure~\ref{fig:garages} illustrates an operating model scenario with a garage of drones. A garage selected on a relay at an intersection of levels $(H,V)$ should be able to send its drones to all the cars arriving from the closest relays, forming together chains of drones of maximum length $k$. A chain of drones will be constructed with drones coming from both of the neighboring garages, as the passing of a car through a relay is announced to the neighboring relay through the wired F1 interface such that the following garage in the direction of movement can send its drones to meet the car to ensure the coverage along the path.  

A garage should be able to hold enough drones to ensure the continuity of service for all upcoming vehicles on the lines of the hyperfractal support crossing the area where the drones can move in an acceptable time. 
Note that as the arriving traffic flow in the "cell" served by the garage is a mixture of Poisson point processes on lines, the average incoming traffic the garage is serving can be computed therefore, given the distance towards the neighboring relays. 

We assume that the average speed of a drone is considered to be comparable to the maximum speed of a car.
The drones of a garage can thus be in three possible states: {\em in route} towards the car they will be serving, {\em serving} a car, and {\em coming back} from the service. 

Consequently, this procedure leads to the property that, within the "cell" around a garage, the graph is connected over time using the drones (as per Figure~\ref{fig:garages}). 
We will refer to this cell as a "moving network of drones".


\section{Numerical Evaluations}\label{simulations}

In this section, we first provide some simulations on the connectivity of gNBs, UEs and drones. We then provide some simulations on the garage locations and their properties.

The numerical evaluations are run in a locally built simulator in MatLab, following the description provided in Section~\ref{model} for both the stochastic modelling of the location of the entities and the communication model. 
The map length is of one unit and a scaling is performed in order to respect the scaling of real cities as well as communication parameters.

\subsection{Connectivity of gNBs, UEs and Drones.}

We first give some visual insight on the connectivity variation with the fractal dimension of the gNBs and $\theta$. Figure \ref{fig:dr3_1} shows in red $*$ the locations of vehicular UEs for $n=400$ and $d_F=3$ and in black circles the locations of the gNBs for $d_r=3$ and $\rho=n$. We are, therefore, in the first regime of $\theta$, $\theta>d_r/4$. On the other hand, Figure \ref{fig:dr3_2} displays, a snapshot of a network for the same $d_F$, $d_r$ and $n$, the second regime of $\theta$, with $\theta<d_r/4$, as more precisely $\theta=1/2$.

\begin{figure}  [ht]
\hspace*{2cm}
\begin{subfigure}[t]{0.225\textwidth}
\includegraphics[scale=0.45,trim={3cm 9.5cm 0cm 10cm}]{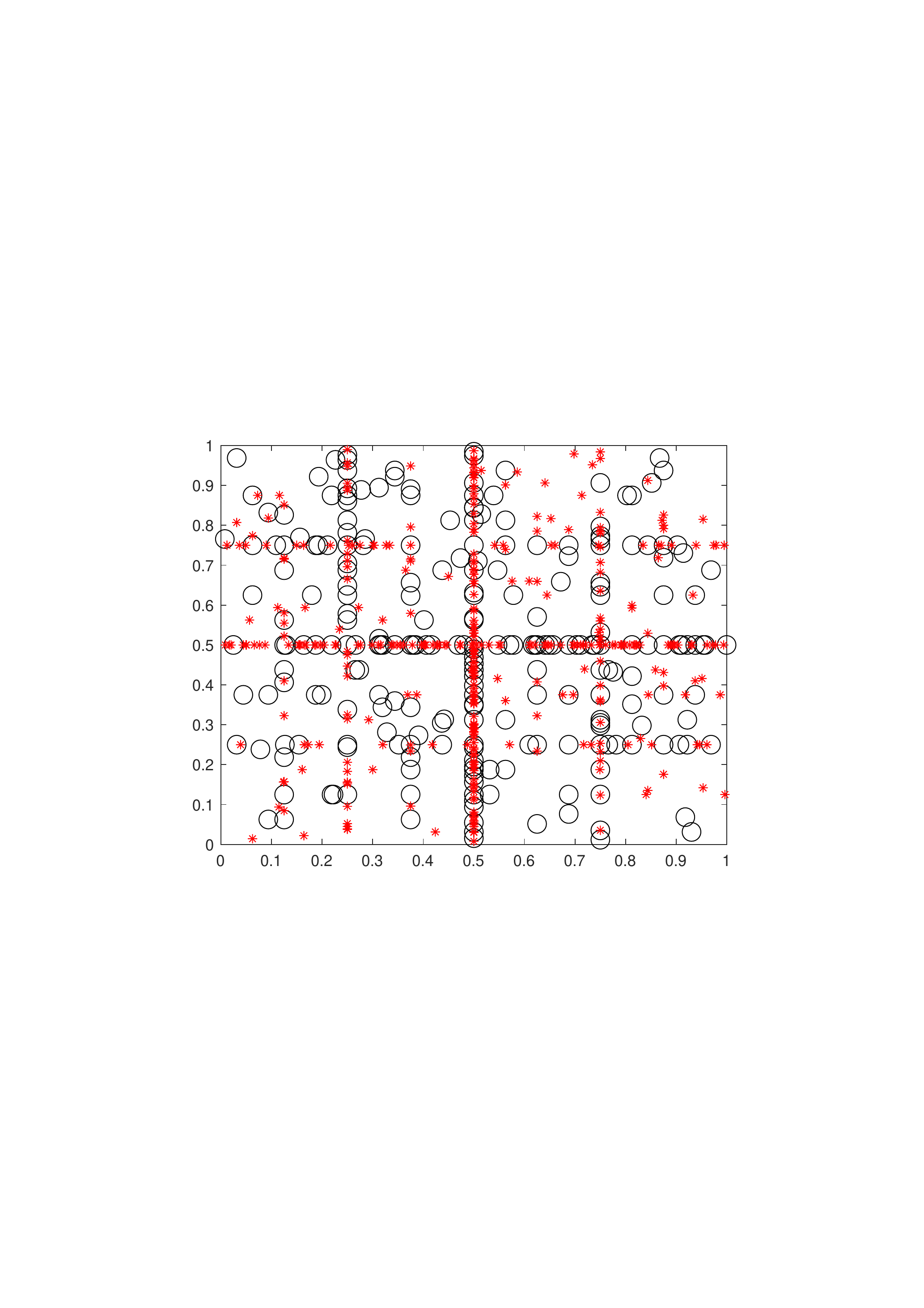}
\caption{$\theta>d_r/4$}
\label{fig:dr3_1}
\end{subfigure}
\hspace*{3cm}
\begin{subfigure}[t]{0.225\textwidth}
\includegraphics[scale=0.45,trim={3cm 9.5cm 0cm 10cm}] {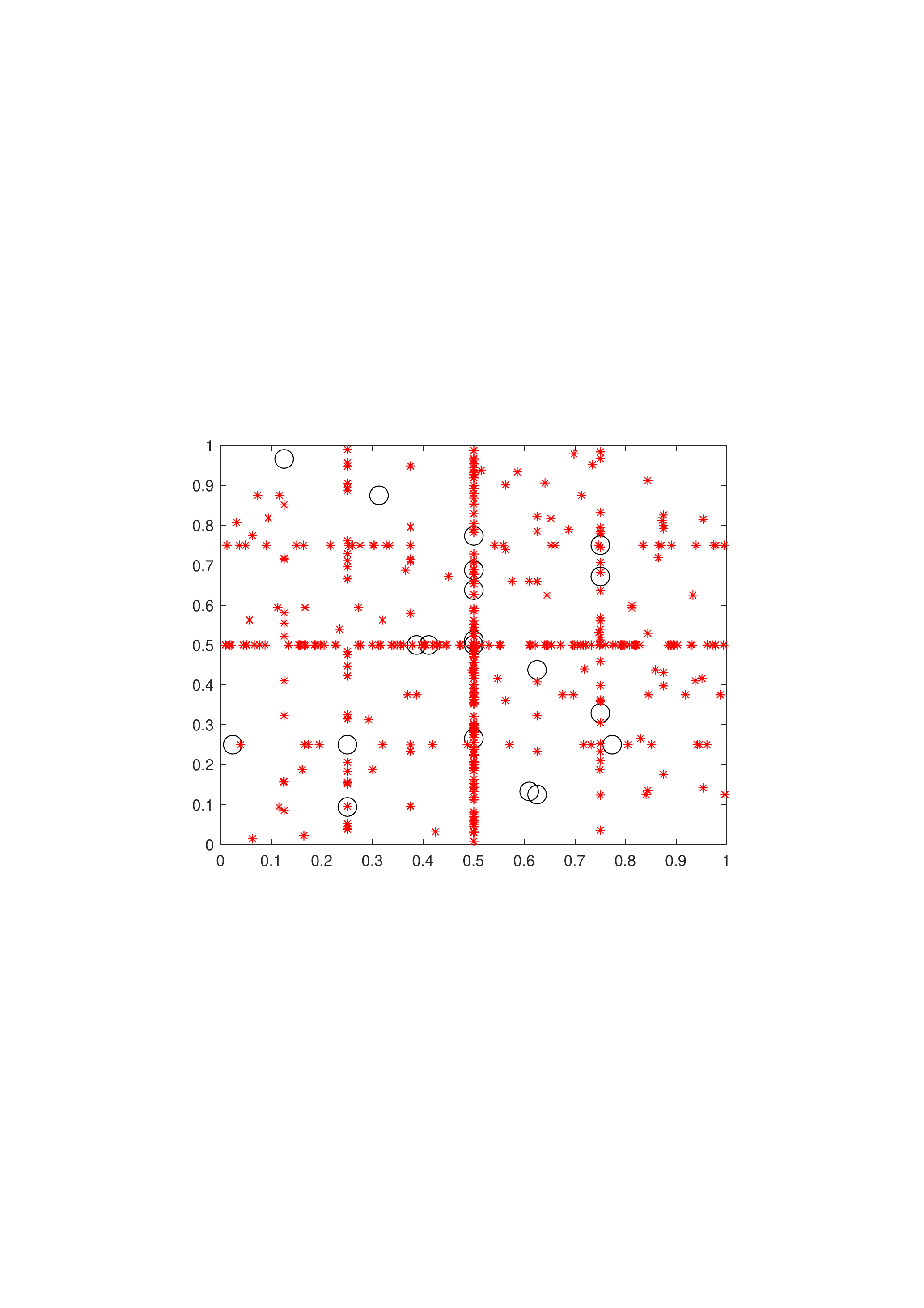}
\vspace*{-0.35cm}
\caption{$\theta<d_r/4$}
\label{fig:dr3_2}
\end{subfigure}
\caption{UEs and gNBs, $d_F=3$, $d_r=3$, $n=400$}
\label{fig:vx_gNBs1}
\end{figure}
Notice how  the number of gNB falls drastically for the second regime of $\theta$. This generates, as expected, and graphically visible in Figure  \ref{fig:vx_isolated1}, numerous disconnected UEs. For this regime, the number of isolated UEs is overwhelming, as clearly shown in Figure \ref{fig:dr3_2_iso}.

\begin{figure}  [ht]
\hspace*{2cm}
\begin{subfigure}[t]{0.225\textwidth}
\includegraphics[scale=0.45,trim={3cm 9.5cm 0cm 10cm}]{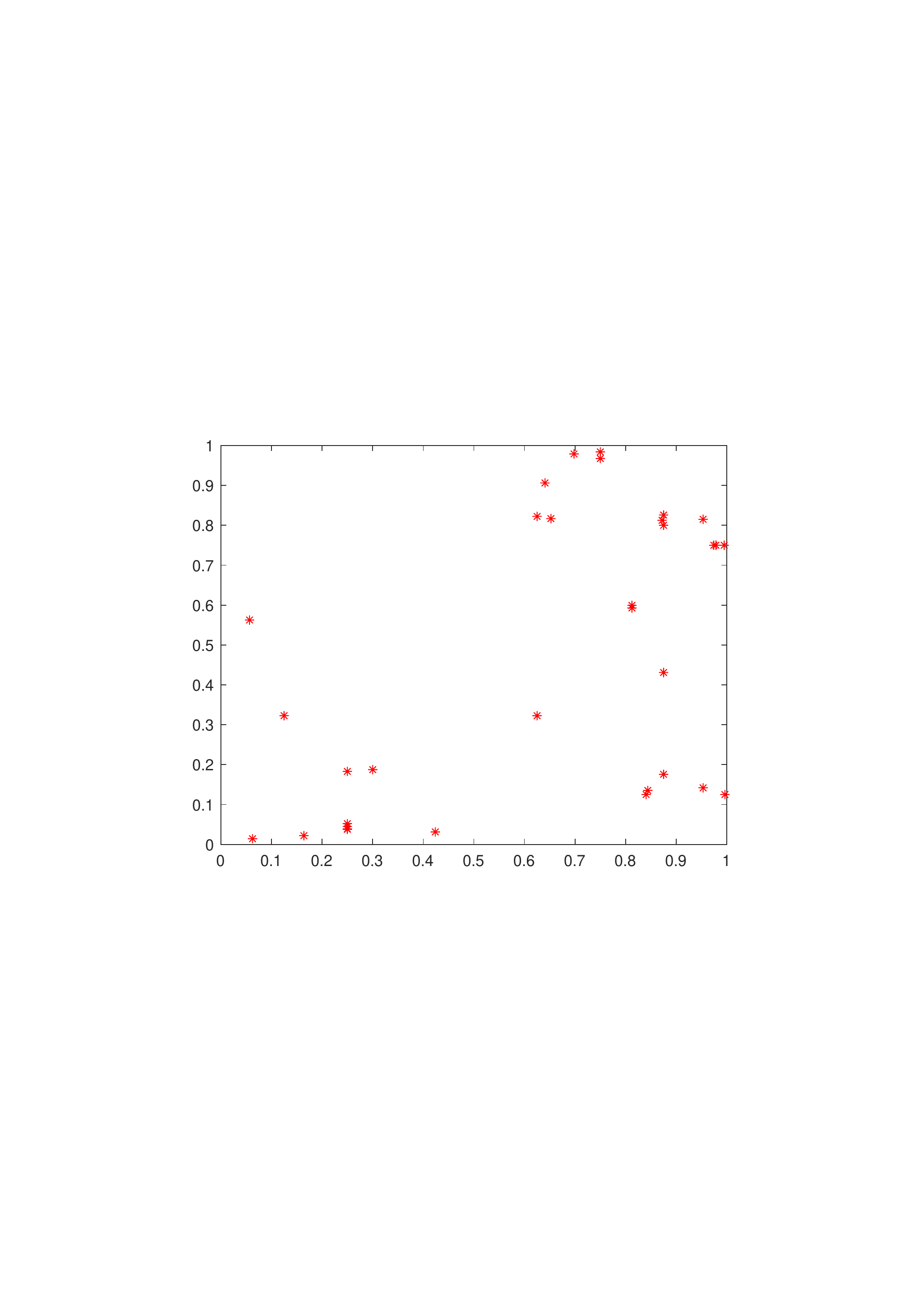}
\caption{$\theta>d_r/4$}
\label{fig:dr3_1_iso}
\end{subfigure}
\hspace*{3cm}
\begin{subfigure}[t]{0.225\textwidth}
\includegraphics[scale=0.45,trim={3cm 9.5cm 0cm 10cm}] {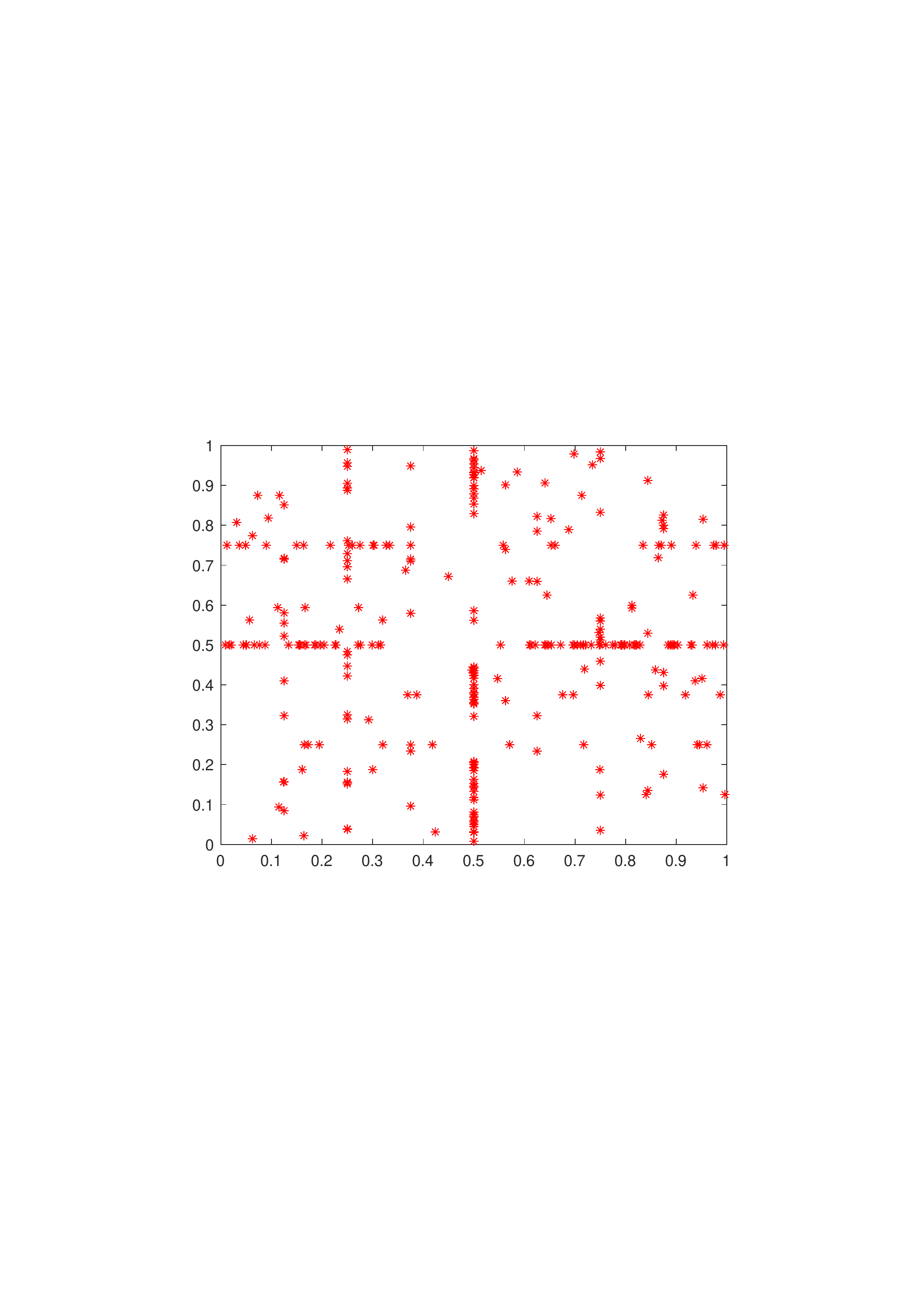}
\vspace*{-0.35cm}
\caption{$\theta<d_r/4$}
\label{fig:dr3_2_iso}
\end{subfigure}
\caption{Snapshot of UEs not covered, $d_F=3$, $d_r=3$}
\label{fig:vx_isolated1}
\end{figure}

We now look at what happens for a higher fractal dimension of the fixed telecommunication infrastructures. 
Figure \ref{fig:dr5_1} shows a snapshot of a network with $n=400$ vehicular UEs (in red $*$) and $d_F=3$ and gNBs with $d_r=5.5$ and $\rho=n$ (in black circles). This is here in the first regime of $\theta$, $\theta>d_r/4$ while Figure \ref{fig:dr5_2} displays, for the same $d_F$, $d_r$ and $n$, the second regime of $\theta$, $\theta<d_r/4$, more precisely, in this case, $\theta=1/2$.

\begin{figure}  [ht]
\hspace*{2cm}
\begin{subfigure}[t]{0.225\textwidth}
\includegraphics[scale=0.45,trim={3cm 9.5cm 0cm 10cm}]{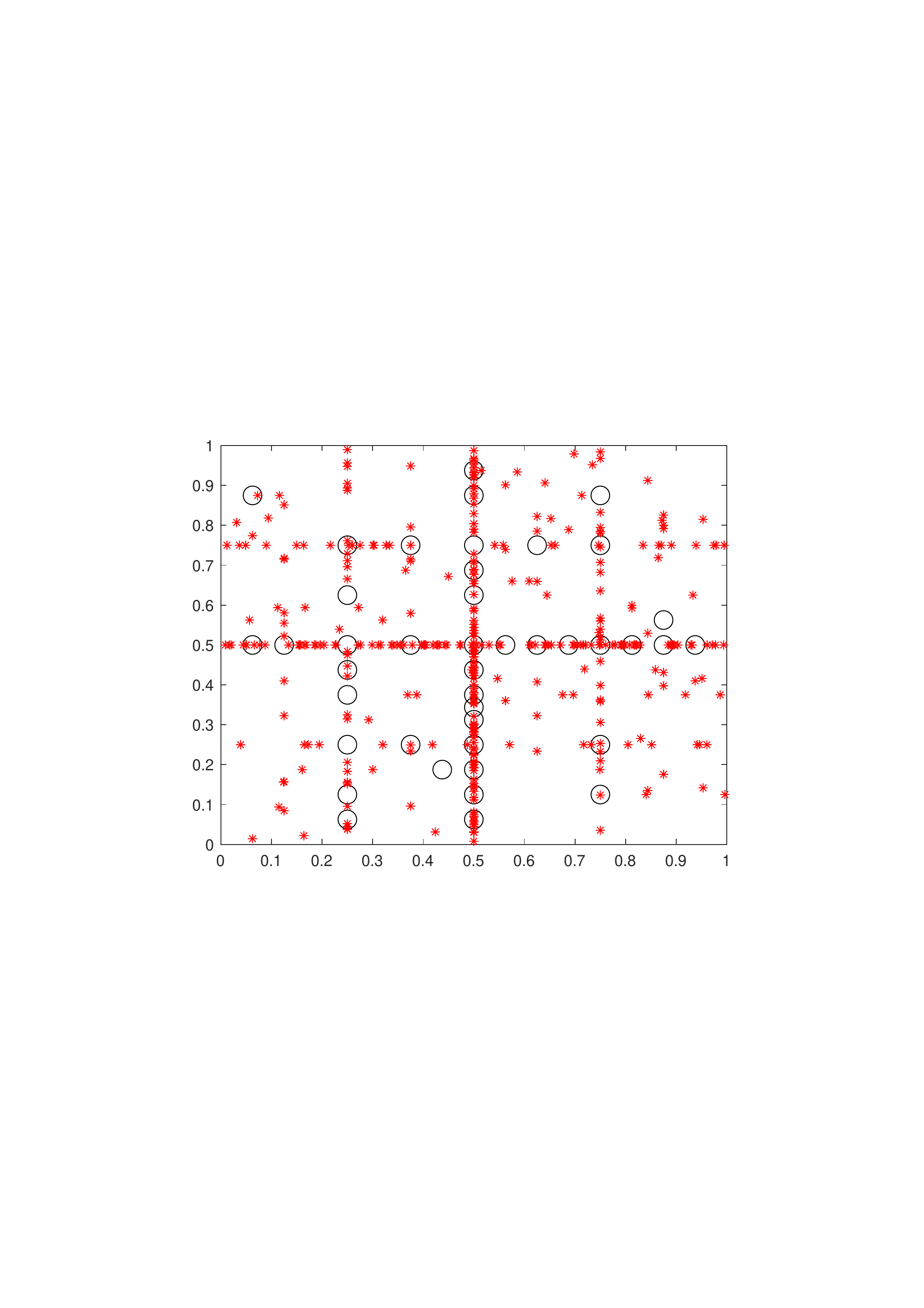}
\caption{$\theta>d_r/4$}
\label{fig:dr5_1}
\end{subfigure}
\hspace*{3cm}
\begin{subfigure}[t]{0.225\textwidth}
\includegraphics[scale=0.45,trim={3cm 9.5cm 0cm 10cm}] {_df_3_dr5_5_case_1}
\vspace*{-0.35cm}
\caption{$\theta<d_r/4$}
\label{fig:dr5_2}
\end{subfigure}
\caption{UEs and gNBs, $d_F=3$,  $d_r=5.5$, $n=400$ }
\label{fig:vx_gNBs2}
\end{figure}
Similarly to Figure \ref{fig:vx_isolated1}, Figure \ref{vx_isolated2} shows a snapshot of the isolated UEs for the two regimes of $\theta$ for $d_r=5.5$. Notice that the number of isolated nodes is significantly higher for a large fractal dimension of the eMBB infrastructure, even for the first regime of $\theta$. 

\begin{figure}  [ht]
\hspace*{2cm}
\begin{subfigure}[t]{0.225\textwidth}
\includegraphics[scale=0.45,trim={3cm 9.5cm 0cm 10cm}]{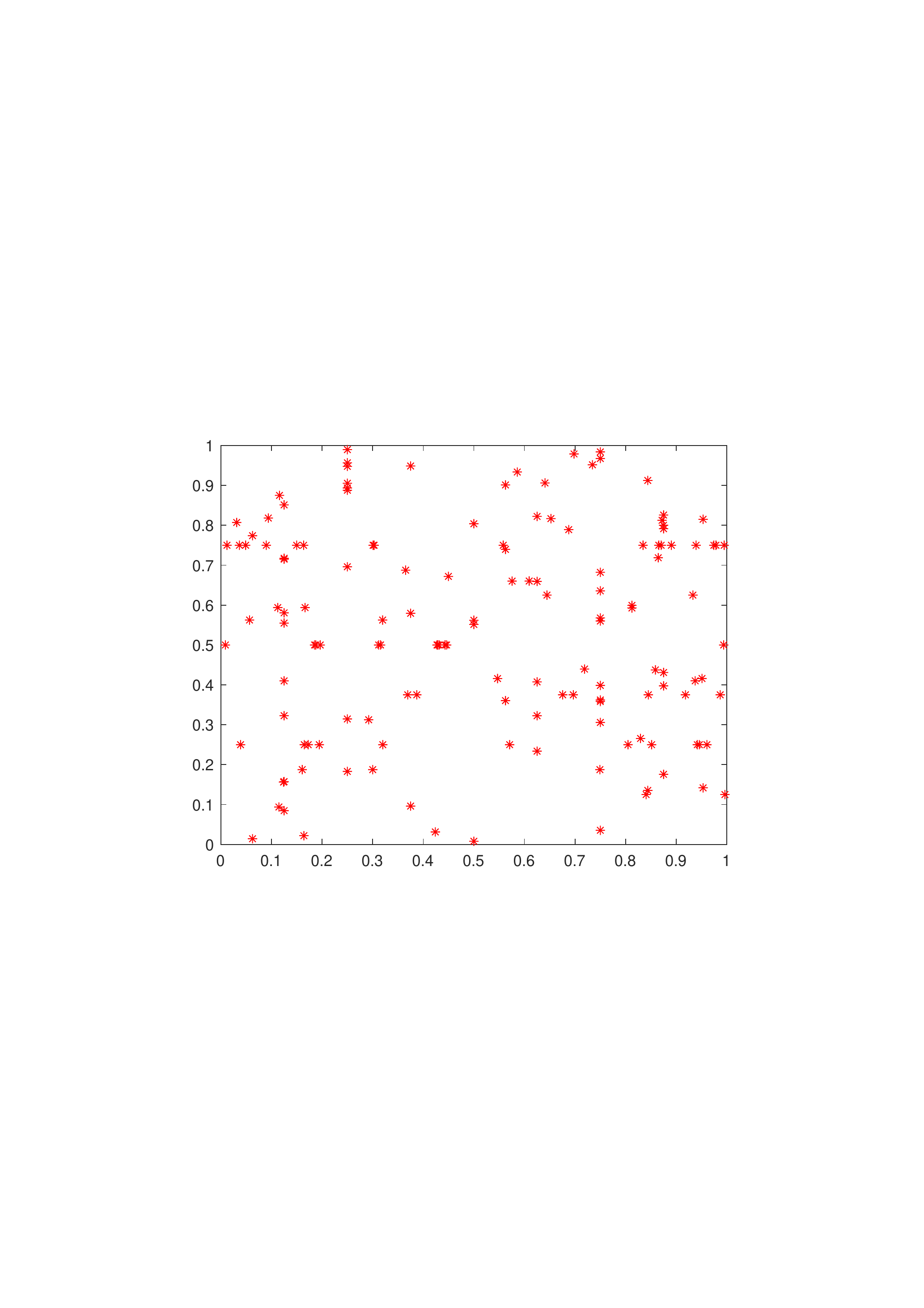}
\caption{$\theta>d_r/4$}
\label{fig:dr5_1_iso}
\end{subfigure}
\hspace*{3cm}
\begin{subfigure}[t]{0.225\textwidth}
\includegraphics[scale=0.45,trim={3cm 9.5cm 0cm 10cm}] {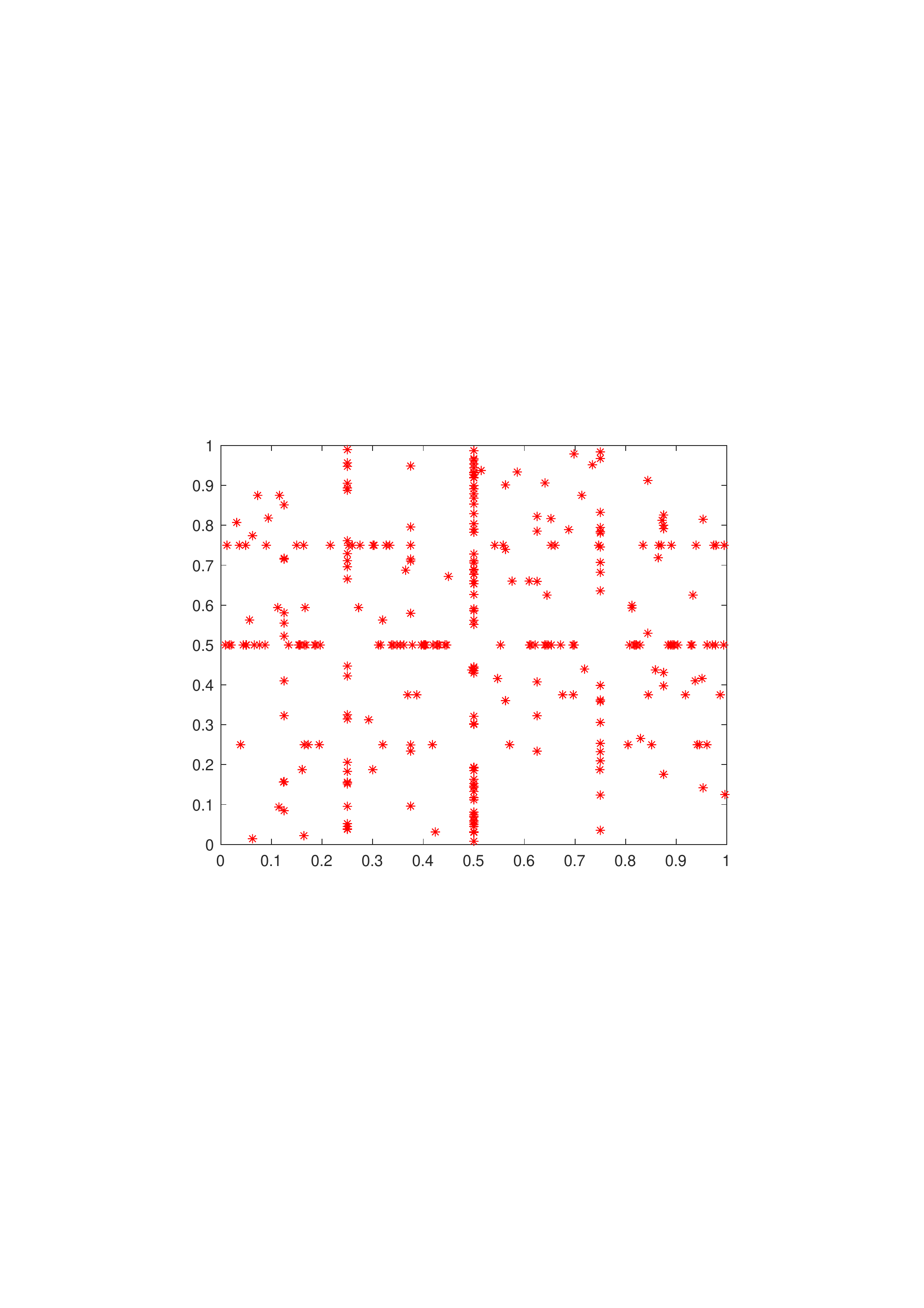}
\vspace*{-0.35cm}
\caption{$\theta<d_r/4$}
\label{fig:dr5_2_iso}
\end{subfigure}
\caption{Snapshot of UEs not covered, $d_F=3$, $d_r=5.5$}
\label{vx_isolated2}
\end{figure}

Let us now look at the validation of Theorem~\ref{without_drones} on the number of isolated nodes, this is an important parameter estimation as it gives the operator insight on the requirements for dimensioning the network. Figure~\ref{fig:without_drones1} shows the number of UEs that are not covered by a gNB when we vary the total number of devices and for two values of the fractal dimension of the gNBs: $d_r=3$ and $d_r=4$. In both cases, the fractal dimension of the nodes is $d_F=3$ and $n=\rho$, therefore $\theta=1$. 

\begin{figure}  [ht]
\hspace*{2cm}
\begin{subfigure}[t]{0.225\textwidth}
\includegraphics[scale=0.45,trim={3cm 9.5cm 0cm 10cm}]{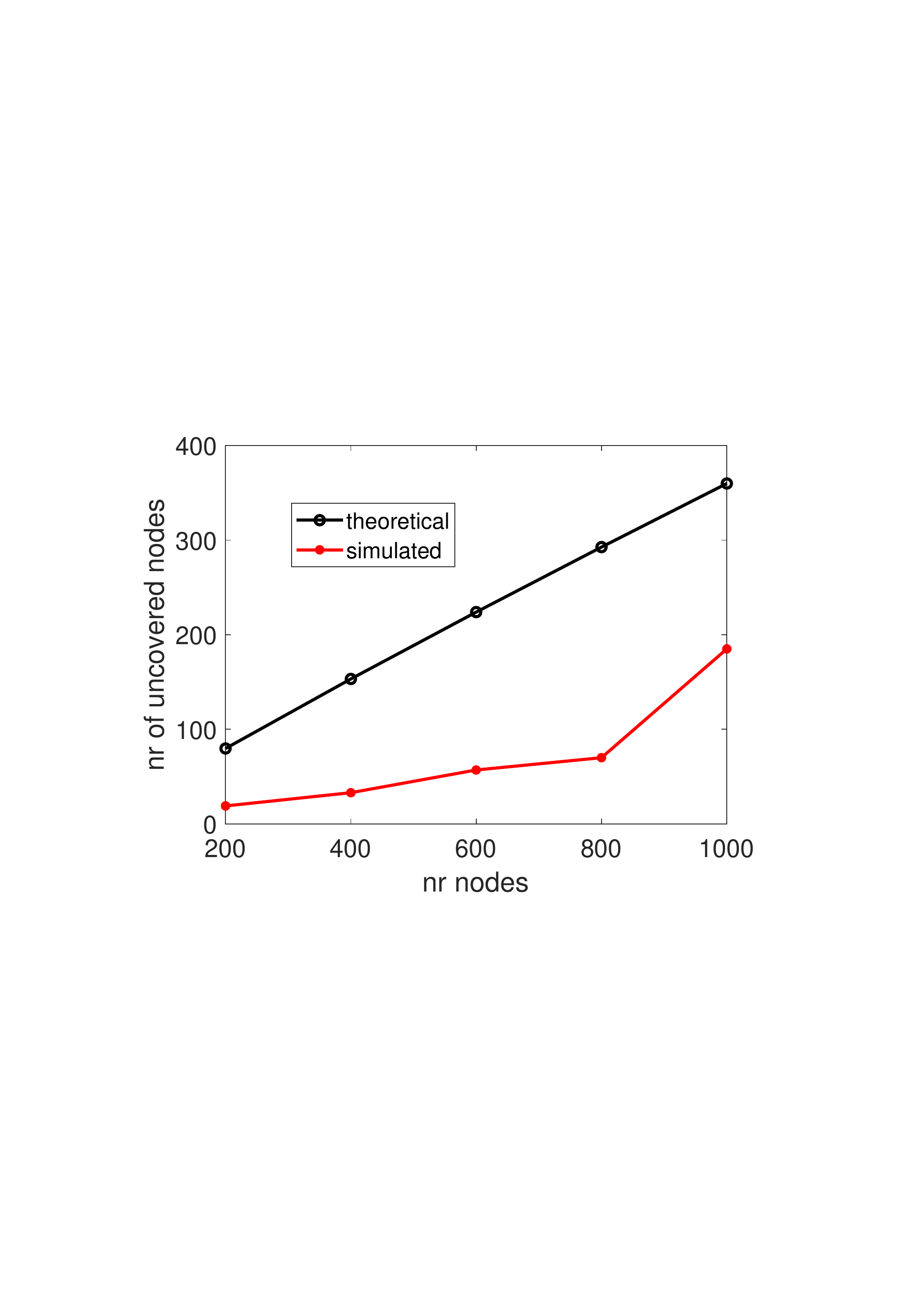}
\caption{}
\label{fig:dr3}
\end{subfigure}
\hspace*{3cm}
\begin{subfigure}[t]{0.225\textwidth}
\includegraphics[scale=0.45,trim={3cm 9.5cm 0cm 10cm}] {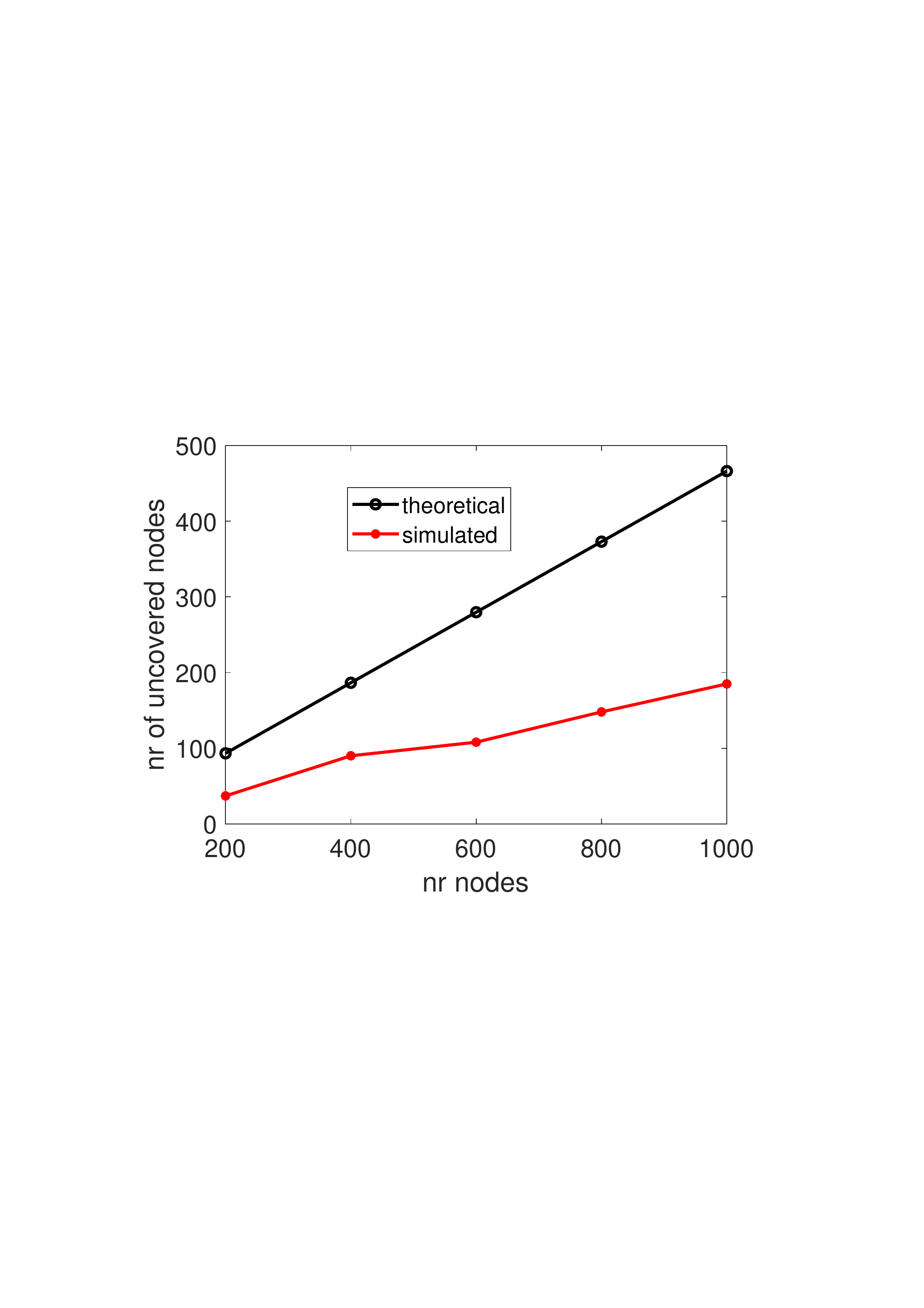}
\vspace*{-0.35cm}
\caption{}
\label{fig:dr4}
\end{subfigure}
\caption{(a) $d_r=3$, $d_F=3$; (b) $d_r=4$, $d_F=3$}
\label{fig:without_drones1}
\end{figure}
Notice that the bound we provided concurs with the simulation results. For the extension of this work and to provide an insight for the potential users of the tools in this work, we suggest using the expression in (\ref{tight}) for a tighter bound or the expression in (\ref{bound1}) of \ref{without_drones} if a close expression is desired. For instance, in this plot, we have used the latter.

Next, for three values of fractal dimension of gNBS, $d_r=3$, $d_r=4$ and $d_r=5$ respectively, yet for a case of $\theta=1/2$, we show in Figure \ref{fig:without_drones2}, that the number of isolated nodes tends to the actual number of nodes in the network, as stated in Theorem \ref{theo:without2}.

\begin{figure}[ht]\centering
\includegraphics[scale=0.45,trim={0cm 9.5cm 0cm 10cm}]{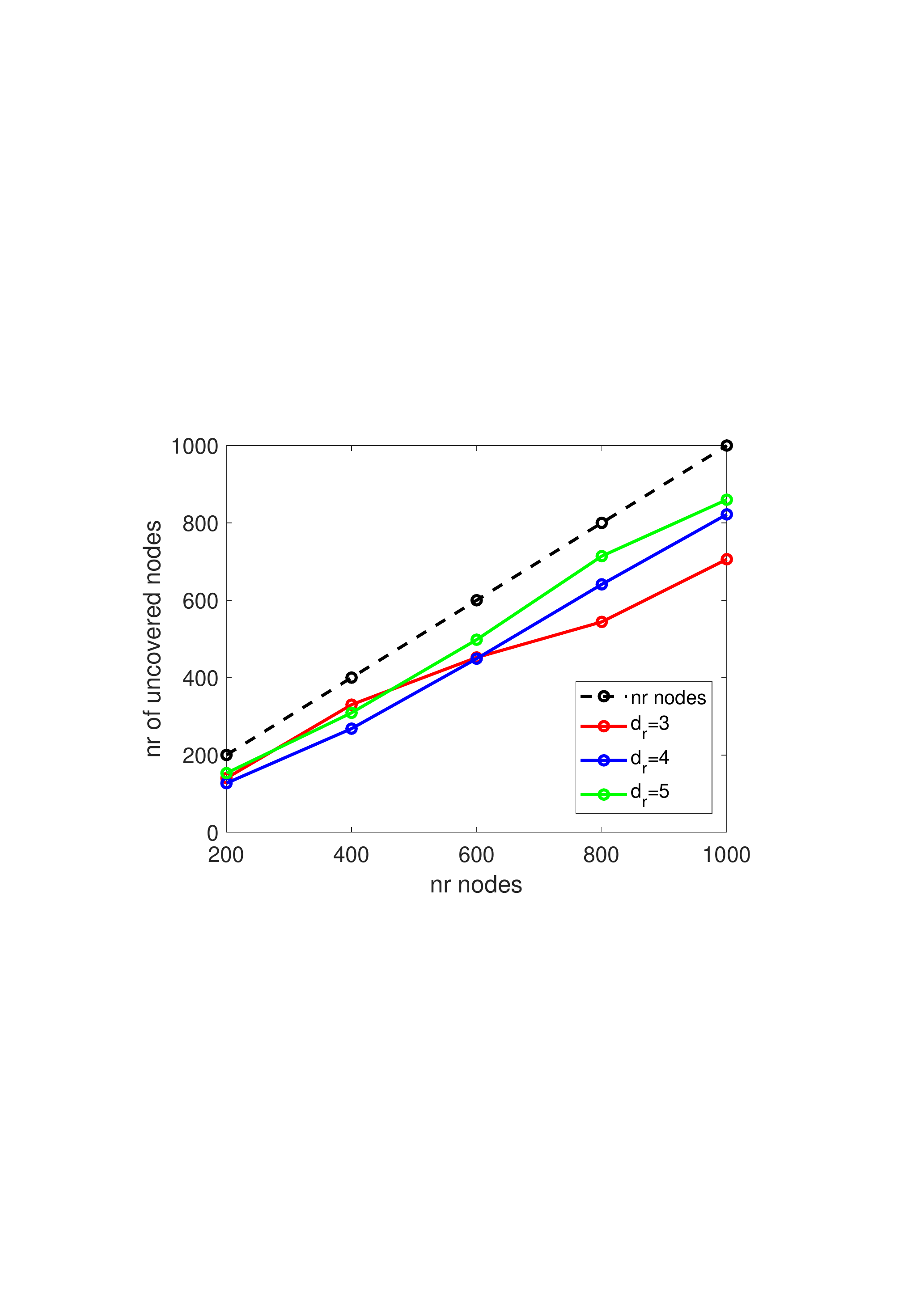}
\caption{Proportion of isolated nodes for $\theta<d_r/4$}
\label{fig:without_drones2}
\end{figure}

This confirms again that, in the case when $\theta<d_r/4$, the eMBB infrastructure alone cannot provide the required connectivity and consequently the number of disconnected nodes is overwhelming.

Figure \ref{with_drones1} illustrates the result proved in Theorem \ref{theo:with_drones}: the number of drones required to ensure connectivity for the isolated UEs (when $\theta>d_r/4$) behaves asymptotically like the number of isolated nodes.  

\begin{figure}
\hspace*{2cm}
\begin{subfigure}[t]{0.225\textwidth}
\includegraphics[scale=0.45,trim={3cm 9.5cm 0cm 10cm}]{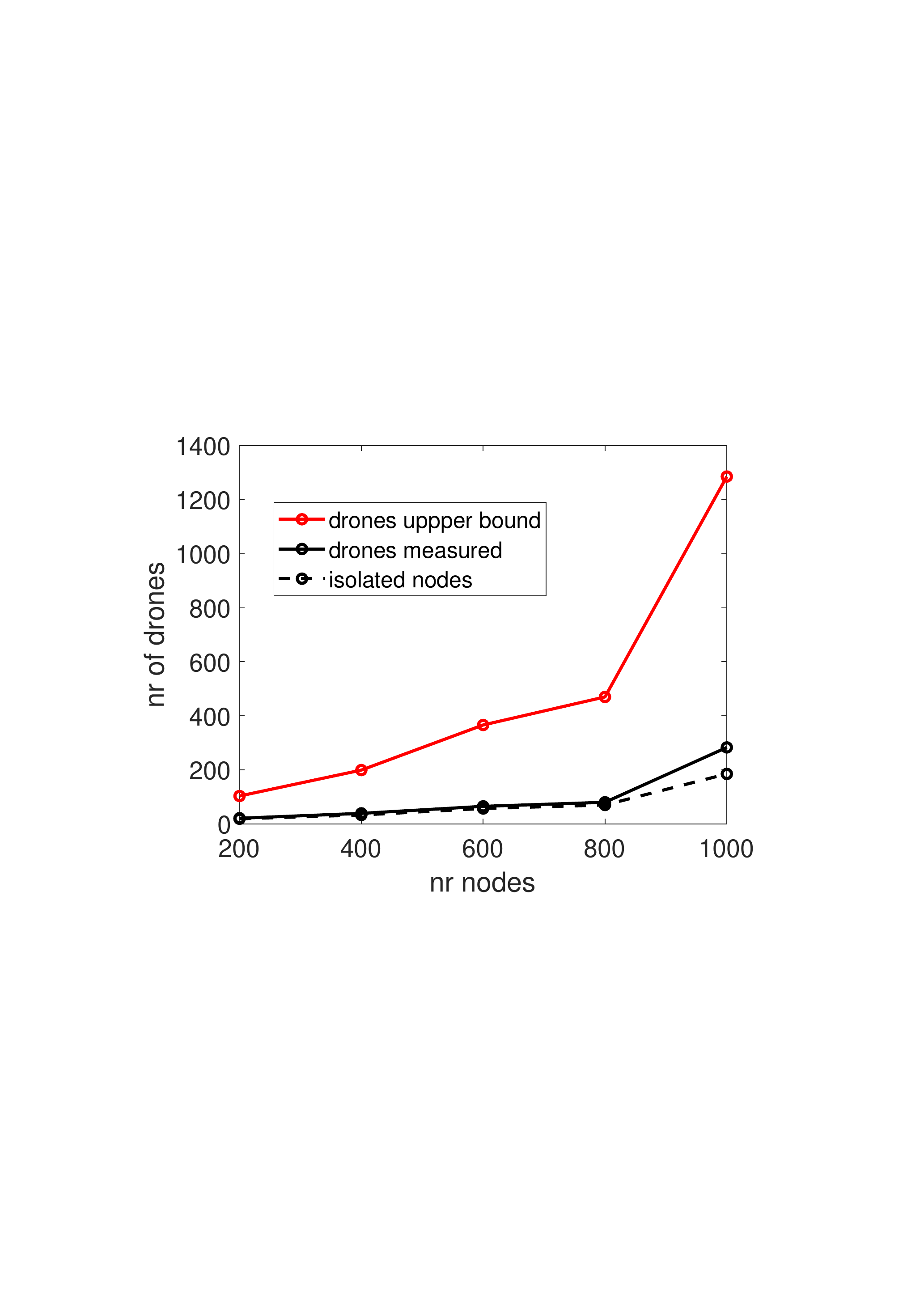}
\caption{$d_r=3$, $d_F=3$;}
\end{subfigure}
\hspace*{3cm}
\begin{subfigure}[t]{0.225\textwidth}
\includegraphics[scale=0.45,trim={3cm 9.5cm 0cm 10cm}] {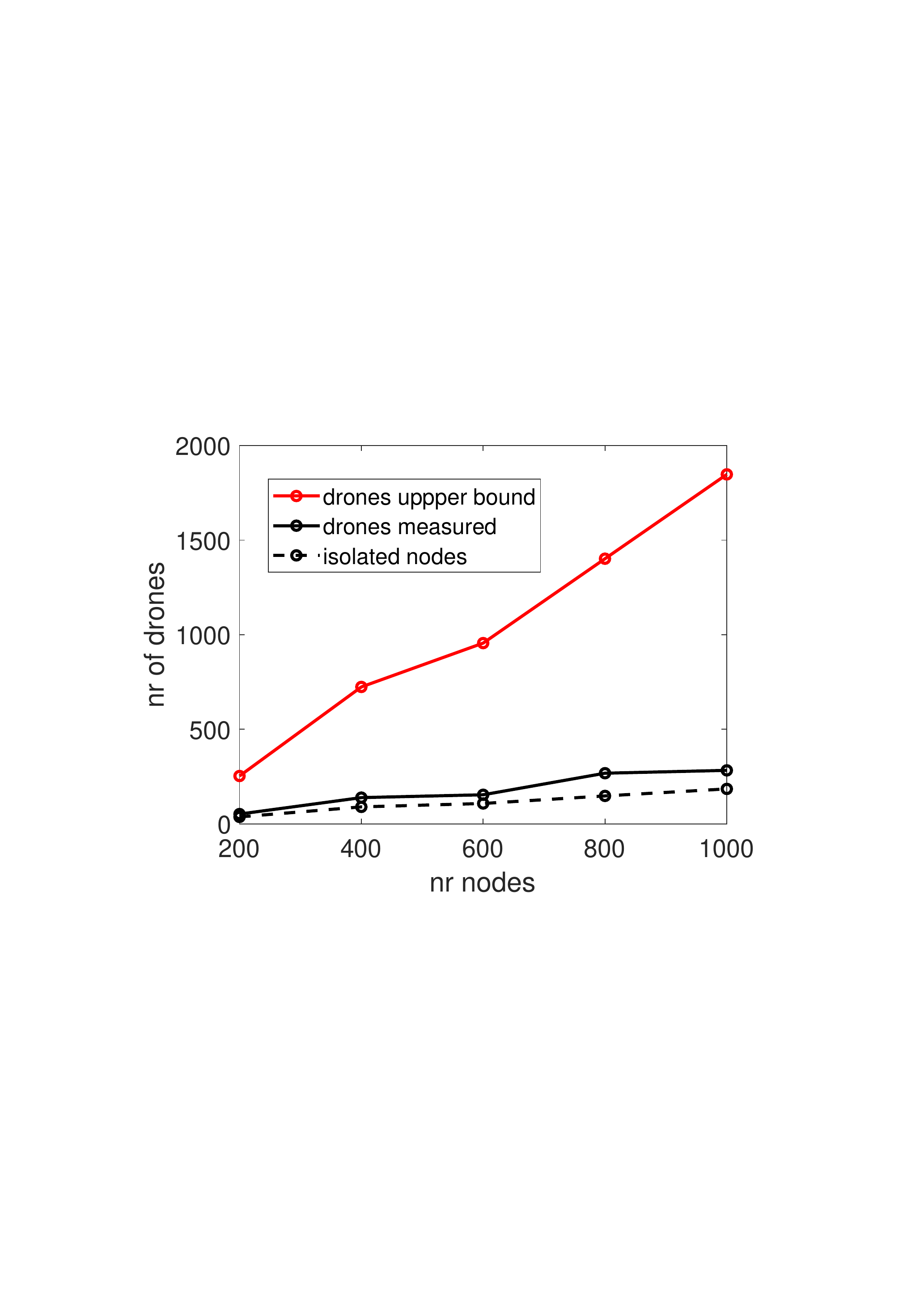}
\vspace*{-0.35cm}
\caption{$d_r=4$, $d_F=3$}
\end{subfigure}
\caption{Number of drones for connectivity }
\label{with_drones1}
\end{figure}

\subsection{Location and Size of Drone Garages.}

We run some simulations in order to validate the physical distance requirements between mobile nodes, relays and garages. In order to have realistic figures we have assumed that the practical range of emission with mmWave is of order 100m, and that the density of mobile nodes is around 1,000 per square kilometer. This would lead to $R_n=\sqrt{10/n}$, equivalent to a square of 1 square kilometer has a side length of 10 times the radio range and contains 1,000 mobile nodes. We fix $d_F=3$ and $d_r\approx 2.3$ (with $p_r=0.1$).

We first compute the distribution of the distance of the mobile nodes to their closest base stations expressed in hop count in~Figure~\ref{fig_coverageRelay10}. A hop count of one means that the mobile nodes is in direct range to a base station. A hop count of $k$ ($k$ integer) means that the mobile nodes would need $k-1$ drones to let it connected to its closest base station. We display the distribution for various values of $n$ (green: $n=5,000$; blue: $n=10,000$; red: $n=20,000$; brown: $n=40,000$; black $n=80,000$).

\begin{figure}
\vspace{3cm}
\begin{subfigure}[t]{0.225\textwidth}
\includegraphics[scale=0.4,trim={3cm 9.5cm 0cm 10cm}]{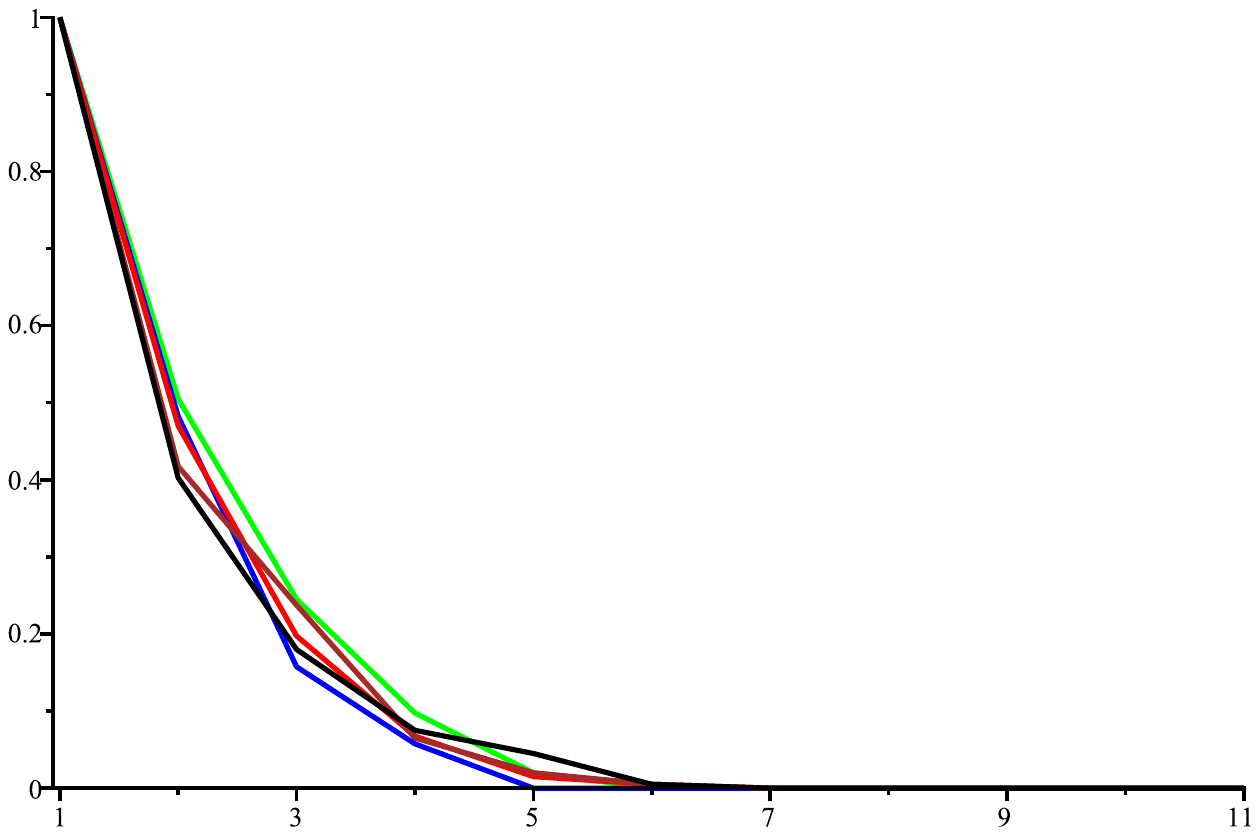}
\vspace*{-3cm}
\caption{$\theta=d_r/4$;}
\end{subfigure}
\hspace*{2cm}
\begin{subfigure}[t]{0.225\textwidth}
\includegraphics[scale=0.4,trim={3cm 9.5cm 0cm 10cm}] {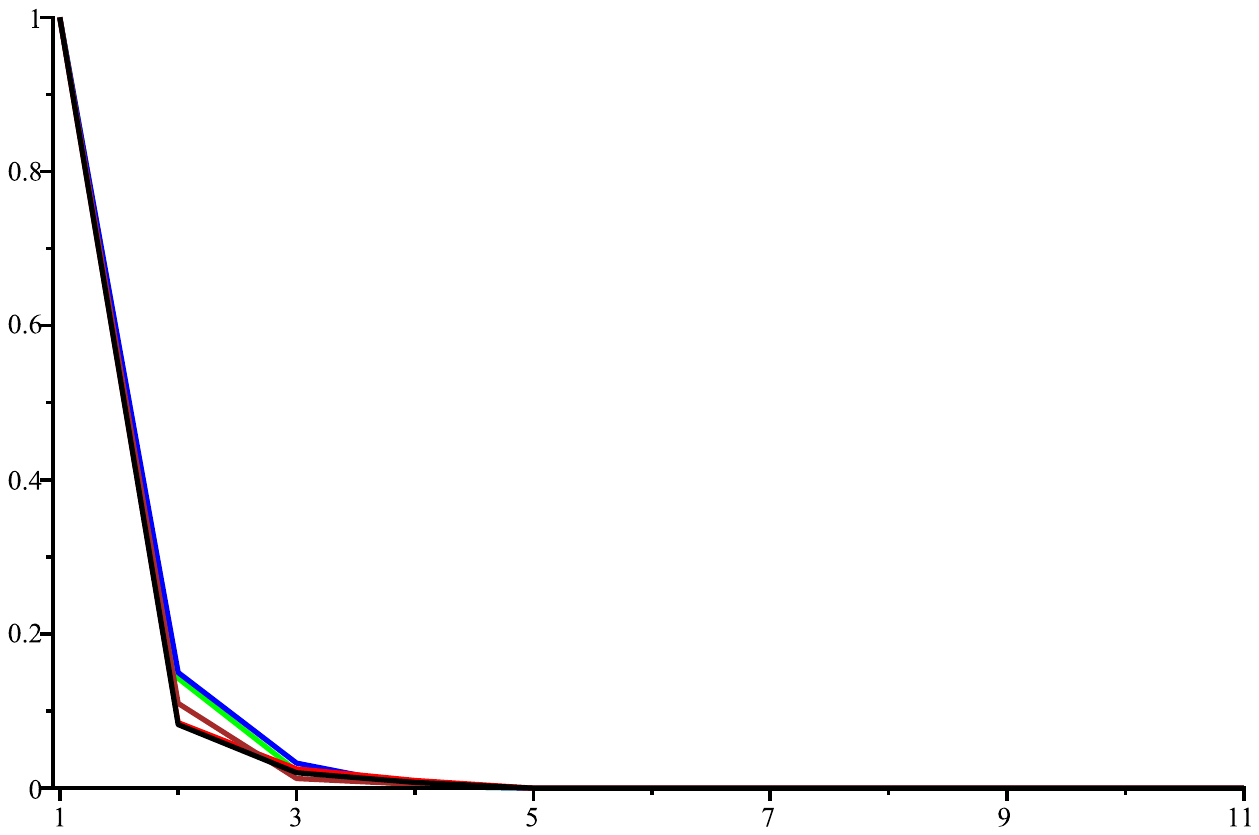}
\vspace*{-3cm}
\caption{$\theta=1.2d_r/4$;}
\end{subfigure}
\hspace*{2cm}
\begin{subfigure}[t]{0.225\textwidth}
\includegraphics[scale=0.4,trim={3cm 9.5cm 0cm 10cm}] {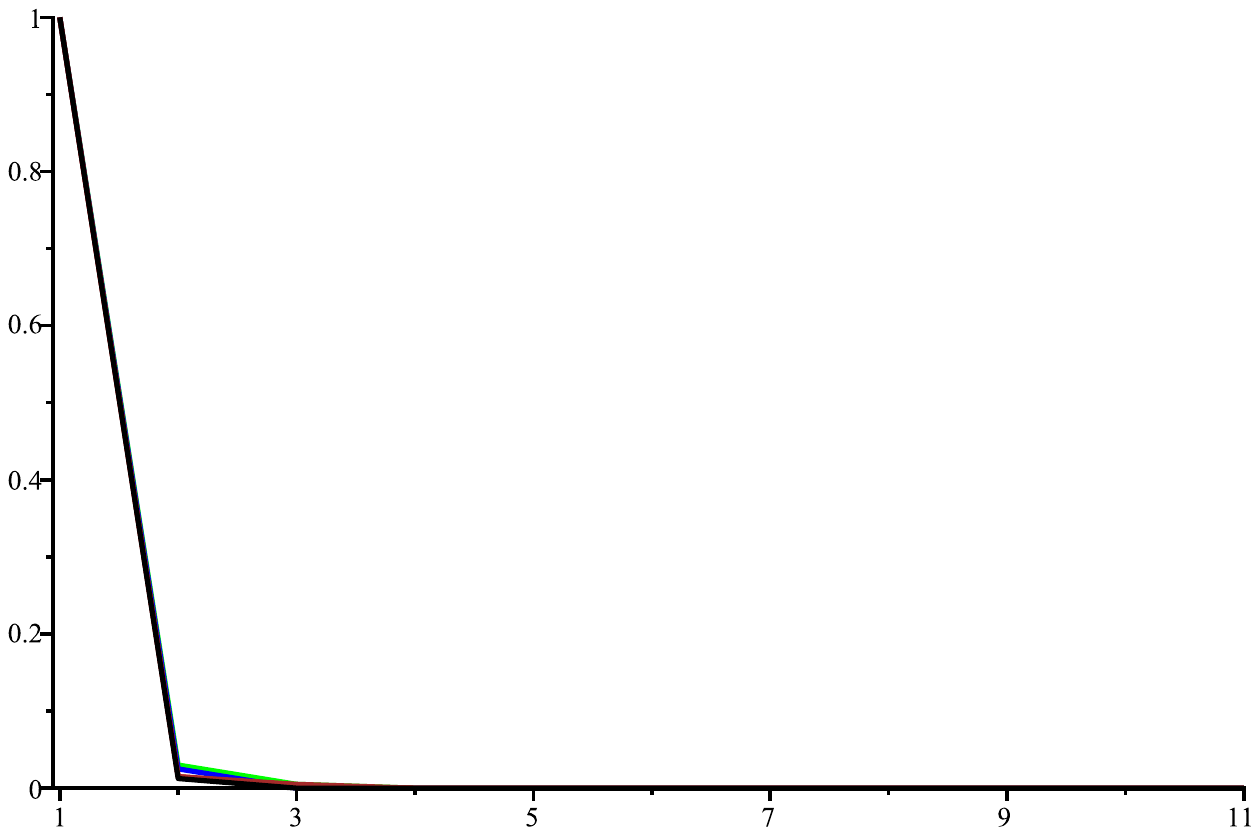}
\vspace*{-3cm}
\caption{$\theta=1.4d_r/4$;}
\end{subfigure}
\vspace{-2cm}
\caption{distribution of distance to base stations in hop count}
\label{fig_coverageRelay10}
\end{figure}

Secondly, in~Figure~\ref{fig_meanDrones10}, we compute the average number of isolated nodes (in blue) ({\it i.e.} the mobile nodes not at a direct range to a base station), and at the same time the average number of drones (in brown) to connect them to the closest base station. The two numbers are given as a fraction of the total number of mobile nodes present in the map. 

\begin{figure}
\vspace{3cm}
\begin{subfigure}[t]{0.225\textwidth}
\includegraphics[scale=0.4,trim={3cm 9.5cm 0cm 10cm}]{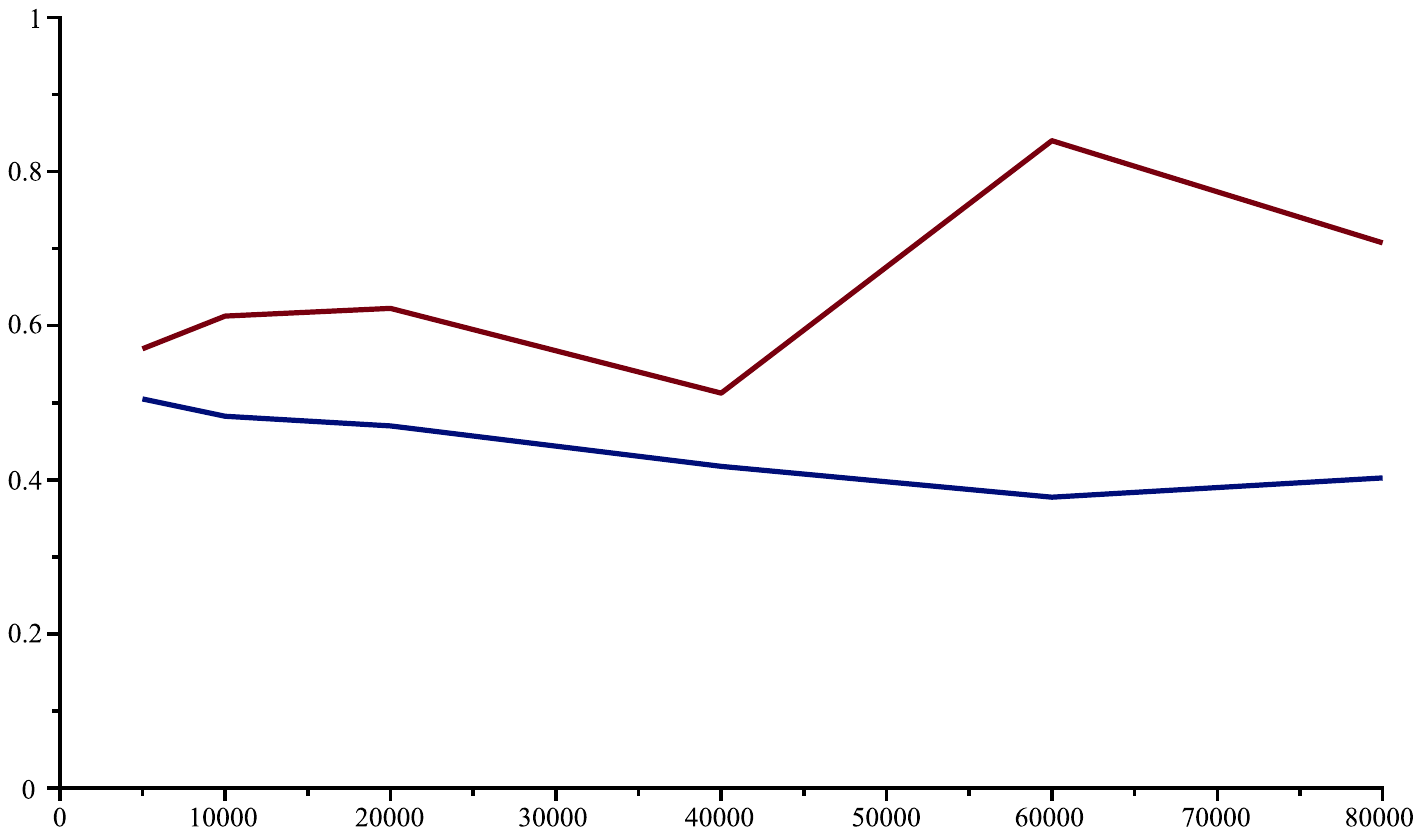}
\vspace*{-3cm}
\caption{$\theta=d_r/4$;}
\end{subfigure}
\hspace*{2cm}
\begin{subfigure}[t]{0.225\textwidth}
\includegraphics[scale=0.4,trim={3cm 9.5cm 0cm 10cm}] {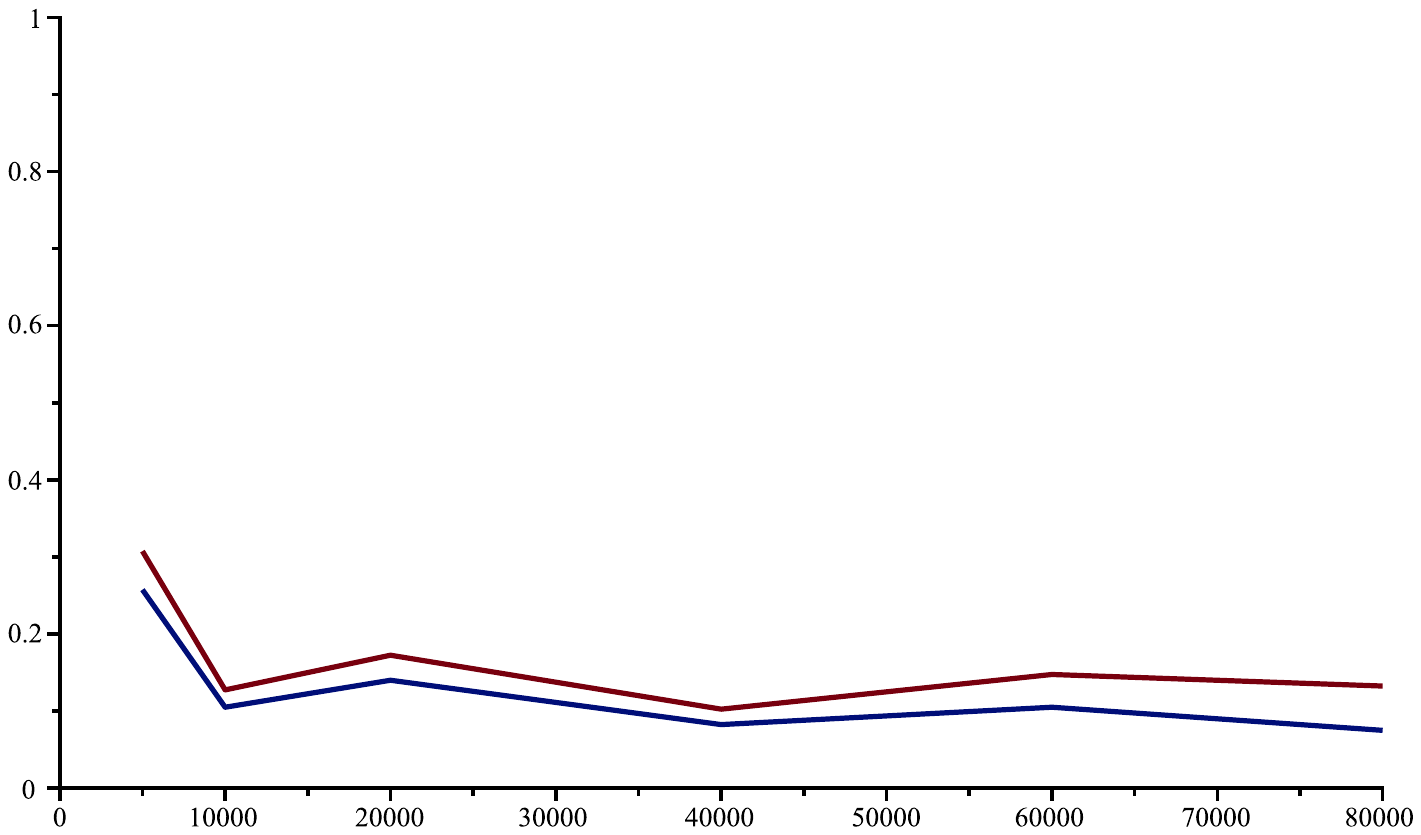}
\vspace*{-3cm}
\caption{$\theta=1.2d_r/4$;}
\end{subfigure}
\hspace*{2cm}
\begin{subfigure}[t]{0.225\textwidth}
\includegraphics[scale=0.4,trim={3cm 9.5cm 0cm 10cm}] {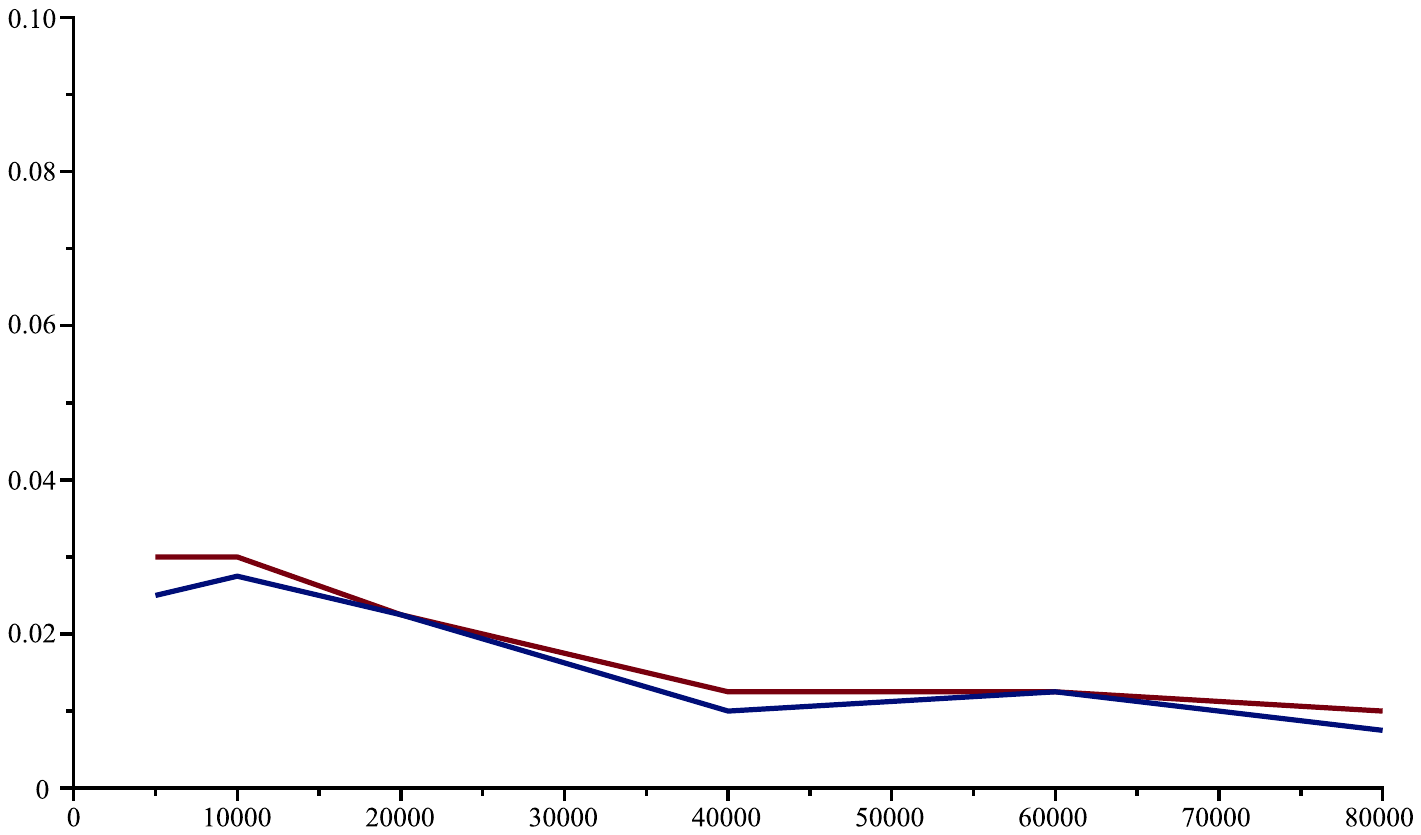}
\vspace*{-3cm}
\caption{$\theta=1.4d_r/4$ (notice the change of scale);}
\end{subfigure}
\vspace{-1cm}
\caption{proportion of isolated nodes (blue), proportion of drones (brown), versus the total number $n$ of mobile nodes}
\label{fig_meanDrones10}
\end{figure}

In Figure~\ref{fig_MapMPR} we display the map of the garage locations after reduction on base stations. The reduction is made according to various coverage radii (the parameter $R$ respectively) which varies from radio ranges 5 (500m) to 80 (8km). We can see as expected that the garage density decreases when the coverage radius increases. The parameters are $n=50,000$ and $\theta=1.2\frac{d_r}{4}$. The coverage radius impacts the delay at which the drones can move to new mobile nodes, although many of these moves could be easily predicted from the aim and trajectory of the mobile nodes. Figure~\ref{fig:MPRSize} shows the variation of size of the garage set as function of the coverage radius. The parameters are $n=50,000$; in brown $\theta=d_r/4$; in blue $\theta=1.2 d_r/4$; and in green $\theta=1.4 d_r/4$. When the coverage radius is zero, every relay is a garage and we get the initial number of relays. We notice that when the coverage radius tends to infinity the limit density of garages is not bounded and increases with the number of relays. We conjecture that it increases as the logarithm of this number. On the right subfigure, we display the size of the garage set when the city map is considered on a torus without border. In this case the garage size decreases to 1 when the coverage radius increases. Figure~\ref{fig_MapMPRtorus} gives examples of garage maps in a torus.

\begin{figure}
\vspace{5cm}
\begin{subfigure}[t]{0.225\textwidth}
\includegraphics[scale=0.55,trim={3cm 9.5cm 0cm 10cm}]{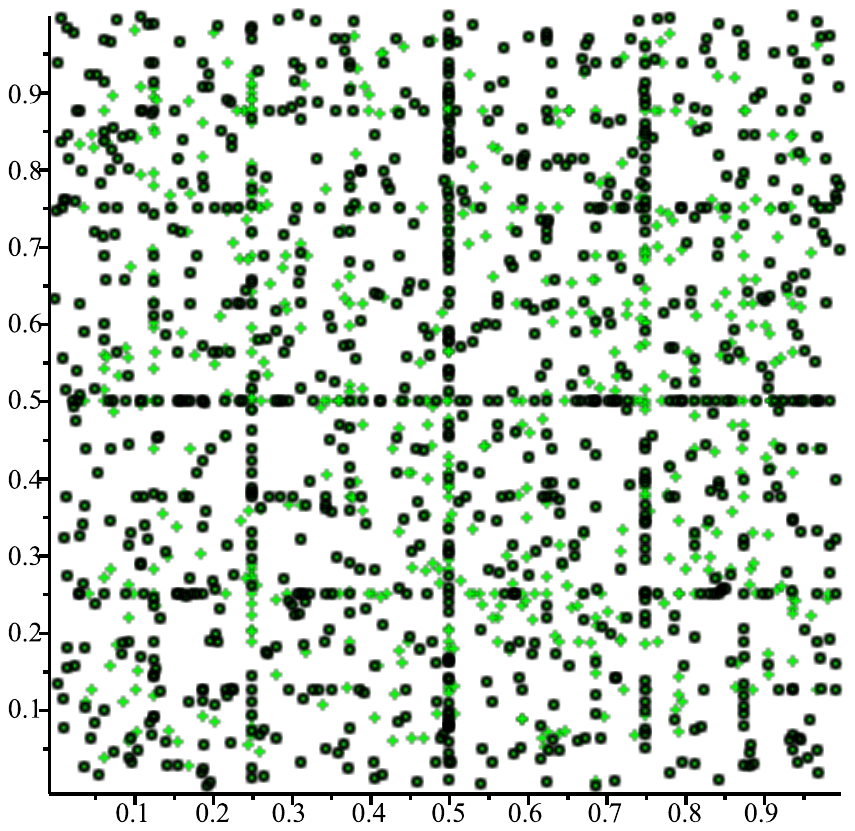}
\vspace*{-4cm}
\caption{coverage distance $5R_n$}
\end{subfigure}
\hspace*{2cm}
\begin{subfigure}[t]{0.225\textwidth}
\includegraphics[scale=0.55,trim={3cm 9.5cm 0cm 10cm}] {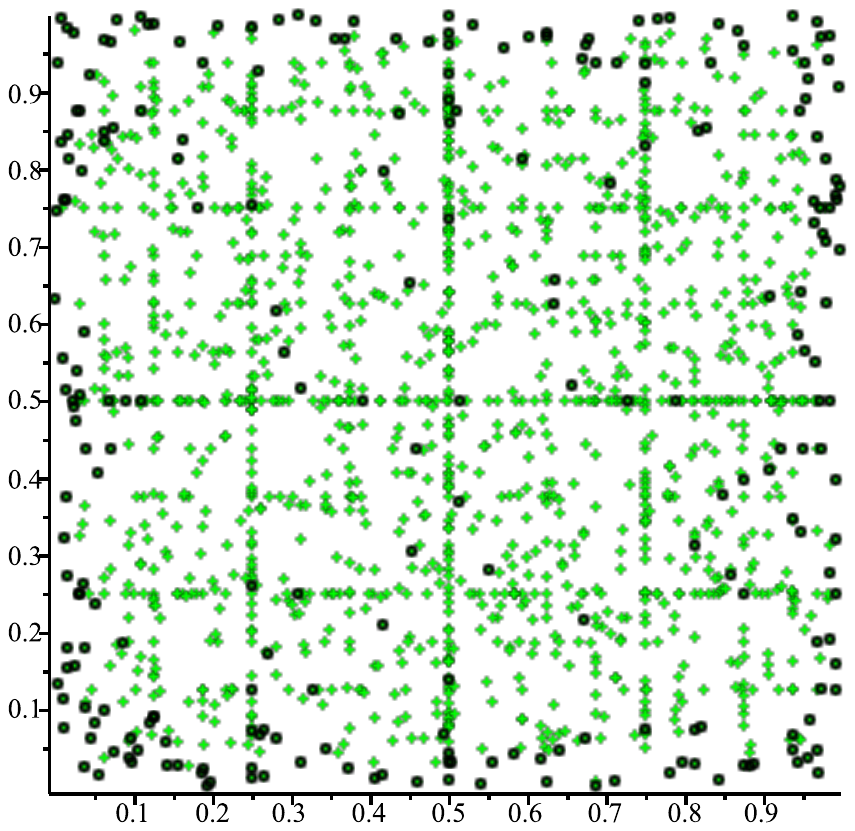}
\vspace*{-4cm}
\caption{coverage distance $20R_n$}
\end{subfigure}
\hspace*{2cm}
\begin{subfigure}[t]{0.225\textwidth}
\includegraphics[scale=0.55,trim={3cm 9.5cm 0cm 10cm}] {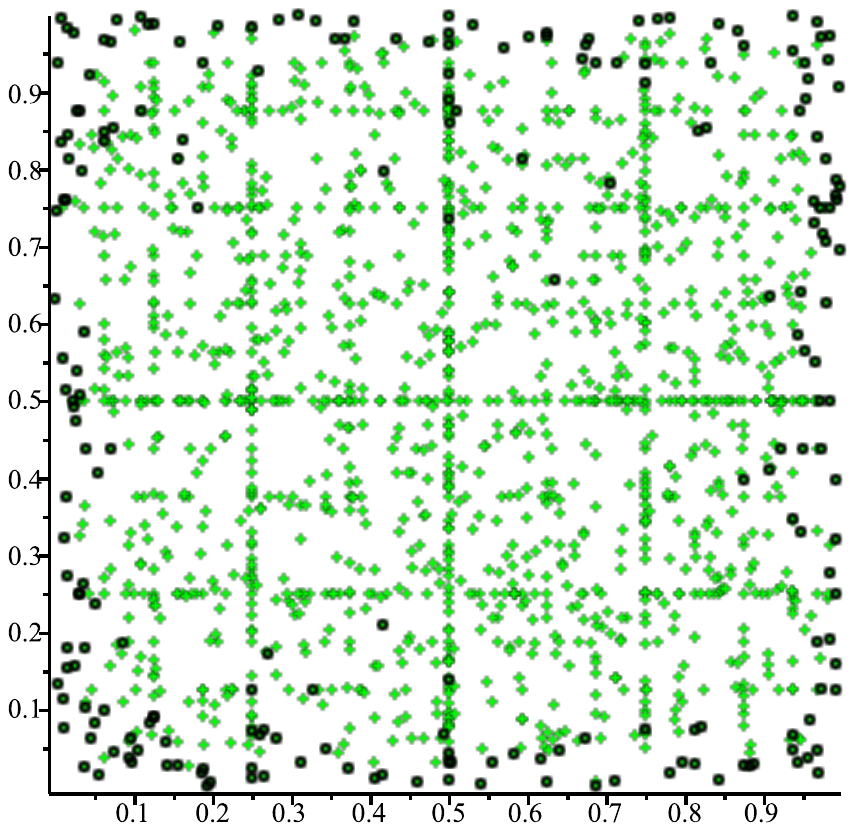}
\vspace*{-4cm}
\caption{coverage distance $80R_n$}
\end{subfigure}
\vspace{-2cm}
\caption{Map of drone garages (black circles), among the base stations (green crosses), for different coverage distances.}
\label{fig_MapMPR}
\end{figure}

\begin{figure}
\vspace{5cm}
\begin{subfigure}[t]{0.425\textwidth}
\includegraphics[scale=0.45,trim={5cm 9.5cm 0cm 10cm}]{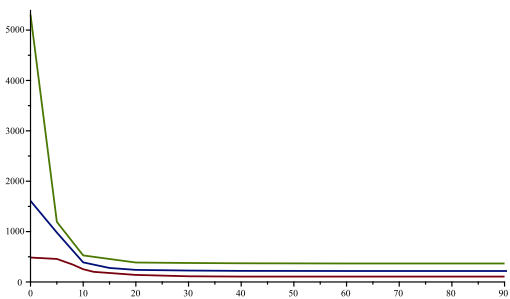}
\vspace*{-4cm}
\caption{Garage number in a map with border}
\end{subfigure}
\hspace*{2cm}
\begin{subfigure}[t]{0.425\textwidth}
\includegraphics[scale=0.45,trim={4cm 9.5cm 0cm 10cm}]{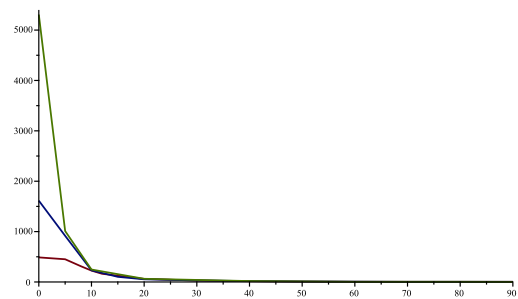}
\vspace*{-4cm}
\caption{in a Torus map with no border}
\end{subfigure}
\vspace{-2cm}
\caption{Size of the garage set as function of coverage radius, in brown $\theta=d_r/4$, in blue $\theta=1.2 d_r/4$, and in green $\theta=1.4 d_r/4$.}
    \label{fig:MPRSize}
\end{figure}

\begin{figure}
\vspace{5cm}
\begin{subfigure}[t]{0.225\textwidth}
\includegraphics[scale=0.55,trim={3cm 9.5cm 0cm 10cm}]{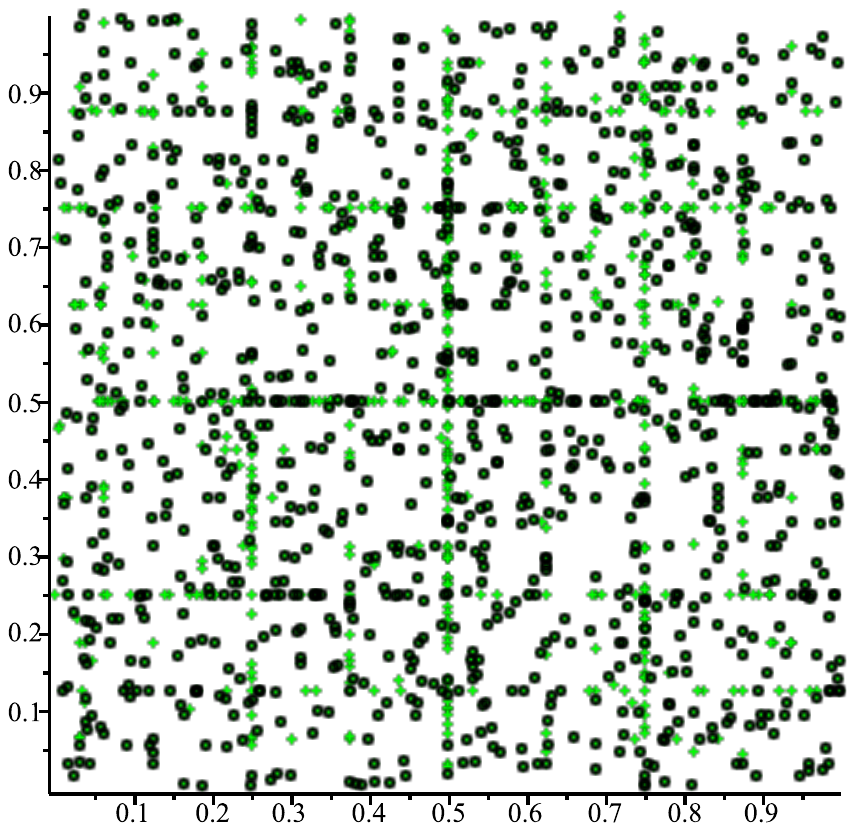}
\vspace*{-4cm}
\caption{coverage distance $5R_n$}
\end{subfigure}
\hspace*{2cm}
\begin{subfigure}[t]{0.225\textwidth}
\includegraphics[scale=0.55,trim={3cm 9.5cm 0cm 10cm}] {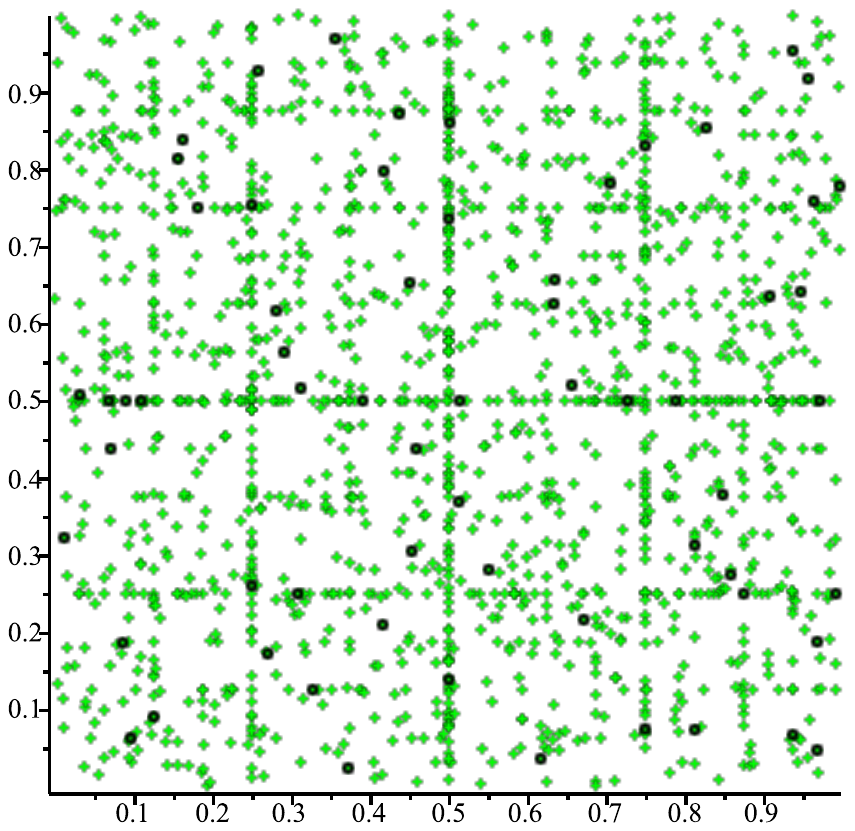}
\vspace*{-4cm}
\caption{coverage distance $20R_n$}
\end{subfigure}
\hspace*{2cm}
\begin{subfigure}[t]{0.225\textwidth}
\includegraphics[scale=0.55,trim={3cm 9.5cm 0cm 10cm}] {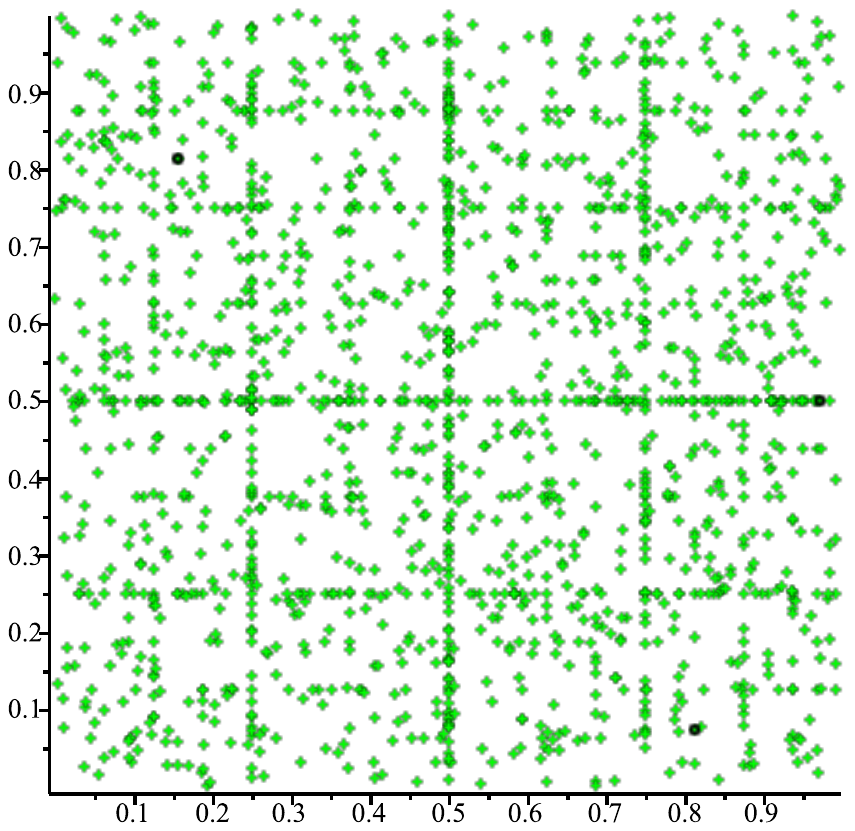}
\vspace*{-4cm}
\caption{coverage distance $80R_n$}
\end{subfigure}
\vspace{-2cm}
\caption{Map of drone garages (black circles), among the base stations (green crosses), for different coverage distances in a torus map.}
\label{fig_MapMPRtorus}
\end{figure}

Figure~\ref{fig_MPRcover} shows the distribution of distance of the mobile nodes to the closest garage. In green is for coverage radius $5R_n$, in blue for coverage radius $10R_n$, in red $20R_n$, in brown $40R_n$, in black $80R_n$. We notice that despite the coverage radius increases the typical distance to the closest garage does not grow too much in comparison because the residual density of garage prevents. Remember that the distance to the closest garage is larger than the number of drones needed to connect the mobile node to the closest relay, which is given by figure~\ref{fig_coverageRelay10}. The distance to the closest garage gives an indication on how fast drones can be moved towards new mobile nodes.

\begin{figure}
\vspace{-1cm}
\begin{subfigure}[t]{0.325\textwidth}
\includegraphics[height=14cm]{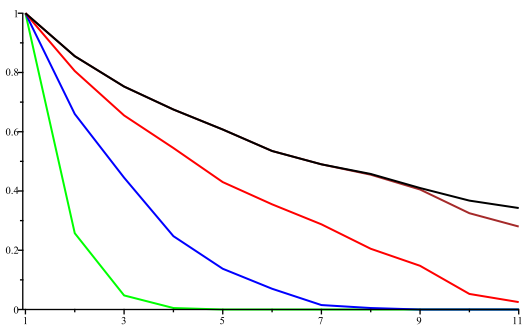}
\vspace{-9cm}
\caption{$n=50,000$, $\theta=1.2d_r/4$}
\end{subfigure}
\hspace{3cm}
\begin{subfigure}[t]{0.325\textwidth}
\includegraphics[height=14cm]{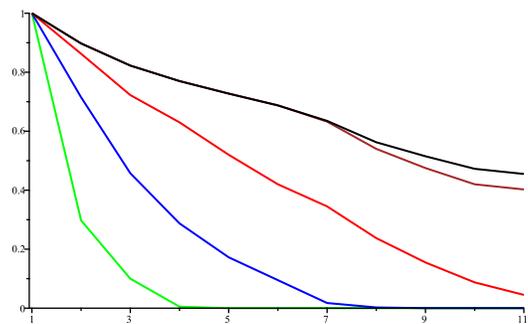}
\vspace{-9cm}
\hspace{3cm}
\caption{$n=50,000$, $\theta=1.4d_r/4$}
\end{subfigure}
\vspace{-6cm}
\caption{distribution of distances to closest garage (in hop count) for various coverage radii.}
\label{fig_MPRcover}
\end{figure}

%
\section{Concluding Remarks} \label{conclusions}

This work has provided a study of the connectivity properties and dimensioning of the moving networks of drones used as a flying backhaul in an urban environment with vehicular users.  

By making use of the hyperfractal model for both the vehicular networks and for the fixed eMBB infrastructures, we have derived analytic bounds on the requirements in terms of connectivity extension: we have proved that for $n$ mobile nodes (distributed according to a hyperfractal distribution of dimension $d_F$) and an average of $\rho$ gNBs (of dimension $d_r$) if $\rho=n^\theta$ with $\theta>d_r/4$, the average fraction of mobile nodes not covered by a gNB tends to zero like $O\left(n^{-\frac{(d_F-2)}{d_r}(2\theta-\frac{d_r}{2})}\right)$. 
Furthermore, for the same regime of  $\theta$, we have obtained that the number  of  drones   to connect the isolated mobile nodes is asymptotically equivalent to the number of isolated mobile nodes. This gives insights on the dimensioning of the flying backhauls and also limitations of the usage of UAVs (second regime of $\theta$).
This work has also initiated the discussions on the placement of the home locations of the drones, what we called the ``garage of drones". We have provided a fast procedure to select the relays that will be garages (and store drones) in order to minimize the number of garages and minimize the delay. 

Our simulations results concur with our bounds, and illustrate the step-change of regime based on $\theta$. Our simulations also show how this can be exploited to have as few garages as possible, while having drones servicing  efficiently the mobile vehicles with limited delay. Hence making the scenario attractive for further study and possible implementation.

Overall our results have provided a realistic stochastic communication model for studying the development of 5G in smart cities. The interest of such an innovative framework was demonstrated by the computation of exact bounds and the identification of particular behaviours (such as the characterisation of a threshold). It is also a step towards constructing a ``smart city modeling'' framework that can be exploited in other urban scenarios.

\section*{Appendix}
Proof that $f(y)=\sum_Hpq^H\exp\left(-p'(q'/2)^Hy\right)=\frac{p(p')^{-\delta}}{\log(2/q')}\Gamma(\delta)y^{-\delta}(1+o(1))$. We use the technique in~\cite{flajolet} by the Mellin transform $f^*(s)=\int_0^\infty f(y)y^{s-1}dy$, which is defined for some complex number $s$ such that $\Re(s)>0$. Indeed since the Mellin transform of $\exp(-p'(q'/2)^H y)$ is $(p'(q'/2)^H)^s\Gamma(s)$ where $\Gamma(s)$ is the Euler ``Gamma" function defined for $\Re(s)>0$, thus $f^*(s)=\sum_H pq^H(p'(q'/2)^H)^{-s}\Gamma(s)=\frac{p(p')^{-s}}{1-q(q'/2)^{-s}}\Gamma(s)$ as long as $\Re(s)<\delta$ (thus the sum $\sum_H pq^H(p'(q'/2)^H)^{-s}$ absolutely converges). 

The asymptotic of function $f(y)$ is obtained by the inverse Mellin transform as explained in~\cite{flajolet} as the residues of function of $f^*(s)y^{-s}$ on the main pole $s=\delta$ which lead to the claimed asymptotic expression. To the risk to be pedantic the reference~\cite{flajolet} also mentions that there are additional poles on the complex numbers $\delta+2ik\pi/\log(q'/2)$ for $k$ integer which lead to negligible fluctuations of the main asymptotic term.

\bibliographystyle{IEEEtran}
\bibliography{mybib}

\end{document}